\def\done{\hspace*{\fill} {$\square$}}
\newcommand{\etal}{{et al.}\xspace}
\newcommand{\ie}{{i.e.,}\xspace}
\newcommand{\figurenames}{{Figures}\xspace}
\newcommand{\RR}{{mRR}\xspace}
\newcommand{\OPIMC}{\textsc{OPIM-C}\xspace}
\newcommand{\OPIMF}{\textsc{TRIM}\xspace}
\newcommand{\OPIMFB}{\textsc{TRIM-B}\xspace}
\newcommand{\ASM}{\textsc{ASTI}\xspace}
\newcommand{\ASMB}{\textsc{ASTI-$b$}\xspace}
\newcommand{\ASMT}{\textsc{ASTI-2}\xspace}
\newcommand{\ASMF}{\textsc{ASTI-4}\xspace}
\newcommand{\ASME}{\textsc{ASTI-8}\xspace}
\newcommand{\AdaptSM}{\textsc{AdaptIM}\xspace}
\newcommand{\TEUC}{\textsc{ATEUC}\xspace}
\newcommand{\MINTSS}{\textsc{Greedy-Mintss}\xspace}
\newcommand{\AdaptIM}{\textsc{AdaptIM-1}\xspace}
\newcommand{\E}{\mathbb{E}}
\newcommand{\R}{\mathcal{R}}
\newcommand{\e}{{\ensuremath{\mathrm{e}}}}
\newcommand{\ratio}{\ensuremath{1-1/\e}}
\newcommand{\indeg}{\ensuremath{\mathrm{indeg}}}
\newcommand{\NP}{\ensuremath{\mathrm{NP}}}
\newcommand{\DTIME}{\ensuremath{\mathrm{DTIME}}}
\newcommand{\poly}{\operatorname{poly}}
\newcommand{\polylog}{\operatorname{polylog}}
\newcommand{\OPTT}{{\operatorname{OPT}}}
\newcommand{\EW}{\E_{\Phi\sim\Omega}}
\newcommand{\EWi}[1]{\E_{\Phi\sim\Omega_{#1}}}
\newcommand{\spara}[1]{\vspace{2mm}\noindent\textbf{#1.}}
\newcommand{\abs}[1]{\lvert#1\rvert}
\newcommand{\eat}[1]{}
\newcommand{\report}[1]{}
\renewcommand{\report}[1]{#1}
\begin{document}
\title{Efficient Approximation Algorithms for Adaptive Seed Minimization}
\titlenote{A short version of the paper appeared in 2019 International Conference on Management of Data (SIGMOD '19), June 30--July 5, 2019, Amsterdam, Netherlands. ACM, New York, NY, USA, 18 pages. \url{https://doi.org/10.1145/3299869.3319881}}

\author{Jing Tang}
\orcid{0000-0002-0785-707X}
\authornote{These authors have contributed equally to this work.}
\affiliation{%
	\department{Dept. of Ind. Syst. Engg. and Mgmt.}
	\institution{National University of Singapore}
}
\email{isejtang@nus.edu.sg}
\author{Keke Huang}
\authornotemark[2]
\affiliation{%
	\department{School of Comp. Sci. and Engg.}
	\institution{Nanyang Technological University}
}
\email{khuang005@ntu.edu.sg}
\author{Xiaokui Xiao}
\affiliation{%
	\department{School of Computing}
	\institution{National University of Singapore}
}
\email{xkxiao@nus.edu.sg}
\author{Laks V.S. Lakshmanan}
\affiliation{%
	\department{Department of Computer Science}
	\institution{University of British Columbia}
}
\email{laks@cs.ubc.ca}

\author{Xueyan Tang}
\affiliation{%
	\department{School of Computer Science and Engineering}
	\institution{Nanyang Technological University}
}
\email{asxytang@ntu.edu.sg}

\author{Aixin Sun}
\affiliation{%
	\department{School of Computer Science and Engineering}
	\institution{Nanyang Technological University}
}
\email{axsun@ntu.edu.sg}

\author{Andrew Lim}
\affiliation{%
	\department{Dept. of Ind. Syst. Engg. and Mgmt.}
	\institution{National University of Singapore}
}
\email{isealim@nus.edu.sg}

\begin{abstract}
    As a dual problem of influence maximization, the seed minimization problem asks for the minimum number of seed nodes to influence a required number $\eta$ of users in a given social network $G$. Existing algorithms for seed minimization mostly consider the {\it non-adaptive} setting, where all seed nodes are selected in one batch without observing how they may influence other users. 
	
	In this paper, we study seed minimization in the {\it adaptive} setting, where the seed nodes are selected in several batches, such that the choice of a batch may exploit information about the actual influence of the previous batches. We propose a novel algorithm, {\it \ASM}, which addresses the adaptive seed minimization problem in $O\Big(\frac{\eta \cdot (m+n)}{\varepsilon^2}\ln n \Big)$ expected time and offers an approximation guarantee of $\frac{(\ln \eta+1)^2}{(1 - (1-1/b)^b)  (\ratio)(1-\varepsilon)}$ in expectation, where $\eta$ is the targeted number of influenced nodes, $b$ is size of each seed node batch, and $\varepsilon \in (0, 1)$ is a user-specified parameter. To the best of our knowledge, \ASM is the first algorithm that provides such an approximation guarantee without incurring prohibitive computation overhead. With extensive experiments on a variety of datasets, we demonstrate the effectiveness and efficiency of \ASM over competing methods.
\end{abstract}

%
%

\begin{CCSXML}
	<ccs2012>
	<concept>
	<concept_id>10002951.10003227.10003351</concept_id>
	<concept_desc>Information systems~Data mining</concept_desc>
	<concept_significance>300</concept_significance>
	</concept>
	<concept>
	<concept_id>10002951.10003260.10003272.10003276</concept_id>
	<concept_desc>Information systems~Social advertising</concept_desc>
	<concept_significance>300</concept_significance>
	</concept>
	<concept>
	<concept_id>10002951.10003260.10003282.10003292</concept_id>
	<concept_desc>Information systems~Social networks</concept_desc>
	<concept_significance>300</concept_significance>
	</concept>
	<concept>
	<concept_id>10003752.10003753.10003757</concept_id>
	<concept_desc>Theory of computation~Probabilistic computation</concept_desc>
	<concept_significance>300</concept_significance>
	</concept>
	<concept>
	<concept_id>10003752.10003809.10003716.10011141.10010040</concept_id>
	<concept_desc>Theory of computation~Submodular optimization and polymatroids</concept_desc>
	<concept_significance>300</concept_significance>
	</concept>
	</ccs2012>
\end{CCSXML}

\ccsdesc[300]{Information systems~Data mining}
\ccsdesc[300]{Information systems~Social advertising}
\ccsdesc[300]{Information systems~Social networks}
\ccsdesc[300]{Theory of computation~Probabilistic computation}
\ccsdesc[300]{Theory of computation~Submodular optimization and polymatroids}

\keywords{Seed Minimization; Sampling; Approximation Algorithm}

\maketitle

\begin{sloppy}
\vspace{-0.02in}
\section{Introduction}\label{sec:introduction}
Social networks are becoming increasingly popular for people to discuss and share their thoughts and comments towards public topics. Based on the established relations among individuals, ideas and opinions can be spread over social networks via a word-of-mouth effect. To exploit this effect for advertising, advertisers often provide free samples of their products to selected social network users, in exchange for those users to promote those products and create a cascade of influence to other users. In such a setting, advertisers might want to know the minimum number of free samples required to be given away, so as to draw sufficient attention. Goyal~\etal~\cite{Goyal_MINTSS_2013} are the first to formulate this problem as a \textit{seed minimization} problem, which asks for the minimum number of {\it seed nodes} (i.e., users who receive free samples) needed to influence at least a required number $\eta$ of users, taking into account the randomness in the influence propagation process. 

\eat{If a topic attracts sufficiently high attention (\ie more than a certain \textit{threshold}), it is often labeled as a \textit{hot topic}~\cite{Gladwell_spread_2006}. Based on this observation, advertisers who need to promote their products might want to know the minimum number of free samples required to be given away in order to draw sufficient attention, so as to trigger a hot topic or to  create a sizable cascade of information about the product.}

Existing work on seed minimization mostly focuses on the \textit{non-adaptive} setting \cite{Goyal_MINTSS_2013,Zhang_SMPCG_2014,Han_SM_2017}, which requires that all seed nodes should be selected in one batch without observing the actual influence of any node, i.e., no randomness in the influence propagation process can be removed until all seed nodes are fixed. As a consequence of the non-adaptiveness, these solutions may return a seed set that fails to influence at least $\eta$ nodes in the actual propagation process, or may select an excessive number of seed nodes that generate an actual influence spread much larger than required.

To address the above issues, Vaswani and Lakshmanan~\cite{Vaswani_adapIM_2016} propose to consider seed minimization under the {\it adaptive} setting, where (i) the seed nodes are selected one by one, and (ii) before selecting the $i$-th seed node, the actual influence of the first $i - 1$ seed nodes can be observed, i.e., we may optimize the choice of the $i$-th seed node to influence those users that have not been influenced by the previous $i - 1$ seed nodes. Such an adaptive strategy ensures that (i) the seed set returned always achieves the required number of influenced users (since the actual influence of each seed node is known after selection), and (ii) the number of seed nodes would not be excessive (because we can stop selecting seed nodes as soon as the targeted influence is achieved). We note that similar adaptive approaches have also been adopted by other practical problems, such as influence maximization \cite{Yadav_DIM_2016}, sensor placement \cite{Asadpour_sensor_2008}, active learning \cite{Chen_activeLearning_2013}, and object detection \cite{Chen_detection_2014}.

To our knowledge, the only existing solution for adaptive seed minimization is by Vaswani and Lakshmanan~\cite{Vaswani_adapIM_2016}. As we discuss in Section~\ref{sec:existing-solu}, however, the solution in \cite{Vaswani_adapIM_2016} requires that the expected influence of any seed set should be estimated with extremely high accuracy, which results in prohibitive computation overhead. Furthermore, the solution does not provide any non-trivial approximation guarantee, due to an ineffective approach used to select each seed node under the adaptive setting. Therefore, it remains an open problem to devise efficient approximation algorithms for adaptive seed minimization.

In this paper, we address the above open problem with \ASM, a novel framework tailored for \textit{adaptive seed minimization}. The key idea of \ASM is to adaptively choose the seed node with the maximum expected {\it truncated influence spread} in each round of seed selection. Specifically, given a diffusion model $M$ that captures the uncertainty of influence propagation in $G$, we consider the set $\Omega$ of all possible \textit{realizations}, each of which represents a possible scenario of influence propagation among the nodes in $G$. For each possible realization $\phi\in\Omega$, the influence spread of a seed set $S$, denoted as $I_{\phi}(S)$ is the number of nodes influenced by $S$, while the truncated influence spread of $S$ is defined as $\Gamma_{\phi}(S) = \min\{\eta, I_{\phi}(S)\}$. We consider $\Gamma_{\phi}(S)$ instead of $I_{\phi}(S)$ because, intuitively, the extra influence spread beyond $\eta$ is useless for fulfilling the requirement on influence. (In fact, as we show in Section~\ref{sec:existing-solu}, the extra influence spread may even lead to incorrect choice of seed nodes, and hence, it has to be ignored.)
 
When developing algorithms under the \ASM framework, the key challenge that we face is the design of methods to accurately estimate a seed set $S$'s expected truncated influence spread over a given set of possible realizations. We show that existing methods \cite{Huang_SSA_2017,Nguyen_DSSA_2016,Tang_IMM_2015,Tang_OPIM_2018,Tang_reverse_2014,Borgs_RIS_2014} for estimating un-truncated influence spread cannot be applied in our truncated setting, since they are unable to take into account the effect of truncation by $\eta$. Motivated by this, we propose a novel sampling method based on the concept of \textit{multi-root reverse reachable (\RR)} sets, and prove that our method provides non-trivial guarantees in terms of the efficiency and accuracy of truncated influence estimation. Building upon this sampling method, we develop {\it \OPIMF}, an algorithm for maximizing truncated influence spread with a provable approximation guarantee of $(\ratio)(1-\varepsilon)$. We show that instantiating \ASM using \OPIMF leads to strong theoretical guarantees for adaptive seed minimization, and \OPIMF can be extended into a batched version \OPIMFB that selects a batch of $b$ nodes in each round, so as to accelerate seed selection. 

In summary, we make the following contributions: 
\begin{itemize}[topsep=2mm, partopsep=0pt, itemsep=1mm, leftmargin=18pt]
	\item {\bf \ASM, a general framework.} We analyze the characteristics of adaptive seed minimization, based on which we propose a general framework \ASM tailored for the problem.
	\item{\bf \RR-set, a novel sampling method.} \ASM requires accurate estimation of truncated influence spreads, for which the existing sampling methods are either inefficient or ineffective. To address this challenge, we propose a novel sampling method, \RR, which is able to estimate the truncated influence spread in a cost-effective manner.
	\item {\bf \OPIMF, an efficient algorithm for truncated influence maximization.} A key step of \ASM is to identify a set of nodes with the maximum {\it expected} truncated influence spread, for which we propose the \OPIMF algorithm based on \RR-sets. With a rigorous theoretical analysis, we show that \ASM instantiated by \OPIMF returns a $\frac{(\ln \eta+1)^2}{(\ratio)(1-\varepsilon)}$-approximate solution for adaptive seed minimization with expected time complexity of $O\big(\frac{\eta \cdot(m+n)}{\varepsilon^2}\ln n\big)$.
	\item {\bf \OPIMFB, the batched version of \OPIMF.} For further performance gain, we extend \OPIMF into a batched version \OPIMFB that selects seed nodes in a predefined batch size $b$ in each round. \ASM instantiated by \OPIMFB provides an approximation guarantee of $\frac{(\ln \eta+1)^2}{(1-(1-1/b)^b)(\ratio)(1-\varepsilon)}$ with the same time complexity as \OPIMF.
	\item {\bf An extensive set of experiments.} We experimentally evaluate \ASM instantiated by \OPIMF and \OPIMFB against the state-of-the-art non-adaptive algorithm \TEUC \cite{Han_SM_2017}, and show that (i) our solutions are much more effective in minimizing the number of seed nodes needed and ensuring that the required influence spread is achieved, and (ii) our solutions are able to efficiently handle social networks with millions of nodes and edges.
\end{itemize}
\section{Preliminaries}\label{sec:preliminaries}

This section formally defines the problem of adaptive seed minimization, and reviews the existing solutions. Table~\ref{tbl:notation} summarizes the notations that are frequently used. For ease of exposition, our discussions focus on the {\it independent cascade (IC)} model \cite{Kempe_maxInfluence_2003}, which is one of the most widely adopted propagation models in the literature. But we note that our algorithms can be easily extended to other propagation models, such as the linear threshold model \cite{Kempe_maxInfluence_2003} and the topic-aware models \cite{Barbieri_TAI_2012}.

\begin{table}[!t]
	\centering
	\captionsetup{aboveskip=6pt,belowskip=0pt}
	\caption{Frequently used notations.}
	\label{tbl:notation}
    \begin{small}
	\begin{tabular}{|c|m{2.38in}|}\hline
		{\bf Notation}	   & \multicolumn{1}{c|}{\bf Description} 										\\  \hline
		$G=(V, E)$							   & a graph $G$ with node set $V$ and edge set $E$                				\\  \hline
		$n,m$                          & the number of nodes and edges in $G$         \\  \hline
		$\eta$								   & the threshold for the targeted number of nodes to be activated				\\	\hline
		$I(S),\E[I(S)]$                        & the spread of a seed set $S$ and its expectation               			\\  \hline
		$\Gamma(S),\E[\Gamma(S)]$              & the truncated spread of $S$ and its expectation               				\\  \hline
		$G_i=(V_i,E_i)$						   & the $i$-th residual graph, where $G_1=G$               						\\  \hline
		$n_i,m_i$                          & the number of nodes and edges in $G_i$         \\  \hline
		$\eta_i$                          	   & the shortfall in activating $\eta$ nodes in the $i$-th round, \ie~$\eta_i=\eta-(n-n_i)$	\\  \hline
		$I(S\mid S_{i-1})$					   	   & the marginal spread of $S$ on top of $S_{i-1}$, \ie~the spread of $S$ in $G_i$		\\  \hline
		$\Gamma(S\mid S_{i-1})$			   		   & the marginal truncated spread of $S$ on top of $S_{i-1}$, \ie~$\Gamma(S\mid S_{i-1})=\min\{I(S\mid S_{i-1}),\eta_i\}$	\\  \hline
		$\tilde{\Gamma}(S\mid S_{i-1})$			& a binary estimator with value $\eta_i$ if $S\cap R\neq\emptyset$ and $0$ otherwise \\ \hline
		$R,\R$											& a random \RR-set and a set of \RR-sets \\ \hline
		$\Lambda_{\R}(v)$					   & the number of \RR-sets in $\R$ covered by $v$                     \\  \hline
		$v^\ast,v^\diamond,v^\circ$                        & the optimal node maximizing $\Lambda_{\R}(v)$, $\E[\tilde{\Gamma}(v\mid S_{i-1})]$, and $\E[\Gamma(v\mid S_{i-1})]$, respectively 									\\  \hline
		$\OPTT_i$ 								&  the optimum of $\E[\tilde{\Gamma}(v\mid S_{i-1})]$, i.e., $\OPTT_i=\max_v\E[\tilde{\Gamma}(v\mid S_{i-1})]$    \\ \hline
		$\phi, \Phi, \Omega$	         	   & a specific realization, a random realization, and the realization space	\\  \hline
		$\pi, \pi^\ast$				   & a random policy, and an optimal policy					\\  \hline
	\end{tabular}
	\end{small}
 	\vspace{2mm}
\end{table}

\subsection{Influence Propagation and Realization}\label{sec:IP-realization}

Let $G$ be a social network with a node set $V$ and a directed edge set $E$, where $|V|=n$ and $|E|=m$. For any edge $\langle u,v\rangle \in E$, we refer to $u$ as an \textit{incoming neighbor} of $v$, and $v$ as an \textit{outgoing neighbor} of $u$. Each edge $e=\langle u,v\rangle$ is associated with a \textit{propagation probability} $p(e)\in(0,1]$. We refer to such a social network as a \textit{probabilistic social network}. 

Given a node set $S \subseteq V$, the influence propagation initiated by $S$ under the \textit{independent cascade (IC)} model \cite{Kempe_maxInfluence_2003} is modeled as a discrete-time stochastic process as follows. At time slot $t_0$ (the subscript indicates the index of the time slot), all nodes in $S$ are activated while all other nodes are inactive. Suppose that node $u$ is first activated at slot $t_i$, then $u$ has \textit{one} chance to activate each outgoing neighbor $v$ with the probability $p(u,v)$ at slot $t_{i+1}$, after which $u$ remains active. This influence propagation process continues until no more inactive nodes can be activated. As to the \textit{linear threshold (LT)} model, it demands that for each node $v \in G$, the propagation probabilities of all edges ending at $v$ sum up to no more than $1$. With a given node set $S$, LT model works in a similar discrete-time stochastic procedure as follows. At time slot $t_0$, each node $v \in G$ is assigned with a threshold $\lambda_v$ sampled uniformly from $[0,1]$, and only nodes in $S$ are activated. At time slot $t_i$, we check all inactive node $u$ of its incoming edges from activated neighbors that if the sum of their propagation probabilities is no smaller than $\lambda_u$. If it is, then $u$ is activated; otherwise $u$ remains inactive. This influence propagation process terminates once there is no further node activated. Let $I(S)$ be the total number of active nodes in $G$ when the influence propagation terminates. We refer to $S$ as the \textit{seed set}, and $I(S)$ as the \textit{spread} of $S$.

Alternatively, the influence propagation process can also be described by the \textit{live edge} procedure \cite{Kempe_maxInfluence_2003}. Specifically, for each edge $e\in E$, we independently flip a coin of head probability $p(e)$ to decide whether the edge $e$ is \textit{live} or \textit{blocked} to generate a sample of influence propagation. All the blocked edges are removed and the remaining graph is referred to as a \textit{realization} of the probabilistic social network $G$, denoted as $\phi$. Note that there are $2^{m}$ distinct possible realizations. Let $\Omega$ be the set of all possible realizations (\ie the sample space) such that $\abs{\Omega}=2^{m}$, and $\Phi\sim\Omega$ denote that $\Phi$ is a realization randomly sampled from $\Omega$. Given a realization $\phi \in \Omega$, the spread of any seed set $S \subseteq V$ under $\phi$ is the total number of nodes that are reachable from $S$, denoted as $I_\phi(S)$. Thus, for any seed set $S$, its \textit{expected spread} $\E[I(S)]$ is defined as
\begin{equation}
\E[I(S)]:=\EW[I_\Phi(S)]=\sum_{\phi \in \Omega}I_\phi(S) \cdot p(\phi),
\end{equation}
where $p(\phi)$ is the probability for realization $\phi$ to occur. In other words, the expected spread of $S$ is the (weighted) average spread over all the realizations in $\Omega$.

\begin{figure*}[!t]
	\centering
	\captionsetup[subfigure]{captionskip=3pt}
	\subfloat[A social graph $G$]{\includegraphics[width=0.19\linewidth]{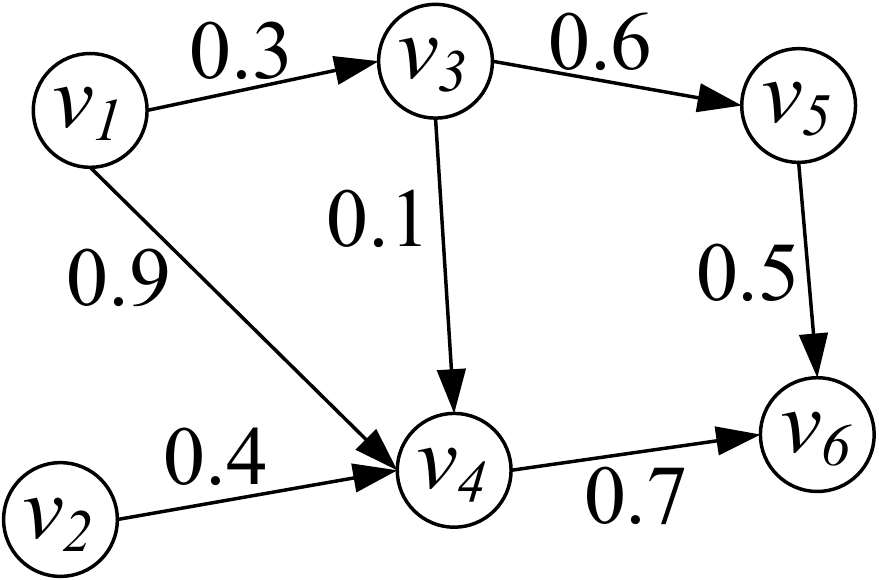}\label{subfig:a}}\hfill
	\subfloat[A possible realization $\phi$]{\includegraphics[width=0.19\linewidth]{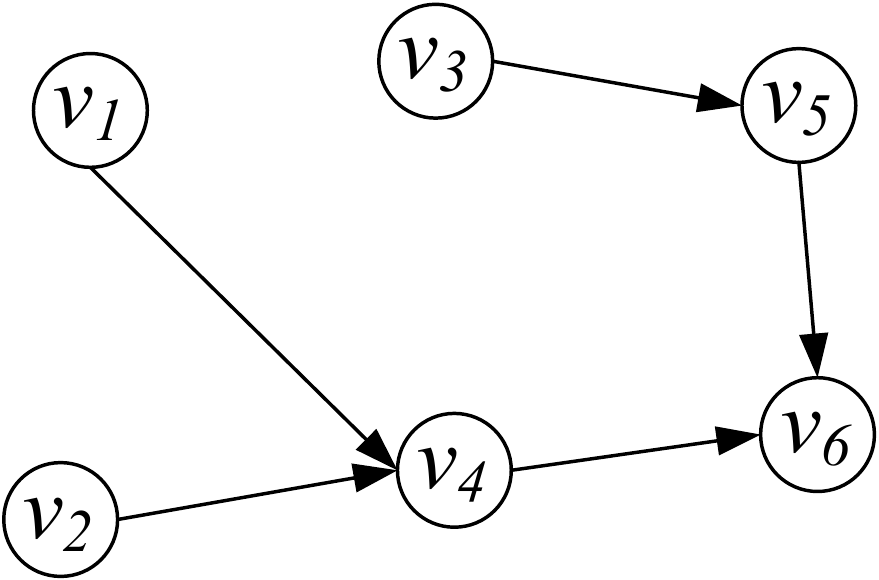}\label{subfig:b}}\hfill
	\subfloat[$v_1$ as the first seed]{\includegraphics[width=0.19\linewidth]{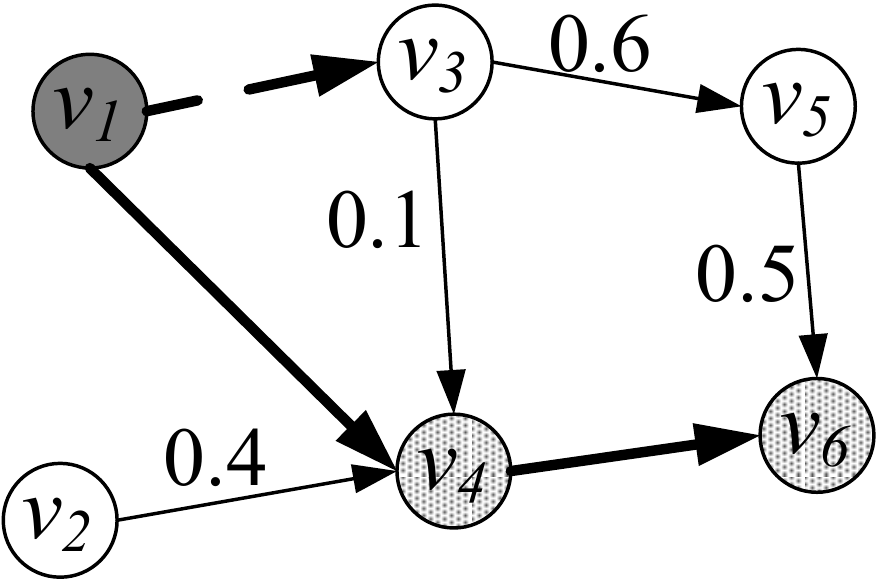}\label{subfig:c}}\hfill
	\subfloat[$v_3$ as the second seed]{\includegraphics[width=0.19\linewidth]{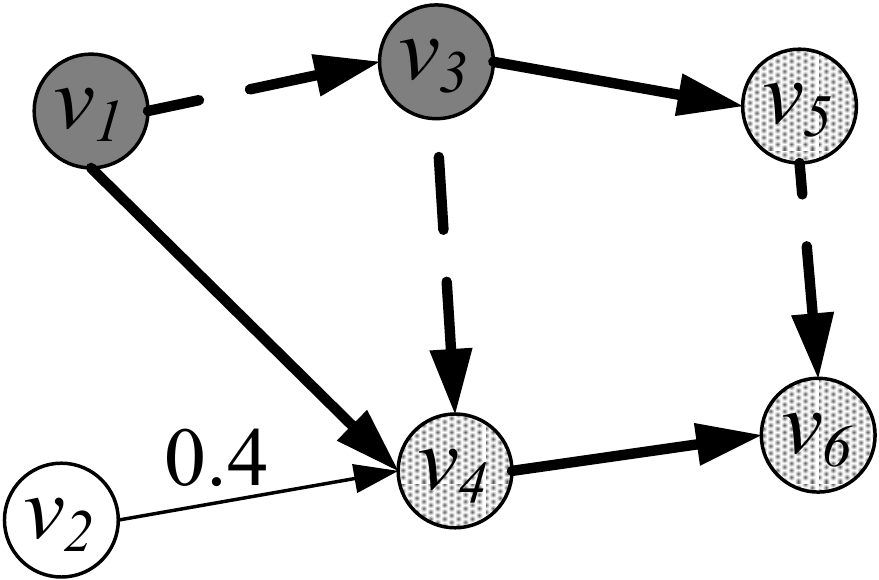}\label{subfig:d}}
	\caption{An adaptive seed minimization process.}\label{fig:ASM-process}
\end{figure*}

\subsection{Adaptive Seed Minimization}\label{sec:asm-definition}

Given a probabilistic social network $G=(V,E)$ and a threshold $\eta \in [1,n]$, the seed minimization problem aims to select a minimum number of seed nodes to influence at least $\eta$ nodes. In the conventional ``non-adaptive'' setting, seed minimization requires selecting a node set $S$ such that $\E[I(S)] \ge \eta$, without any knowledge of realization that would occur in the actual influence propagation process.
As a consequence, the selected $S$ may influence fewer than $\eta$ nodes for some realizations or much more than $\eta$ nodes for some other realizations, both of which are undesirable scenarios. 

Meanwhile, the \textit{adaptive} strategy (\ie~a recursive \textit{select-observe-select} procedure) has been shown to be more effective than the \textit{non-adaptive} (\ie just select based on model) strategy in many real-world applications \cite{Asadpour_sensor_2008,Chen_activeLearning_2013,Chen_detection_2014}. Specifically, an adaptive strategy first \textit{selects} a node $u$ from graph $G$, and then \textit{observes} the set of nodes activated by choosing node $u$ as a seed node. Based on this observation, the strategy would choose the next node as one that could influence as many {\it currently inactive} nodes as possible. This procedure is carried out in an recursive manner, until at least $\eta$ active nodes are observed.  

\figurename~\ref{fig:ASM-process} illustrates the adaptive strategy. \figurenames~\ref{subfig:a} and \ref{subfig:b} show a social graph $G$ and one possible realization $\phi$ of $G$, respectively. Let $\eta=4$ and $\phi$ be the actual realization of influence propagation (which is unknown  apriori). \figurename~\ref{subfig:c} indicates that we first select node $v_1$ (in dark gray) as a seed node. Note that node $v_1$ influences nodes $v_4$ and $v_6$ (in light gray), with each bold (resp.\ dashed) arrow denoting a successful (resp.\ failed) step of influence. In addition, the thin arrows in \figurenames~\ref{subfig:c}--\ref{subfig:d} correspond to influence attempts which are not yet revealed. Since the number of nodes influenced by $v_1$ is less than $\eta$, we continue to select the second seed node. \figurename~\ref{subfig:d} shows that we select $v_3$, which results in a total of $5$ active nodes, reaching the threshold $\eta$. Then, the adaptive seed selection process terminates.

In this paper, we aim to study seed selection strategies (referred to as {\it policies}) for \textit{adaptive seed minimization (ASM)}, which is formally defined as follows:
\begin{definition}[Adaptive Seed Minimization]\label{def:ASM}
	Given a probabilistic social graph $G=(V,E)$ and a threshold $\eta \in [1,n]$, the adaptive seed minimization problem aims to identify a policy $\pi$ that minimizes the expected number of seed nodes required to achieve an influence spread of at least $\eta$ on possible realizations $\phi \in \Omega$, \ie~\[\min_\pi \E[\abs{S(\pi,\phi)}]~\text{subject to}~I_\phi(S(\pi,\phi))\geq\eta~\text{for all}~\phi,\] where $S(\pi,\phi)$ is the seed set selected by $\pi$ under realization $\phi$ and $\E[\abs{S(\pi,\phi)}]=\sum_{\phi \in \Omega} \abs{S(\pi,\phi)}\cdot p(\phi)$.
\end{definition}

Note that when the propagation probability of every edge in $G$ is $1$, ASM reduces to the deterministic version of seed minimization, which is shown to be NP-hard \cite{Goyal_MINTSS_2013}. Therefore, finding an optimal policy for ASM is also NP-hard.

\eat{\spara{Remark} Note that in the general case that $p(e)\in [0,1]$, under some realization, the optimal policy $\pi^\ast$ might not derive the optimal solution. This is because the realization is unknown to $\pi^\ast$ in advance, which means the cover set of each node remains indeterminate. Therefore $\pi^\ast$ selects the node that maximizes the total spread in expectation by taking all possible outcomes into consideration. For example, consider a simple graph with two nodes $u$ and $v$, where $p(\langle u, v \rangle)=0.5$ and $p(\langle v, u \rangle)=0.4$. Suppose $\eta=2$. It is easy to verify that the optimal policy $\pi^\ast$ will select $u$ first. If $u$ activates $v$, the algorithm stops. Otherwise, it will continue selecting $v$. However, under the realization that $\langle u, v \rangle$ is blocked and $\langle v, u \rangle$ is alive, which emerges with probability $0.2$, $\pi^\ast$ selects $\{u,v\}$ but the optimal solution should be $\{v\}$.}

\subsection{Truncated Influence Spread}
Note that, in ASM, the influence spread in excess of the threshold $\eta$ has no value. Accordingly, we introduce the notion of {\it truncated influence spread} as follows.

\begin{definition}[Truncated Influence Spread] 
Given a seed set $S$ and a threshold $\eta$, the \textit{truncated influence spread} $\Gamma_\phi(S)$ of $S$ under a realization $\phi$ is the smaller one between $I_\phi(S)$ and $\eta$, \ie
\begin{equation}
\Gamma_\phi(S):=\min\{I_\phi(S),\eta\}.
\end{equation} 
\end{definition}

Recall that ASM requires considering the influence spreads of nodes when the actual influence of some other nodes has been observed. Therefore, we also introduce the notion of {\it marginal} truncated influence spread as follows. Let $V_1 = V$ and $G_1 = G$. Let $V_i$ be the subset of nodes that remain inactive after round $(i-1)$, $G_i$ be the subgraph of $G$ induced by $V_i$. We refer to $G_i$ as the $i$-th {\it residual graph}. For example, in \figurename~\ref{fig:ASM-process}, after round $1$, only nodes $v_2, v_3,v_5$ remain inactive, so $V_2 = \{v_2, v_3,v_5\}$ and $G_2 = (V_2, E_2)$ denotes the induced subgraph containing the thin edge $\langle v_3, v_5 \rangle$. 

Let $S_{i}$ be the set of nodes selected as seeds by a policy in the first $i$ rounds. Similar to the definition of $\Omega$, we denote $\Omega_i$ as the set of all possible realizations in the $i$-th round. Then, for a node set $S\subseteq V_i$, we define the marginal spread $I_\phi(S\mid S_{i-1})$ as the additional spread that $S$ provides on top of $S_{i-1}$ under realization $\phi\in\Omega_i$, and define truncated marginal spread $\Gamma_\phi(S\mid S_{i-1})$ accordingly, \ie 
\begin{align}
&I_\phi(S\mid S_{i-1}):=I_\phi(S\cup S_{i-1})-I_\phi(S_{i-1}),\\
\text{and}\quad&\Gamma_\phi(S\mid S_{i-1}):=\Gamma_\phi(S\cup S_{i-1})-\Gamma_\phi(S_{i-1}).
\end{align}
Note that $I_\phi(S\mid S_{i-1})$ is exactly the influence spread of $S$ in the residual graph $G_i$ under realization $\phi$.

Let $n_i=\abs{V_i}$ be the number of nodes in $G_i$, \ie~ $I_\phi(S_{i-1})=n-n_i$  nodes have been activated by the end of round $i-1$, based on the partial realization revealed so far. Define $\eta_i=\eta-(n-n_i)$. This is the amount by which the policy falls short of the target $\eta$ in the beginning of round $i$. Before reaching the threshold $\eta$, \ie~$\Gamma_\phi(S_{i-1})=I_\phi(S_{i-1})<\eta$, we can rewrite $\Gamma_\phi(S\mid S_{i-1})$ as
\begin{align}
	\Gamma_\phi(S\mid S_{i-1})
	&=\min\{I_\phi(S\cup S_{i-1}),\eta\}-I_\phi(S_{i-1})\nonumber\\
	&=\min\{I_\phi(S\mid S_{i-1}),\eta_i\}.
\end{align}
Then, $\Gamma_\phi(S\mid S_{i-1})$ can be easily computed in the residual graph $G_i$. For brevity, we define $\Gamma_\phi(v\mid S_{i-1}):=\Gamma_\phi(\{v\}\mid S_{i-1})$ for a singleton node set $\{v\}$.

Finally, we define the \textit{expected marginal truncated spread} $\Delta (v\mid S_{i-1})$ as
\begin{equation}\label{eqn:marginal-spread}
\Delta (v\mid S_{i-1}):=\EWi{i}[\Gamma_\Phi(v\mid S_{i-1})].
\end{equation}
In other words, the expected marginal truncated spread of a node $v$ is defined based on the ``lift'' in the expected number of active nodes that $v$ brings on top of previously selected seeds, over all realizations consistent with what has been observed in previous rounds.

\begin{figure*}[!t]
	\centering
	\captionsetup[subfigure]{captionskip=3pt}
	\subfloat[A social graph $G$]{\includegraphics[width=0.18\linewidth]{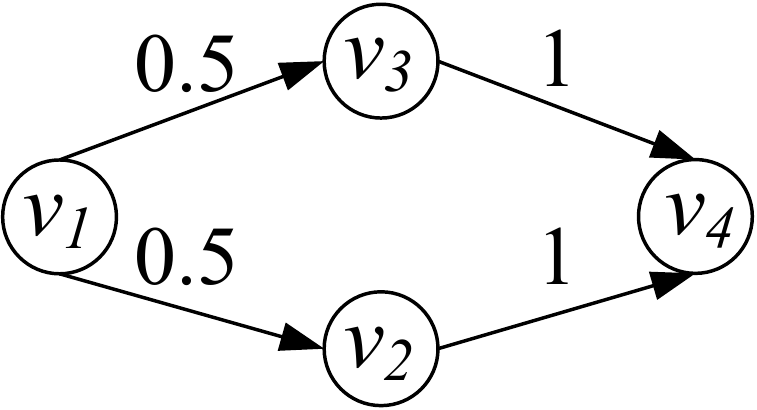}\label{subfig:ta}}\hfill
	\subfloat[Realization $\phi_1$]{\includegraphics[width=0.18\linewidth]{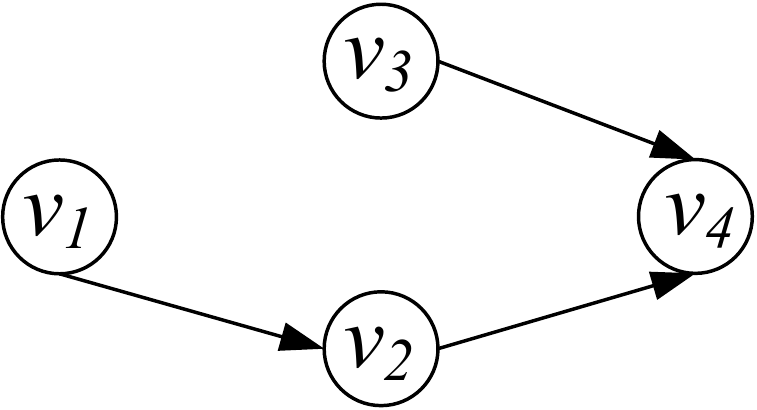}\label{subfig:tb}}\hfill
	\subfloat[Realization $\phi_2$]{\includegraphics[width=0.18\linewidth]{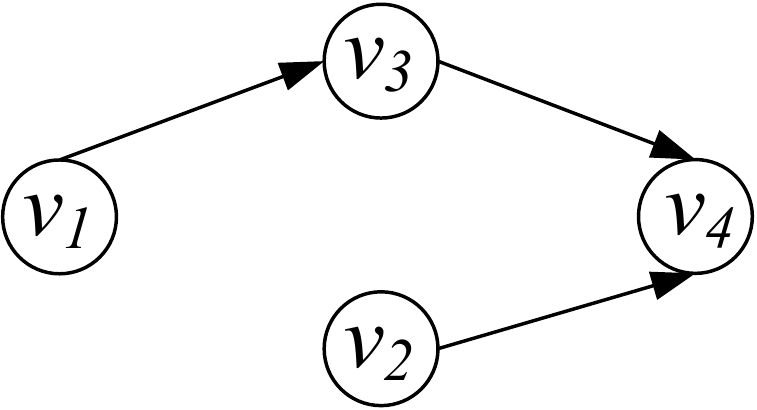}\label{subfig:tc}}\hfill
	\subfloat[Realization $\phi_3$]{\includegraphics[width=0.18\linewidth]{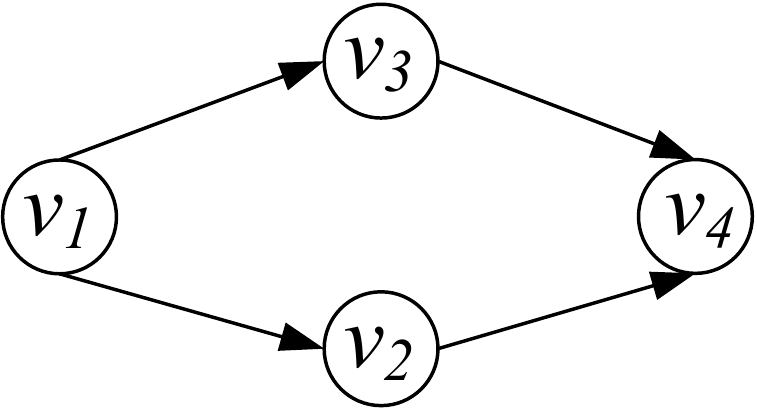}\label{subfig:td}}\hfill
	\subfloat[Realization $\phi_4$]{\includegraphics[width=0.18\linewidth]{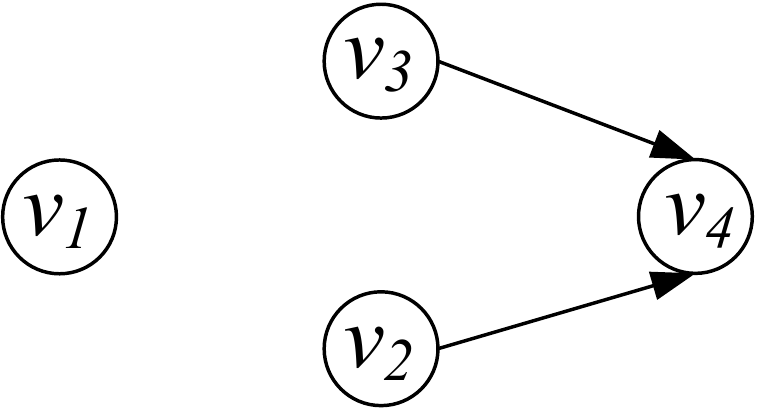}\label{subfig:te}}
	\caption{A social graph and all of its possible realizations.}\label{fig:possible-relization}
\end{figure*}

\subsection{Existing Solutions}\label{sec:existing-solu}

Golovin and Krause~\cite{Golovin_adaptive_2017} study the \textit{adaptive stochastic minimum cost coverage} problem, which can be regarded as a variant of ASM in the case where there exists an {\it oracle} that accurately reports the expected marginal truncated spread for any given seed set. They propose to adopt a greed policy as follows. First, select the node $s_1$ with the largest expected truncated spread, \ie~$\Delta(s_1\mid S_0)\geq\Delta(v\mid S_0)$ for all $v\in V$. Then, observe the actual nodes that are activated by $s_1$ during the stochastic process, and remove them from $G$ to induce the residual graph $G_2$. After that, identify the node $s_2$ with the maximum expected marginal truncated spread $\Delta(s_2\mid S_1)$ in the residual graph $G_2$. This process continues, such that each round selects the node with the largest expected marginal truncated spread, until we observe that no less than $\eta$ nodes have been influenced.

Golovin and Krause~\cite{Golovin_adaptive_2017} show that the above greedy policy returns a $(\ln \eta +1)^2$-approximate solution to the optimum.\footnote{Golovin and Krause claim that the approximation guarantee is $(\ln \eta +1)$ in an earlier version of their work \cite{Golovin_adaptive_2011}, but point out that the proof has gaps in a revised version \cite{Golovin_adaptive_2017}. Whether the logarithmic bound holds is an interesting open problem.} This approximation guarantee, however, does not lead to a practical algorithm for the ASM problem, because (i) it requires the help from an oracle to exactly identify the node with the maximum expected marginal truncated spread in each round, but (ii) computing the exact expected spread of any node set is \#P-hard~\cite{Chen_MIA_2010}.

Motivated by this observation, Vaswani and Lakshmanan~\cite{Vaswani_adapIM_2016} attempt to extend Golovin and Krause's method by replacing the oracle with an spread estimator with bounded errors. In particular, they assume that for any node set $S$, the estimation ${\E}[\tilde{I}(v\mid S)]$ of the marginal gain $\E[I(v\mid S)]:=\E[I(S\cup\{v\})]-\E[I(S)]$ should satisfy
\begin{equation}\label{eqn:relatve-error}
\alpha^\bot \E[I(v\mid S)]\leq {\E}[\tilde{I}(v\mid S)]\leq \alpha^\top\E[I(v\mid S)],
\end{equation}
where ${\alpha^\top}/{\alpha^\bot}$ denotes the multiplicative error in calculating the marginal gains. Unfortunately, this requirement on the spread estimation is so stringent that no existing methods for influence estimation could fulfill the requirement without incurring prohibitive estimation overhead. To explain, suppose that the expected marginal spread $\E[I(v\mid S)]$ of a node $v$ on top of $S$ is small. In that case,  Equation~\eqref{eqn:relatve-error} would only allow a trivial amount of estimation error, which is rather difficult to achieve by existing methods for spread estimation. 

In addition, the algorithm in \cite{Vaswani_adapIM_2016} attempts to select the node with the largest marginal spread in each round, instead of the node with the maximum marginal {\it truncated} spread. As a consequence, even when there exists an efficient estimator that provides highly accurate spread estimation, the algorithm in \cite{Vaswani_adapIM_2016} would still fail to achieve the type of approximation guarantee in \cite{Golovin_adaptive_2017}, which the theoretical analysis in \cite{Golovin_adaptive_2017} is based on the notion of truncated spreads. We illustrate this issue with an example.

\begin{example}
Consider \figurename~\ref{fig:possible-relization}(a), which shows a social graph $G$ with four nodes and four directed edges. The number on each edge indicates the propagation probability of the edge. $G$ has four possible realizations $\phi_1$, $\phi_2$, $\phi_3$, and $\phi_4$ in total, as shown in \figurenames~\ref{subfig:tb}--\ref{subfig:te}. Each realization has an equal probability of $0.25$ to happen. Assume that $\eta=2$. Then, the expected spread of node $v_1$ is $\E[I(v_1)]=0.25\times(3+3+4+1)=2.75$, which is larger than that of the other three nodes. Thus, when the \textit{vanilla} expected spread is adopted as the measure, node $v_1$ will be selected as the first seed node. On realizations $\phi_1$, $\phi_2$, and $\phi_3$, $v_1$ is qualified to influence at least $\eta = 2$ users. However, there is a probability of $0.25$ that $\phi_4$ happens, in which case $v_1$ can only influence itself, and hence, one additional seed node is required. Overall, $2\times0.25+1\times(1-0.25)=1.25$ seed nodes are selected in expectation. 

Now observe that the expected {\it truncated} spread of nodes $v_1$, $v_2$, $v_3$, and $v_4$ are $1.75$, $2$, $2$, and $1$, respectively. Therefore, when the expected truncated spread is adopted as the measure, either $v_2$ or $v_3$ is selected as the first seed node, which can influence $2$ users under all four realizations. This demonstrates that, for ASM, choosing nodes based on expected truncated spreads is more effective than that based on vanilla expected spreads. \done
\end{example}


In recent work~\cite{Han_AIM_2018}, Han \etal study the problem of {\it adaptive influence maximization}, which also considers the adaptive setting, but aims to identify a predefined number of $k$ seed nodes that could influence the maximum number of users in $G$ in expectation. At the first glance, it may seem that we can modify the adaptive influence maximization algorithms to solve the adaptive seed minimization problem, in the same way that existing work \cite{Goyal_MINTSS_2013} transforms non-adaptive influence maximizing algorithms to address non-adaptive seed minimization. This approach, however, does not work because the algorithm in \cite{Han_AIM_2018} is designed based on {\it vanilla} expected marginal spreads. Instead, ASM requires considering {\it truncated} expected marginal spreads, as we previously discussed. As a consequence, the algorithm in \cite{Han_AIM_2018} cannot be adopted in our setting.
\section{Our Solution}\label{sec:solution}

\subsection{Algorithmic Framework}\label{subsec:framework}

\begin{algorithm}[!t]
\begin{small}
	\caption{{\ASM}$(G,\eta)$}
	\label{alg:framework}
	\KwIn{an input graph $G$ and a theshlod $\eta$.}
	\KwOut{a seed set $S$ such that $\Gamma(S)=\eta$.}
	initialize $S\leftarrow\emptyset$, $\Gamma(S)\leftarrow0$, $i\leftarrow 1$\;
	\Repeat {$\Gamma(S)\ge \eta$\label{algline:ASM-stop}}
	{			
		select $s_i$ from $V_i$ such that $\Delta(s_i\mid S_{i-1})\geq \alpha\Delta(v\mid S_{i-1})$ for all $v\in V_i$\;\label{algline:ASM-select}
		observe the influence of $s_i$ in $G_i$\;\label{algline:ASM-observe}
		insert $s_i$ into $S$ and increase $\Gamma(S)$ acoordingly\;\label{algline:ASM-update-S}
		remove all nodes in $G_i$ that are influenced by $s_i$, and denote the resulting graph as $G_{i+1}$\;\label{algline:ASM-update-G}
		$i\leftarrow i+1$\;\label{algline:ASM-update-i}
	}
	\Return $S$\;
\end{small}
\end{algorithm}

We propose a general framework, referred to as \ASM, to address the ASM problem. Algorithm~\ref{alg:framework} shows the details. Given a probabilistic social graph $G$ and a threshold $\eta$, \ASM aims to return a seed set $S$ such that $\Gamma(S)\ge\eta$, where $\Gamma(S)$ is the truncated influence spread of $S$ (\ie~the smaller one of the threshold $\eta$ and the number of active nodes influenced by $S$). In a nutshell, \ASM iteratively (i) selects the node to maximize the \textit{expected marginal truncated spread} (Line~\ref{algline:ASM-select}), (ii) observes the newly influenced nodes (Line~\ref{algline:ASM-observe}), and then (iii) updates the corresponding information (Lines~\ref{algline:ASM-update-S}--\ref{algline:ASM-update-i}). The process stops when at least $\eta$ nodes are activated (Line~\ref{algline:ASM-stop}).

The key step of \ASM is \textit{truncated influence maximization} that targets at identifying a node to maximize the expected marginal truncated spread (Line~\ref{algline:ASM-select}). If an $\alpha$-approximate solution for truncated influence maximization is obtained in each round (Line~\ref{algline:ASM-select}), \ASM provides a non-trivial approximation guarantee, as shown in the following theorem.
\begin{theorem}\label{thm:approx}
	Suppose $\pi$ is an $\alpha$-approximate greedy policy, for some $\alpha \in (0,1]$, i.e., for any $G_i$ and $v\in V_i$, it selects a node $s_i$ satisfying 
	\begin{equation}
	\Delta (s_i\mid S_{i-1})\geq \alpha\Delta (v\mid S_{i-1}).
	\end{equation}
	Then $\pi$ achieves an approximation ratio of $\frac{(\ln \eta+1)^2}{\alpha}$ to the optimal adaptive seed minimization policy.
\end{theorem}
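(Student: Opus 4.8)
The plan is to cast adaptive seed minimization as an instance of the \emph{adaptive stochastic minimum-cost cover} problem of Golovin and Krause~\cite{Golovin_adaptive_2017}, taking the truncated influence spread as the governing objective, and then to re-run their analysis while carrying the approximation factor $\alpha$ through every inequality. Concretely, I would fix the objective $f(S,\phi):=\Gamma_\phi(S)=\min\{I_\phi(S),\eta\}$, regard each node selection as an action that reveals the part of the realization $\phi$ determining which nodes $s_i$ activates, and regard the stopping condition $\Gamma(S)\ge\eta$ as the coverage requirement $f(S(\pi,\phi),\phi)=\eta$ that must hold for \emph{every} $\phi\in\Omega$. Under this encoding the quantity $\Delta(v\mid S_{i-1})=\EWi{i}[\Gamma_\Phi(v\mid S_{i-1})]$ from Eq.~\eqref{eqn:marginal-spread} is exactly the conditional expected marginal of $f$ given the observations of the first $i-1$ rounds, so Line~\ref{algline:ASM-select} is precisely the $\alpha$-approximate greedy rule in their sense.

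The first substantive step is to verify the two structural hypotheses of their framework for $f=\Gamma$. For adaptive monotonicity it suffices that $\Delta(v\mid S_{i-1})\ge 0$, which is immediate since activating an extra seed can only enlarge the reachable set and hence cannot decrease $\min\{I_\phi(\cdot),\eta\}$. For adaptive submodularity I would argue that the conditional expected marginal gain of any node is non-increasing as observations accumulate, i.e.\ $\Delta(v\mid S_{i-1})\ge\Delta(v\mid S_{j-1})$ whenever round $j\ge i$ is reached along a consistent sequence of observations. Here I would first invoke the known fact that the untruncated influence $I_\phi(S)$ is adaptive monotone and adaptive submodular under the independent cascade model (established in \cite{Golovin_adaptive_2017} via the live-edge representation already introduced in Section~\ref{sec:IP-realization}), and then show that composing with the concave truncation $x\mapsto\min\{x,\eta\}$ preserves both properties. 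Working with the single global cap $\eta$ rather than the per-round budgets $\eta_i$ is what makes this clean: it removes the need to reconcile a changing threshold across rounds, and preservation of adaptive submodularity under truncation by a constant follows from the pointwise diminishing-returns property of $\min\{\cdot,\eta\}$ averaged over the realizations consistent with the current observations.

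With these properties in hand, the heart of the argument is the cost recursion, and this is the only place the factor $\alpha$ enters. Let $\pi^\ast$ be an optimal policy with expected cost $c^\ast=\E[\abs{S(\pi^\ast,\Phi)}]$. Adaptive submodularity yields the standard comparison inequality: conditioned on any history, the total expected gain that $\pi^\ast$ can still accrue is at most $c^\ast$ times the largest available one-step marginal gain. Combining this with the $\alpha$-approximate choice in Line~\ref{algline:ASM-select} shows that each greedy round closes at least an $\alpha/c^\ast$ fraction of the expected gap $\eta-\EWi{i}[\Gamma_\Phi(S_{i-1})]$ to the target, so the expected residual gap decays geometrically at rate $1-\alpha/c^\ast$ and after about $(c^\ast/\alpha)\ln\eta$ rounds the expected shortfall falls below the minimal integer gap of $1$. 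This contributes one logarithmic factor together with the $1/\alpha$. The second logarithmic factor arises from upgrading this \emph{expected}-coverage guarantee to the \emph{worst-case} coverage demanded by Definition~\ref{def:ASM}: the process must continue until every realization is fully covered, and bounding the extra expected cost of this clean-up phase requires the more delicate truncation argument of the revised analysis in \cite{Golovin_adaptive_2017}, producing the square in $(\ln\eta+1)^2$.

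The main obstacle I anticipate is precisely this worst-case-coverage step, whose earlier single-logarithm version was found to be flawed (see the footnote following \cite{Golovin_adaptive_2017}); I would rely on their corrected $(\ln\eta+1)^2$ bound rather than attempt to sharpen it. The key point to check carefully is that $\alpha$ enters \emph{only} through the per-round progress estimate and not through the expected-to-worst-case conversion, so that it factors out as a single multiplicative $1/\alpha$ and the final ratio is $\frac{(\ln\eta+1)^2}{\alpha}$ rather than, say, $\frac{(\ln\eta+1)^2}{\alpha^2}$. Verifying that the clean-up phase inherits the same per-step $\alpha$-progress bound, uniformly over the realizations still to be covered, is the technical crux of the extension.
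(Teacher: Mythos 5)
Your overall strategy coincides with the paper's: both cast the problem as an instance of Golovin--Krause's adaptive stochastic minimum-cost cover with objective $\Gamma_\phi(S)=\min\{I_\phi(S),\eta\}$, verify the structural hypotheses of that framework, and then invoke their $(\ln\eta+1)^2/\alpha$ bound for $\alpha$-approximate greedy policies as a black box (your attempted re-derivation of the cost recursion is moot, since you ultimately defer to their corrected analysis exactly as the paper does by citing the result as a lemma).

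There is, however, a genuine gap in the one step where the real work lies: the verification of adaptive submodularity. You argue that $I_\phi$ is adaptive submodular and that ``composing with the concave truncation $x\mapsto\min\{x,\eta\}$ preserves'' adaptive submodularity. No such preservation principle holds at the level you invoke it: adaptive submodularity is a statement about \emph{conditional expected} marginal gains, and applying a concave cap inside the expectation, per realization, does not commute with it. (Golovin and Krause themselves show that pointwise submodularity of every $f_\phi$ does not imply adaptive submodularity, which is why a generic ``concavity averages nicely'' argument cannot suffice here, where the node states are correlated through shared live edges.) What is actually needed, and what the paper proves, is the \emph{realization-level} domination $\Gamma_\phi(v\mid S_{j-1})=\min\{\lvert V_\phi(v\mid S_{j-1})\rvert,\eta_j\}\ge\min\{\lvert V_\phi(v\mid S_{i-1})\rvert,\eta_i\}=\Gamma_\phi(v\mid S_{i-1})$ for $j\le i$, which follows from $G_i\subseteq G_j$ together with $\eta_i\le\eta_j$ -- note that the residual thresholds $\eta_i$ reappear unavoidably because the marginal of the globally capped function equals $\min\{I_\phi(S\mid S_{i-1}),\eta_i\}$, so your hope of sidestepping the changing threshold does not materialize. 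One must then aggregate this pointwise inequality over the realizations of $\Omega_j$ consistent with each residual realization $\phi_i$ (using $\sum_{\phi\in\Omega_j(\phi_i)}p(\phi)=p(\phi_i)$) to recover the expected-value statement. Separately, the Golovin--Krause min-cost-cover theorem requires the \emph{strong} forms of monotonicity and submodularity, including the pointwise condition $\Delta(v\mid S_{j-1};S_{i-1})\ge\Delta(v\mid S_{i-1})$, plus three further hypotheses -- a constant total value $\Gamma_\phi(V)=\eta$ for all $\phi$, integer-valuedness, and self-certification -- all of which are easy here but which your proposal omits entirely; without them the cited approximation theorem does not apply.
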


The proof\footnote{The formal proofs of all theoretical results are given in Appendix~\ref{appendix:proof}.} of Theorem~\ref{thm:approx} is based on adaptive submodular optimization~\cite{Golovin_adaptive_2017}. Theorem~\ref{thm:approx} requires that the policy should be an $\alpha$-approximate greedy one with respect to the expected marginal truncated spread $\Delta(v\mid S_{i-1})$. The challenge for designing such an $\alpha$-approximate greedy policy lies in how to develop a proper sampling method for estimating the truncated influence spread.

\eat{\spara{Comparison with Non-adaptive Algorithms} Goyal \etal \cite{Goyal_MINTSS_2013} propose a similar greedy algorithm \MINTSS for non-adaptive seed minimization. The differences in algorithm design are three-fold: (i) we only consider the realizations that are consistent with the observations made in previous rounds while they always use all realizations for each round, (ii) \ASM looks at marginal truncated spread instead of marginal spread by \MINTSS, and (iii) \ASM stops when at least $\eta$ nodes are activated whereas \MINTSS only requires to influence $(\eta-\xi)$ users that falls short of the required threshold by a predefined shortfall parameter $\xi > 0$. In addition, we note that the approximation guarantees of both algorithms are positively correlated to $\ln(\eta/\eta_l)$, where $l$ is the final number of nodes selected and $\eta_l$ is the shortfall in achieving $\eta$ in the $l$-th round. To provide theoretical guarantees, $\eta_l$ must be lower bounded by some constant factors. In \ASM, $\eta_l=\eta-I(S_{l-1})$ is an integer such that $\eta_l\geq 1$. In \MINTSS, however, $\eta_l=\eta-\E[I(S_{l-1})]$ is a real value which can be arbitrarily close to $0$. To ensure a lower bound on $\eta_l$, \MINTSS immediately stops once more than $(\eta-\xi)$ nodes are activated such that $\eta_l\geq \xi$. As a consequence, \ASM manages to achieve a constant approximation, whereas \MINTSS can only yield a bicriteria approximation (\ie~the solution achieves an expected spread of $(\eta-\xi)$, which is smaller than the required threshold $\eta$).}

\subsection{Truncated Influence Maximization}\label{sec:naive-rrset}

According to Theorem~\ref{thm:approx}, in order to provide the theoretical guarantee, the algorithm is supposed to identify a node whose truncated marginal spread is an $\alpha$-approximation  to the maximum truncated marginal spread in each round. At a first glance, it seems that we can utilize Borgs~{\etal}'s reverse influence sampling method \cite{Borgs_RIS_2014}. Unfortunately, in what follows, we show that Borgs~{\etal}'s sampling method \cite{Borgs_RIS_2014} fails to estimate the truncated influence spread accurately.
 
Specifically, Borgs~\etal~\cite{Borgs_RIS_2014} propose to generate random \textit{reverse reachable (RR)} sets for influence maximization. Compared with the Monte-Carlo simulation \cite{Kempe_maxInfluence_2003}, RR-sets can dramatically accelerate the seed selection process while  retaining the same approximation guarantees for influence maximization \cite{Borgs_RIS_2014}. In particular, a random RR-set of $G$ is generated by first selecting a node $v\in V$ uniformly at random, and then taking the nodes that can reach $v$ in a random realization. Evidently, a random RR-set is a subgraph of the corresponding random realization $\Phi$, which is generated by performing a reverse breadth first search (BFS) on $\Phi$ starting from the random node $v$. A random RR-set $R$ is an unbiased spread estimator, \ie~for any seed set $S$, \[\E[I(S)]=n\cdot\Pr[R\cap S\neq\emptyset].\] Unfortunately, RR-sets fail to estimate truncated influence spread accurately. Intuitively, the expectation of this estimator for truncated influence spread of $S$ is \[\eta\cdot\Pr[R\cap S\neq\emptyset]=\frac{\eta}{n}\cdot \E[I(S)]=\frac{\eta}{n}\sum_{\phi \in \Omega}I_\phi(S) \cdot p(\phi).\] Recall that the true expected truncated influence spread is \[\E[\Gamma(S)]=\sum_{\phi \in \Omega}\Gamma_\phi(S) \cdot p(\phi).\] Obviously, for any $\phi\in\Omega$, unless $I_\phi(S)=n$, \[\frac{\eta}{n}\cdot I_\phi(S)<\min\{I_\phi(S),\eta\}=\Gamma_\phi(S).\] Specifically, consider the case that $I_\phi(S)\leq \eta$ for all $\phi$. Then, this estimator is biased with a discount ${\eta}/{n}$, which is extremely inaccurate when $\eta\ll n$. In practice, $\eta$ is likely to be a fraction of $n$, since even a set of ten thousand seed nodes has been found to influence less than half population on many datasets \cite{Nguyen_DSSA_correct_2016}. These facts indicate that RR-sets are highly biased for estimating truncated influence spread. As a consequence, the state-of-the-art algorithms \cite{Huang_SSA_2017,Nguyen_DSSA_2016,Tang_IMM_2015,Tang_OPIM_2018,Tang_reverse_2014} for influence maximization that utilize RR-sets \cite{Borgs_RIS_2014} cannot provide theoretical guarantees for truncated influence maximization. In turn, this means that these algorithms cannot be fashioned to solve ASM with approximation guarantees. To address this issue, we propose a novel sampling approach that generates \textit{multi-root reverse reachable (\RR)} sets which can estimate the truncated influence spread efficiently and effectively. The algorithm utilizing \RR-sets is referred to as \OPIMF\footnote{\underline{TR}uncated \underline{I}nfluence \underline{M}aximization.}. We rigorously show that \OPIMF can provide strong theoretical guarantees for truncated influence maximization and thus \ASM instantiated with \OPIMF is guaranteed to approximate ASM within a constant ratio.

\subsection{Multi-Root Reverse Reachable Set}\label{sec:muti-root-rrset}
If we generate $n$ correlated RR-sets such that (i) they start from $n$ distinct nodes, and (ii) the materialization of each edge is consistent in all the RR-sets, then merging these RR-sets (with duplicates removed) as well as the edge statuses forms a realization sample. Based on this observation, if we generate $k$ $(k<n)$ correlated RR-sets using the same rule, then merging them as a $k$-root RR-set is likely to estimate the truncated influence spread more accurately compared against a vanilla RR-set. To explain how \textit{multi-root reverse reachable (\RR)} set works, we first introduce its definition.
\begin{definition}[Random \RR-set]
	Let $\Phi$ be a random realization of $G$ sampled from the realization space and $K$ be a size-$k$ node set selected uniformly at random from $V$. A random \RR-set is the set of nodes in $\Phi$ that can reach $K$. (That is, for each node $v$ in the \RR-set, there is a directed path in $\Phi$ from $v$ to some node in $K$.)
\end{definition}

By definition, the key difference between an \RR-set and an RR-set is that the former has multiple roots whereas the latter has one single root only. Similar to the generation of RR-sets, a random \RR-set can be generated by:
\begin{enumerate}
	\item Choose a set of $k$ nodes $K\subseteq V$ uniformly at random;
	\item Perform a stochastic reverse breadth first search (BFS) that starts from $K$ and follows the incoming edges of each node. Insert into $R$ all nodes that are traversed during the stochastic BFS. 
\end{enumerate}

A natural question is how to decide the size of $k$ for truncated spread estimation? The setting of $k$ yields a tradeoff between efficiency and accuracy in that a larger $k$ provides more accurate estimation but takes more computational resources. Through the aforementioned analysis of RR-set, we find that the high-efficiency of RR-set comes from its ``binary'' property. In particular, a random RR-set $R$ estimates the influence spread of any node set $S$ as $n$ if $R\cap S\neq 0$, and as $0$ otherwise. To avoid maintaining the edge statuses, our \RR-set estimator shall retain this binary property. That is, it estimates the truncated influence spread of $S$ as $\eta$ if and only if $S$ intersects this \RR-set, and as $0$ otherwise. For a given $k$-RR-set $R$, if a node $v\in R$, then $v$ can reach at least one of the $k$ starting nodes. Then, its influence spread is estimated to be at least $n/k$ and thus its estimated truncated influence spread is at least $\min\{n/k,\eta\}$. By setting $n/k\geq \eta$, the estimated truncated influence spread is $\eta$.

On the other hand, to improve the accuracy, $k$ should be set as large as possible. So we choose $k=n/\eta$. However, ${n}/{\eta}$ is not an integer in general. To address this issue, we adopt a randomized rounding approach. To generate a \RR-set, we randomly choose a set $K$ of nodes such that its size $k$ equals $\lfloor\frac{n}{\eta}\rfloor+1$ with probability $\frac{n}{\eta}-\lfloor\frac{n}{\eta}\rfloor$, and equals $\lfloor\frac{n}{\eta}\rfloor$ otherwise. Then, the expectation of $k$ is ${n}/{\eta}$. However, we note that when $k=\lfloor\frac{n}{\eta}\rfloor+1$, the possible value of the estimated truncated influence spread is no longer binary (i.e., $0$ or $\eta$). To address such a new challenge, we define an estimator $\tilde{\Gamma}(S)$ as $\tilde{\Gamma}(S)=\eta$ if and only if $S\cap R\neq\emptyset$, and $\tilde{\Gamma}(S)=0$ otherwise. At the first glance, it seems that the relationship between $\E[{\Gamma}(S)]$ and $\E[\tilde{\Gamma}(S)]$ is unclear. Fortunately, the following theorem shows that under the above setting of $k$ such that $\E[k]={n}/{\eta}$, the ratio of ${\E[\tilde{\Gamma}(S)]}$ and ${\E[{\Gamma}(S)]}$ is in the range of $[\ratio,1]$.
\begin{theorem}\label{thm:mrr-relative-random}
	Let $k^\bot=\lfloor\frac{n}{\eta}\rfloor$ and $r={n}/{\eta}-k^\bot$ be the integer and fractional part of ${n}/{\eta}$, respectively. For any \RR-set, if we sample $k$ nodes such that $k=k^\bot+1$ with probability $r$ and $k=k^\bot$ otherwise, then
	\begin{equation}\label{eqn:mrr-relative-random}
	(\ratio)\E[{\Gamma}(S)]\leq \E[\tilde{\Gamma}(S)] \leq \E[{\Gamma}(S)].
	\end{equation}
\end{theorem}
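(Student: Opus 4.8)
The plan is to reduce the statement to a pointwise estimate over realizations and then bound the ``hit probability'' of the random root set $K$. Since both $\E[\tilde{\Gamma}(S)]$ and $\E[\Gamma(S)]$ are expectations over the random realization $\Phi$, I would first condition on $\Phi=\phi$. Let $A_\phi$ denote the set of nodes reachable from $S$ in $\phi$, so that $|A_\phi|=I_\phi(S)$ and $S$ intersects the \RR-set exactly when $K\cap A_\phi\neq\emptyset$. Hence $\E[\tilde{\Gamma}(S)\mid\Phi=\phi]=\eta\cdot\Pr_K[K\cap A_\phi\neq\emptyset]$ while $\E[\Gamma(S)\mid\Phi=\phi]=\min\{I_\phi(S),\eta\}$. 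Because a two-sided pointwise bound is preserved under the (weighted) average over $\Phi$, it suffices to prove, for every set $A$ with $|A|=I$,
\begin{equation*}
(\ratio)\min\{I,\eta\}\le\eta\cdot\Pr_K[K\cap A\neq\emptyset]\le\min\{I,\eta\}.
\end{equation*}

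For the upper bound I would use a first-moment argument: each node lies in $K$ with probability $\E[k]/n=1/\eta$, so $\E[|K\cap A|]=I/\eta$ and therefore $\Pr_K[K\cap A\neq\emptyset]\le\E[|K\cap A|]=I/\eta$, giving $\eta\,\Pr_K[\cdot]\le I$; combined with the trivial bound $\Pr_K[\cdot]\le1$ this yields $\eta\,\Pr_K[\cdot]\le\min\{I,\eta\}$.

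The lower bound is the crux. The clean reduction I would aim for is $\Pr_K[K\cap A=\emptyset]\le\e^{-I/\eta}$; granting this, $\eta\,\Pr_K[K\cap A\neq\emptyset]\ge\eta(1-\e^{-I/\eta})\ge(\ratio)\min\{I,\eta\}$, where the last step is the elementary inequality $1-\e^{-x}\ge(\ratio)\min\{x,1\}$ (with $x=I/\eta$), proved by noting that the difference is concave, vanishes at $x=0$ and $x=1$, and is nonnegative for $x\ge1$. Conditioned on $|K|=k$, the miss probability is $\psi(k):=\binom{n-I}{k}/\binom{n}{k}$, so $\Pr_K[K\cap A=\emptyset]=\E_{k\sim D}[\psi(k)]$, where $D$ is the two-point law that rounds $n/\eta$ to $k^\bot$ or $k^\bot+1$. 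The naive route---bounding $\psi(k)\le(1-I/n)^k\le\e^{-kI/n}$ for each $k$ and then averaging---fails, because Jensen's inequality pushes $\E_{k\sim D}[\e^{-kI/n}]$ the wrong way (it is at least $\e^{-\E[k]I/n}=\e^{-I/\eta}$).

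To circumvent this I would exploit that $\psi$ is convex in $k$ (its ratio $\psi(k+1)/\psi(k)=(n-I-k)/(n-k)$ is decreasing, so $\psi$ is log-convex, hence convex) and compare the low-variance rounding law $D$ against independent sampling. Let $K'$ include each node independently with probability $1/\eta$, so that $|K'|\sim\Bin(n,1/\eta)$ has the same mean $n/\eta$ as $D$; conditioning on its size gives $\Pr[K'\cap A=\emptyset]=\E_{k'\sim\Bin(n,1/\eta)}[\psi(k')]=(1-1/\eta)^I\le\e^{-I/\eta}$. The key observation is that $D$, being supported on the two integers adjacent to its mean, is dominated in the convex order by every integer-valued law with the same mean, in particular by $\Bin(n,1/\eta)$: for any convex $\psi$ and any integer-valued $X$ with $\E[X]=n/\eta$, comparing $\psi$ to the secant line through $(k^\bot,\psi(k^\bot))$ and $(k^\bot+1,\psi(k^\bot+1))$ (which lies below $\psi$ at every integer, since no integer falls strictly between $k^\bot$ and $k^\bot+1$) yields $\E[\psi(X)]\ge\E_{k\sim D}[\psi(k)]$. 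Applying this with $X=|K'|$ gives $\E_{k\sim D}[\psi(k)]\le(1-1/\eta)^I\le\e^{-I/\eta}$, which closes the lower bound. The main obstacle is precisely this comparison step: recognizing that the randomized rounding of $n/\eta$ is the convex-minimal way to realize the target mean with an integer root-set size, which is what licenses replacing the awkward without-replacement, random-size miss probability by the transparent independent-sampling estimate $(1-1/\eta)^I$.
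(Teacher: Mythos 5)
Your overall strategy is sound and reaches the stated bounds, but it is genuinely different from the paper's proof in its two key steps. Both arguments begin identically, by conditioning on a realization $\phi$ and reducing to a pointwise claim about the hit probability of the random root set $K$ against the reachable set $A$ with $|A|=I=I_\phi(S)$. For the upper bound, the paper shows that $f(x)=\eta(1-\E[p(x)])/\min\{x,\eta\}$ is nonincreasing on $x<\eta$ (via $p''\ge 0$) and evaluates $f(1)=1$; your first-moment/union bound $\eta\Pr[K\cap A\neq\emptyset]\le \eta\,\E[|K\cap A|]=I$ is more elementary and gets there in one line. For the lower bound, the paper never passes through the comparison distribution you use: it computes the two-point average $\E_k[p(x)]$ \emph{exactly} as the single product $\bigl(1-\tfrac{rx}{n-k^\bot}\bigr)\prod_{i=0}^{k^\bot-1}\tfrac{n-x-i}{n-i}$ and only then applies $1-y\le \e^{-y}$ factor by factor, which sidesteps the Jensen obstruction you correctly identify. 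Your alternative---showing the rounding law $D$ is dominated in convex order by $\Bin(n,1/\eta)$ via the secant line through $k^\bot$ and $k^\bot+1$, and then evaluating the independent-sampling miss probability as $(1-1/\eta)^I$---is a legitimately different and arguably more conceptual route; it also replaces the paper's derivative-based case analysis for $x<\eta$ with the single elementary inequality $1-\e^{-x}\ge(\ratio)\min\{x,1\}$, which handles both regimes uniformly.

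One step needs repair: your parenthetical justification that $\psi(k)=\binom{n-I}{k}/\binom{n}{k}$ is convex is wrong as written. The ratio $\psi(k+1)/\psi(k)=1-\tfrac{I}{n-k}$ is indeed decreasing in $k$, but decreasing successive ratios mean $\psi$ is \emph{log-concave}, not log-convex, and log-concavity does not imply convexity. The convexity claim itself is true and is what your secant argument actually requires, so the fix is local: check the second difference directly. Writing $\rho_k=\psi(k+1)/\psi(k)$, convexity at $k$ amounts to $\psi(k)(1-\rho_k)\le\psi(k-1)(1-\rho_{k-1})$, i.e.\ $\rho_{k-1}(1-\rho_k)\le 1-\rho_{k-1}$, which after substituting $\rho_{k-1}=1-\tfrac{I}{n-k+1}$ and $\rho_k=1-\tfrac{I}{n-k}$ reduces to $I\ge 1$; this always holds since $S$ influences at least itself. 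With that substitution your proof is complete.
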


Theorem~\ref{thm:mrr-relative-random} states that $\tilde{\Gamma}$ is a biased but sufficiently accurate estimator of the expected truncated influence spread $\E[\Gamma(S)]$. In fact, this estimator also works for any residual graph $G_i$. Specifically, let $\tilde{\Gamma}(S\mid S_{i-1})$ be the estimated truncated spread of $S$ in $G_i$ with respect to $\eta_i$, the lowered target corresponding to graph $G_i$. Recall that $\eta_i = \eta - (n-n_i)$. We have the following corollary.
\begin{corollary}\label{corollary:mrr-relative-random-general}
	In the residual graph $G_i$, let $k^\bot=\lfloor\frac{n_i}{\eta_i}\rfloor$ and $r={n_i}/{\eta_i}-k^\bot$ be the integer and fractional part of ${n_i}/{\eta_i}$, respectively. For each \RR-set, if we sample $k$ nodes such that $k=k^\bot+1$ with probability $r$ and $k=k^\bot$ otherwise, then
	\begin{equation}\label{eqn:mrr-relative-random-general}
	\!\!\!(\ratio)\E[{\Gamma}(S\mid S_{i-1})]\leq \E[\tilde{\Gamma}(S\mid S_{i-1})] \leq \E[{\Gamma}(S\mid S_{i-1})].
	\end{equation}
	Furthermore, for any two sets $S,S^\prime\subseteq V_i$, it holds that
	\begin{equation}\label{eqn:mrr-relative-ratio}
		\frac{\E[{\Gamma}(S\mid S_{i-1})]}{\E[{\Gamma}(S^\prime\mid S_{i-1})]}\geq (\ratio)\frac{\E[\tilde{\Gamma}(S\mid S_{i-1})]}{\E[\tilde{\Gamma}(S^\prime\mid S_{i-1})]}.
	\end{equation}
\end{corollary}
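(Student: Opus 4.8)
The plan is to reduce both claims to Theorem~\ref{thm:mrr-relative-random} by treating the residual graph $G_i$ as a standalone instance of the truncated-spread problem. The key observation, already established in Section~\ref{sec:asm-definition}, is that once we condition on the partial realization revealed in the first $i-1$ rounds, the marginal truncated spread $\Gamma_\Phi(S\mid S_{i-1})=\min\{I_\Phi(S\mid S_{i-1}),\eta_i\}$ equals the truncated spread of $S$ computed directly inside $G_i$ with respect to the lowered threshold $\eta_i$, where $\Phi$ now ranges over the conditional space $\Omega_i$. Thus the triple $(G_i,\eta_i,\Omega_i)$ is itself a valid probabilistic social network with $n_i$ nodes and target $\eta_i$, to which Theorem~\ref{thm:mrr-relative-random} applies verbatim.

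First I would verify that the sampling rule in the corollary is exactly the rule required by Theorem~\ref{thm:mrr-relative-random} for this instance. The corollary sets $k^\bot=\lfloor n_i/\eta_i\rfloor$ and $r=n_i/\eta_i-k^\bot$, and draws $k=k^\bot+1$ with probability $r$ and $k=k^\bot$ otherwise, so that $\E[k]=n_i/\eta_i$. This matches the hypotheses of Theorem~\ref{thm:mrr-relative-random} with $n$ and $\eta$ replaced by $n_i$ and $\eta_i$. Invoking the theorem on $(G_i,\eta_i)$ then yields inequality~\eqref{eqn:mrr-relative-random-general} directly.

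The ratio inequality~\eqref{eqn:mrr-relative-ratio} follows by applying the two halves of~\eqref{eqn:mrr-relative-random-general} to the numerator and denominator separately. For the numerator I would use the upper bound $\E[\tilde{\Gamma}(S\mid S_{i-1})]\leq\E[\Gamma(S\mid S_{i-1})]$, and for the denominator the lower bound $(\ratio)\E[\Gamma(S^\prime\mid S_{i-1})]\leq\E[\tilde{\Gamma}(S^\prime\mid S_{i-1})]$, i.e.\ $\E[\Gamma(S^\prime\mid S_{i-1})]\leq\frac{1}{\ratio}\E[\tilde{\Gamma}(S^\prime\mid S_{i-1})]$. Dividing the first bound by the second gives
\[
\frac{\E[\Gamma(S\mid S_{i-1})]}{\E[\Gamma(S^\prime\mid S_{i-1})]}\geq(\ratio)\frac{\E[\tilde{\Gamma}(S\mid S_{i-1})]}{\E[\tilde{\Gamma}(S^\prime\mid S_{i-1})]},
\]
as required.

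The main obstacle is not the algebra but justifying the reduction step itself: one must confirm that conditioning on the observed partial realization leaves the edge statuses within and into $G_i$ independently distributed, so that the \RR-set construction inside $G_i$ is genuinely an unbiased sample for the instance $(G_i,\eta_i)$. Because the live/blocked outcomes are drawn independently per edge under the IC model, revealing only those edges that determine the propagation from $S_{i-1}$ does not disturb the distribution of the remaining edges, and the reverse-BFS in $G_i$ behaves exactly as it would in a fresh graph. Once this independence is made precise, Theorem~\ref{thm:mrr-relative-random} transfers without change and both inequalities follow.
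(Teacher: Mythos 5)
Your proposal is correct and follows essentially the same route as the paper: inequality~\eqref{eqn:mrr-relative-random-general} is obtained by applying Theorem~\ref{thm:mrr-relative-random} to the residual instance $(G_i,\eta_i)$, and the ratio bound~\eqref{eqn:mrr-relative-ratio} follows by combining the upper bound on $\E[\tilde{\Gamma}(S\mid S_{i-1})]$ with the lower bound on $\E[\tilde{\Gamma}(S^\prime\mid S_{i-1})]$ exactly as the paper does. Your additional remarks on why conditioning on the revealed partial realization preserves the independent edge-status distribution in $G_i$ only make explicit a step the paper leaves implicit.
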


Now, we can construct a $(\ratio)(1-\varepsilon)$-approximate greedy policy using the estimator $\tilde{\Gamma}$ built upon \RR-sets.

\spara{Remark} It is worth pointing out that our randomized rounding approach for choosing $k$ is critical for achieving the above approximation bound. Specifically, if we fix $k$ to be $\lfloor\frac{n}{\eta}\rfloor$, following the proof methodology of Theorem~\ref{thm:mrr-relative-random}, we may derive that the ratio of ${\E[\tilde{\Gamma}(S)]}$ to ${\E[{\Gamma}(S)]}$ will be in the range of $[1-1/\sqrt{\e},1]$. On the other hand, if we fix $k$ to be $\lfloor\frac{n}{\eta}\rfloor+1$, the ratio of ${\E[\tilde{\Gamma}(S)]}$ to ${\E[{\Gamma}(S)]}$ will be in the range of $[\ratio,2]$. Both settings yield much coarser bounds than our setting that uses a smart randomized rounding approach.

\subsection{The Design of \OPIMF}\label{sec:trim-alg}
\setlength{\textfloatsep}{0.5em}
\begin{algorithm}[!t]
\begin{small}
	\caption{{\OPIMF}$(G_i,\varepsilon)$}
	\label{alg:OPIM-E}
	\KwIn{Graph $G_i$ and error threshold $\varepsilon$.}
	\KwOut{A $(\ratio)(1-\varepsilon)$-approximate solution $v^\ast$ for truncated influence maximization.} 
	$\delta\leftarrow {\varepsilon}/{\big(100(\ratio)(1-\varepsilon)\eta_i\big)}$, and $\hat{\varepsilon}\leftarrow{{99\varepsilon}/{(100-\varepsilon)}}$\;\label{algline:OPIM-E-correct-error}
	$\theta_{\max}\leftarrow{2n_i\big(\sqrt{\ln({6}/{\delta})}+\sqrt{\ln n_i+\ln({6}/{\delta})}\big)^2}\cdot \hat{\varepsilon}^{-2}$\;
	$\theta_{\circ}\leftarrow{\theta_{\max}\cdot\hat{\varepsilon}^2}/{n_i}$\;
	$T\leftarrow\lceil \log_2\frac{\theta_{\max}}{\theta_{\circ}}\rceil+1$\;
	$a_1\leftarrow\ln(3T/\delta)+\ln n_i$, and $a_2\leftarrow\ln(3T/\delta)$\;
	generate a set $\R$ of $\theta_{\circ}$ random \RR-sets\;\label{algline:OPIM-E-RRsets2}	
	\For{$t\leftarrow1$ \KwTo $T$\label{algline:OPIM-E-loop}}{
		find $v^\ast\leftarrow\arg\max_{v\in V_i}\Lambda_\R(v)$\; \label{algline:OPIM-E-greedy}
		$\Lambda^l(v^\ast)\leftarrow(\sqrt{\Lambda_\R(v^\ast)+{2a_1}/{9}}-\sqrt{{a_1}/{2}} )^2- {a_1}/{18}$\;\label{algline:OPIM-E-lower}
		$\Lambda^u(v^\circ)\leftarrow(\sqrt{\Lambda_\R(v^\ast)+{a_2}/{2}}+\sqrt{{a_2}/{2}} )^2$\;\label{algline:OPIM-E-upper}
		\lIf{$\frac{\Lambda^l(v^\ast)}{\Lambda^u(v^\circ)}\geq 1-\hat{\varepsilon}$ {\bf or} $t=T$\label{algline:OPIM-E-stop}}{\Return{$v^\ast$}\label{algline:OPIM-E-return}}
		double the size of $\R$\;\label{algline:OPIM-E-doubleRRsets}
	}
\end{small}
\end{algorithm}
Algorithm~\ref{alg:OPIM-E} presents the details of \OPIMF that can return a $(\ratio)(1-\varepsilon)$-approximate solution for truncated influence maximization for any input graph $G_i$ and error threshold $\varepsilon$. \OPIMF is similar in spirit to \OPIMC which is the state-of-the-art algorithm for influence maximization \cite{Tang_OPIM_2018}. Specifically, \OPIMC uses two disjoint groups of random RR-sets, among which one group is used to derive the solution and the other is used to verify its quality. We customize \OPIMF by utilizing one group of \RR-sets, which would be more efficient for selecting a singleton seed set as pointed out in \cite{Huang_SSA_2017}. In a nutshell, \OPIMF starts from a small number of \RR-sets and iteratively increases the \RR-set number until a satisfactory solution is identified. Next, we discuss the details of \OPIMF.

In the \RR-set sampling stage (Lines~\ref{algline:OPIM-E-RRsets2} and \ref{algline:OPIM-E-doubleRRsets}), each \RR-set is started from a random set $K$ of nodes whose size $k$ is an independent random number. Recall that $k$ is $\lfloor\frac{n_i}{\eta_i}\rfloor+1$ with probability $\frac{n_i}{\eta_i}-\lfloor\frac{n_i}{\eta_i}\rfloor$ and $\lfloor\frac{n_i}{\eta_i}\rfloor$ otherwise. Given a set $\R$ of random \RR-sets, we say that a node $v$ \textit{covers} a \RR-set $R\in\R$ if $v\in R$, and we define the \textit{coverage} of $v$ in $\R$, denoted as $\Lambda_\R(v)$, as the number of \RR-sets in $\R$ that are covered by $v$. Based on the \RR-sets generated, \OPIMF identifies the node $v^{\ast}\in V_i$ that covers the largest number of \RR-sets in $\R$ (Line~\ref{algline:OPIM-E-greedy}).  Let $v^{\circ}$ be the optimal node such that $\Delta(v^{\circ}\mid S_{i-1}) = \max_{v \in V_i} \Delta(v\mid S_{i-1})$. Then, $\Lambda_\R(v^{\circ})$ is bounded by $\Lambda_\R(v^\ast)$. According to Lemma~\ref{lemma:concentration-ept} in Appendix \ref{appenix:inequality}, with high probability, $\Lambda^l(v^\ast)$ (Line~\ref{algline:OPIM-E-lower}) is a lower bound on the expected coverage of $v^\ast$ in $\R$, which indicates that
\begin{equation}\label{eqn:tis-lower}
\frac{\eta_i\Lambda^l(v^\ast)}{\abs{\R}}\leq \E[\tilde{\Gamma}(v^\ast\mid S_{i-1})].
\end{equation}
Similarly, with high probability, $\Lambda^u(v^\circ)$ (Line~\ref{algline:OPIM-E-upper}) is an upper bound on the expected coverage of $v^\circ$ in $\R$. Thus,
\begin{equation}\label{eqn:tis-upper}
\frac{\eta_i\Lambda^u(v^\circ)}{\abs{\R}}\geq \E[\tilde{\Gamma}(v^\circ\mid S_{i-1})].
\end{equation} 
In addition, by Equation~\eqref{eqn:mrr-relative-ratio} in Corollary~\ref{corollary:mrr-relative-random-general}, we know that
\begin{equation}\label{eqn:relation-sol-opt}
\frac{\Delta(v^\ast\mid S_{i-1})}{\Delta(v^\circ\mid S_{i-1})}\geq(\ratio)\frac{\E[\tilde{\Gamma}(v^\ast\mid S_{i-1})]}{\E[\tilde{\Gamma}(v^\circ\mid S_{i-1})]}.
\end{equation}
Combining Equations~\eqref{eqn:tis-lower}--\eqref{eqn:relation-sol-opt}, we can derive a quantitative relationship between $\Delta(v^\ast\mid S_{i-1})$ and $\Delta(v^\circ\mid S_{i-1})$ such that with high probability
\begin{equation}
	\Delta(v^\ast\mid S_{i-1})\geq \frac{\Lambda^l(v^\ast)}{\Lambda^u(v^\circ)}\cdot(\ratio) \cdot \Delta(v^\circ\mid S_{i-1}).
\end{equation}
Therefore, the final guarantee is $(\ratio){\Lambda^l(v^\ast)}/{\Lambda^u(v^\circ)}$. Note that in our stopping condition of ${\Lambda^l(v^\ast)}/{\Lambda^u(v^\circ)}\geq 1-\hat{\varepsilon}$ (Line~\ref{algline:OPIM-E-stop}), we use $\hat{\varepsilon}$ (defined in Line~\ref{algline:OPIM-E-correct-error}) to correct the error on Equations~\eqref{eqn:tis-lower} and \eqref{eqn:tis-upper} (with low failure probability). This proves the $(\ratio)(1-\varepsilon)$ approximation ratio of $\Delta(v^\ast\mid S_{i-1})$.

\subsection{Theoretical Analysis}\label{sec:trim-analysis}

Before we proceed to the theoretical analysis, we first present the hardness of ASM.
\begin{lemma}\label{lem:NP-appro}
	Given a probabilistic social network $G=(V,E)$ with $|V|=n$ and a threshold $\eta \in [1,n]$, for any $\xi >0$, adaptive seed minimization cannot be approximated within a ratio of $(1-\xi)\ln \eta$ in polynomial time unless $\NP\subseteq \DTIME(n^{O(\log \log n)})$.
\end{lemma}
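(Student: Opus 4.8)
The plan is to establish the lower bound by a gap-preserving reduction from the \textsc{Set Cover} problem, whose $(1-\xi)\ln N$-inapproximability under $\NP\not\subseteq\DTIME(n^{O(\log\log n)})$ is Feige's theorem. The starting observation is the one already noted in the excerpt: when every edge of $G$ has propagation probability $1$, the realization space $\Omega$ collapses to a single realization, so the adaptive and non-adaptive settings coincide and ASM degenerates to the deterministic problem of finding a smallest seed set $S$ whose reachable set $R(S)$ satisfies $\abs{R(S)}\ge\eta$. It therefore suffices to prove the claimed hardness for this deterministic special case, since any hardness there is inherited by ASM verbatim.

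Given a \textsc{Set Cover} instance with universe $U=\{e_1,\dots,e_N\}$ and family $\mathcal S=\{S_1,\dots,S_m\}$ of optimum cover size $k^\ast$, I would build a directed, probability-$1$ graph as follows. Create one \emph{set node} $a_j$ for each $S_j$, and for each element $e_i$ a cloud of $g$ identical \emph{element nodes} $y_i^1,\dots,y_i^g$; add an edge $a_j\to y_i^t$ for every $t$ whenever $e_i\in S_j$, and set $\eta=gN$. The multiplicity $g$ is there to neutralize the fact that seed nodes themselves count toward the spread: without it one could reach $\eta$ by padding the seed set with set nodes rather than genuinely covering elements. I would then prove $\OPT_{\mathrm{seed}}=k^\ast$ whenever $g>k^\ast$. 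Seeding the $k^\ast$ set nodes of an optimal cover activates all $gN$ element nodes, giving $\OPT_{\mathrm{seed}}\le k^\ast$. Conversely, any feasible seed set of size less than $g$ must activate all $gN$ element copies and hence cover every element (missing one element loses $g$ copies that the fewer-than-$g$ counted seeds cannot compensate), so it contains a set cover and has size $\ge k^\ast$; and any seed set of size $\ge g>k^\ast$ is already worse than $k^\ast$. Thus any $\rho$-approximate seed set projects onto a $\rho$-approximate set cover.

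Finally I would transfer the inapproximability factor. If ASM admitted a polynomial-time $(1-\xi)\ln\eta$-approximation, the correspondence above would yield a $(1-\xi)\ln(gN)=(1-\xi)(\ln N+\ln g)$-approximation for \textsc{Set Cover}. To contradict Feige's $(1-\delta)\ln N$ bound for every $\delta>0$, it suffices that $\ln g=o(\ln N)$, i.e.\ that the multiplicity---equivalently $k^\ast$, since we may take $g=k^\ast+1$---is $N^{o(1)}$. The main obstacle, and the step that needs genuine care, is exactly this parameter control: I must invoke Feige's hardness in the form where the hard instances have sub-polynomial optimum cover size (as produced by the underlying $k$-prover construction, whose good cover is small), so that $\eta=gN=N^{1+o(1)}$ and $\ln\eta=(1+o(1))\ln N$. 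With that, the $o(1)$ slack is absorbed into $\xi$, and a $(1-\xi)\ln\eta$-approximation for ASM would beat the Set-Cover threshold on all large instances, establishing the lemma. The remaining items---polynomial size of the construction and the feasibility/optimality bookkeeping above---are routine.
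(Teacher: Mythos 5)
Your proposal follows essentially the same route as the paper: specialize to $p(e)=1$ so that ASM becomes a deterministic (partial) set-cover problem, then invoke Feige's $(1-\xi)\ln N$ inapproximability. The paper's own proof stops at exactly that observation --- it simply asserts that the deterministic special case ``is'' set cover and that Feige's bound applies with $\eta$ in place of the universe size --- whereas you make the reduction explicit and confront the two issues the paper glosses over: that seed nodes themselves count toward the spread (handled by the multiplicity-$g$ element clouds) and that the hardness factor must be transferred from $\ln N$ to $\ln\eta=\ln(gN)$ (handled by requiring $k^\ast=N^{o(1)}$ in the hard instances, which Feige's construction does provide since there the universe is quasi-polynomial in the optimum). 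One detail needs repair: your claim that ``any $\rho$-approximate seed set projects onto a $\rho$-approximate set cover'' is only justified by the argument for seed sets of size less than $g$, but with $g=k^\ast+1$ a $(1-\xi)\ln\eta$-approximate solution can have size $\rho k^\ast\gg g$. Fix this either by taking $g>\rho k^\ast$ (still $N^{o(1)}$ since $\rho=O(\ln N)$), or by the sharper accounting that a feasible seed set of size $s$ leaves at most $s/g$ elements uncovered by its set-nodes and hence yields a cover of size at most $s(1+1/g)$, absorbing the extra factor into $\xi$. With that adjustment the argument is complete, and is in fact a more rigorous version of what the paper records.
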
\vspace{-2mm}
\spara{Approximation Guarantee} Theorem~\ref{thm:approx} indicates that any $\alpha$-approximation greedy policy $\pi$ could achieve an approximation ratio of $\frac{(\ln \eta+1)^2}{\alpha}$. We examine the potential of \OPIMF to serve the role of such a policy. To cope with the randomness of seed selection algorithms (due to sampling), we use the notion of \textit{expected approximation guarantee}, which considers the \textit{average case}. We first obtain the approximation ratio of \OPIMF for each round of seed selection.
\begin{lemma}\label{lem:thrim-alpha}
For the $i$-th round of seed selection in $G_i$, \OPIMF returns a $(\ratio)(1-\varepsilon)$-approximate solution to the optimum.\footnote{Here, $\alpha$-approximation indicates that $\E[\tfrac{1}{\Delta (v^\ast\mid S_{i-1})}]\leq \tfrac{1}{\alpha}\cdot\tfrac{1}{\Delta (v^\circ\mid S_{i-1})}$, which is required by Theorem~\ref{thm:approx} for a randomized algorithm through a detailed check of the proof of Theorem 40 in \cite{Golovin_adaptive_2017}.} 
\end{lemma}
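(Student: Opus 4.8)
The plan is to first reproduce, as a high-probability event, the per-round ratio guarantee already sketched around \eqref{eqn:tis-lower}--\eqref{eqn:relation-sol-opt}, namely $\Delta(v^\ast\mid S_{i-1})\geq(\ratio)(1-\hat\varepsilon)\,\Delta(v^\circ\mid S_{i-1})$, and then to upgrade this into the reciprocal-expectation form required by the footnote, $\E[1/\Delta(v^\ast\mid S_{i-1})]\leq\frac{1}{(\ratio)(1-\varepsilon)}\cdot\frac{1}{\Delta(v^\circ\mid S_{i-1})}$. The first part is inherited from the \OPIMC methodology; the second is the genuinely new step and is where the parameter choices in Line~\ref{algline:OPIM-E-correct-error} get consumed.

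First I would fix the \emph{good event} $\mathcal{E}$: that at every iteration $t=1,\dots,T$ both \eqref{eqn:tis-lower} (the lower bound $\tfrac{\eta_i\Lambda^l(v^\ast)}{\abs{\R}}\le\E[\tilde{\Gamma}(v^\ast\mid S_{i-1})]$) and \eqref{eqn:tis-upper} (the upper bound $\tfrac{\eta_i\Lambda^u(v^\circ)}{\abs{\R}}\ge\E[\tilde{\Gamma}(v^\circ\mid S_{i-1})]$) hold, together with a terminal concentration bound used below. These follow from the martingale tail inequalities of Lemma~\ref{lemma:concentration-ept}: the lower bound uses $a_1=\ln(3T/\delta)+\ln n_i$ since $v^\ast$ is data-dependent and needs a union bound over the $n_i$ candidates, whereas the upper bound is applied to the \emph{fixed} maximizer $v^\diamond$ of $\E[\tilde{\Gamma}(\cdot\mid S_{i-1})]$ and then relaxed through $\Lambda_\R(v^\ast)\ge\Lambda_\R(v^\diamond)$, so it only needs $a_2=\ln(3T/\delta)$. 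Budgeting $\delta/3$ across the $T$ iterations for each of the two bounds and $\delta/3$ for the terminal bound gives $\Pr[\bar{\mathcal{E}}]\le\delta$.

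On $\mathcal{E}$, whenever \OPIMF returns $v^\ast$, chaining \eqref{eqn:tis-lower}, \eqref{eqn:tis-upper} and \eqref{eqn:relation-sol-opt} (i.e.\ \eqref{eqn:mrr-relative-ratio} of Corollary~\ref{corollary:mrr-relative-random-general}) yields $\Delta(v^\ast\mid S_{i-1})\ge(\ratio)\tfrac{\Lambda^l(v^\ast)}{\Lambda^u(v^\circ)}\Delta(v^\circ\mid S_{i-1})$. If termination is triggered by the ratio test $\Lambda^l(v^\ast)/\Lambda^u(v^\circ)\ge1-\hat\varepsilon$ (Line~\ref{algline:OPIM-E-stop}), the claim is immediate. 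The remaining case is termination at $t=T$, which I would handle via the choice of $\theta_{\max}$: since each node covers at least itself whenever it is sampled into $K$, one has $\OPTT_i=\max_v\E[\tilde{\Gamma}(v\mid S_{i-1})]\ge\eta_i\cdot\E[k]/n_i=1$; because $\E[\Lambda_\R(v^\diamond)]=\tfrac{\theta_{\max}}{\eta_i}\OPTT_i$, the coverage $\Lambda_\R(v^\ast)\ge\Lambda_\R(v^\diamond)$ is then large and concentrated, and $\theta_{\max}$ is exactly the sample size (matching the stated $2n_i(\sqrt{\ln(6/\delta)}+\sqrt{\ln n_i+\ln(6/\delta)})^2\hat\varepsilon^{-2}$ form) that forces $\Lambda^l(v^\ast)/\Lambda^u(v^\circ)\ge1-\hat\varepsilon$ with probability $\ge1-\delta/3$. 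Hence the ratio test passes at $t=T$ as well, so on $\mathcal{E}$, in all cases, $\Delta(v^\ast\mid S_{i-1})\ge(\ratio)(1-\hat\varepsilon)\Delta(v^\circ\mid S_{i-1})$.

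Finally I would convert to the reciprocal expectation using two elementary facts: (i) $\Delta(v^\ast\mid S_{i-1})\ge1$ always, because a seed activates at least itself and $\eta_i\ge1$ makes $\Gamma_\phi(v^\ast\mid S_{i-1})\ge\min\{1,\eta_i\}=1$; and (ii) $\Delta(v^\circ\mid S_{i-1})\le\eta_i$ by truncation. Splitting the expectation over $\mathcal{E}$ and $\bar{\mathcal{E}}$ and bounding $1/\Delta(v^\ast\mid S_{i-1})\le1$ on the failure event gives
\begin{equation}
\E\!\left[\frac{1}{\Delta(v^\ast\mid S_{i-1})}\right]\leq\frac{1}{(\ratio)(1-\hat\varepsilon)\,\Delta(v^\circ\mid S_{i-1})}+\delta .
\end{equation}
It then remains to check $\frac{1}{(\ratio)(1-\hat\varepsilon)\Delta(v^\circ\mid S_{i-1})}+\delta\le\frac{1}{(\ratio)(1-\varepsilon)\Delta(v^\circ\mid S_{i-1})}$; multiplying by $(\ratio)\Delta(v^\circ\mid S_{i-1})$ and using $\Delta(v^\circ\mid S_{i-1})\le\eta_i$ together with $\delta=\varepsilon/\big(100(\ratio)(1-\varepsilon)\eta_i\big)$ reduces the target to $\frac{1}{1-\hat\varepsilon}+\frac{\varepsilon}{100(1-\varepsilon)}\le\frac{1}{1-\varepsilon}$, which holds with equality after substituting $1-\hat\varepsilon=\frac{100(1-\varepsilon)}{100-\varepsilon}$ from $\hat\varepsilon=99\varepsilon/(100-\varepsilon)$. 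The main obstacle is the terminal-iteration ($t=T$) analysis: unlike the ratio-triggered stop it is not self-certifying, so one must prove the fixed budget $\theta_{\max}$ of \RR-sets is provably sufficient, which hinges on the lower bound $\OPTT_i\ge1$ and a tight two-sided concentration of the data-dependent coverage $\Lambda_\R(v^\ast)$; the chaining of the three inequalities and the final reconciliation of $\delta$ and $\hat\varepsilon$ are then routine.
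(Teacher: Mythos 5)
Your proposal is correct and follows essentially the same route as the paper's proof: the same good event built from Lemma~\ref{lemma:concentration-ept} with the $a_1$ (union bound over $n_i$ candidates for the data-dependent $v^\ast$) versus $a_2$ (fixed optimum) distinction, the same deferral of the $t=T$ case to the choice of $\theta_{\max}$, the same chaining of \eqref{eqn:tis-lower}--\eqref{eqn:relation-sol-opt}, and the same conversion to the reciprocal-expectation form via $\Delta(v^\ast\mid S_{i-1})\ge 1$, $\Delta(v^\circ\mid S_{i-1})\le\eta_i$, and the exact cancellation forced by $\delta$ and $\hat\varepsilon$. The only cosmetic difference is that you apply the upper concentration bound to $v^\diamond$ and relax, where the paper applies it to $v^\circ$ directly; both are fixed relative to $\R$, so the arguments coincide.
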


Combining Theorem~\ref{thm:approx} and Lemma~\ref{lem:thrim-alpha}, we obtain the approximation guarantee of \ASM.
\begin{theorem}\label{thm:trim-ratio}
	\ASM with the instantiation of \OPIMF achieves an expected approximation ratio of $\frac{(\ln \eta+1)^2}{(\ratio)(1-\varepsilon)}$.
\end{theorem}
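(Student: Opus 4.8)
The plan is to obtain Theorem~\ref{thm:trim-ratio} as an immediate composition of the generic framework guarantee in Theorem~\ref{thm:approx} with the per-round guarantee of \OPIMF from Lemma~\ref{lem:thrim-alpha}. Concretely, I would set $\alpha=(\ratio)(1-\varepsilon)$ and argue that \ASM instantiated with \OPIMF is an $\alpha$-approximate greedy policy in the sense required by Theorem~\ref{thm:approx}; substituting this $\alpha$ into the bound $\frac{(\ln\eta+1)^2}{\alpha}$ then yields exactly $\frac{(\ln\eta+1)^2}{(\ratio)(1-\varepsilon)}$.

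First I would fix an arbitrary round $i$ and its residual graph $G_i$. By Lemma~\ref{lem:thrim-alpha}, the node $v^\ast$ returned by \OPIMF$(G_i,\varepsilon)$ is a $(\ratio)(1-\varepsilon)$-approximate solution relative to the optimal node $v^\circ$ maximizing $\Delta(\cdot\mid S_{i-1})$. Because \OPIMF is randomized (the \RR-sets and their root sets are sampled), this guarantee must be read in the expected sense spelled out in the footnote to Lemma~\ref{lem:thrim-alpha}, namely $\E[\tfrac{1}{\Delta(v^\ast\mid S_{i-1})}]\le\tfrac{1}{\alpha}\cdot\tfrac{1}{\Delta(v^\circ\mid S_{i-1})}$ with $\alpha=(\ratio)(1-\varepsilon)$.

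Second I would feed this per-round guarantee into Theorem~\ref{thm:approx}. Since that theorem rests on the adaptive submodular machinery of Golovin and Krause~\cite{Golovin_adaptive_2017}, I would verify (as the footnote indicates, via a careful reading of the proof of Theorem~40 in \cite{Golovin_adaptive_2017}) that the argument does not actually require the deterministic pointwise inequality $\Delta(s_i\mid S_{i-1})\ge\alpha\,\Delta(v\mid S_{i-1})$, but only the weaker expected-reciprocal form above; this is precisely what permits a randomized selection routine to be plugged in. Once the hypothesis of Theorem~\ref{thm:approx} is discharged with $\alpha=(\ratio)(1-\varepsilon)$, the claimed expected approximation ratio follows directly.

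The main obstacle is this bridge between a deterministic hypothesis and a randomized algorithm: Theorem~\ref{thm:approx} is phrased as though each round makes a worst-case $\alpha$-approximate choice, whereas \OPIMF only guarantees an $\alpha$-approximation \emph{in expectation over its internal sampling}. The crux is therefore to confirm that the potential-function analysis underlying Theorem~\ref{thm:approx} (and ultimately Theorem~40 of \cite{Golovin_adaptive_2017}) remains valid when the greedy step is only expected-$\alpha$-approximate in the reciprocal sense. Everything else --- the substitution of $\alpha$ and the arithmetic of the final ratio --- is routine.
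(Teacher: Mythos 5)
Your proposal matches the paper's argument exactly: the paper derives Theorem~\ref{thm:trim-ratio} by combining Theorem~\ref{thm:approx} with Lemma~\ref{lem:thrim-alpha}, taking $\alpha=(\ratio)(1-\varepsilon)$, and it handles the randomized-versus-deterministic subtlety in precisely the way you describe, via the expected-reciprocal condition $\E[\tfrac{1}{\Delta(v^\ast\mid S_{i-1})}]\le\tfrac{1}{\alpha}\cdot\tfrac{1}{\Delta(v^\circ\mid S_{i-1})}$ noted in the footnote and justified by Theorem~40 of \cite{Golovin_adaptive_2017}. No gaps; this is the same proof.
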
\vspace{-2mm}

\spara{Time Complexity} The time complexity of \OPIMF is dominated by the procedure for generating \RR-sets. Intuitively, this is based on (i) how much time is used for generating a random \RR-set, and (ii) how many \RR-sets are generated. In what follows, we show their relationship. In particular, for the $i$-th round of seed selection in $G_i$, let $\OPTT_i$ (resp. $v^\diamond$) be the optimum (resp. optimal node) of $\E[\tilde{\Gamma}(v\mid S_{i-1})]$, i.e., $\OPTT_i=\E[\tilde{\Gamma}(v^\diamond\mid S_{i-1})]=\max_v{\E[\tilde{\Gamma}(v\mid S_{i-1})]}$. (Note that $v^\ast$ maximizes $\Lambda_{\R}(v)$, $v^\diamond$ maximizes $\E[\tilde{\Gamma}(v\mid S_{i-1})]$, and $v^\circ$ maximizes $\Delta(v\mid S_{i-1})$.) We first show the expected time used for generating a random \RR-set in the following lemma. 
\begin{lemma}\label{lemma:time-ept}
	For the $i$-th round of seed selection in $G_i$, the expected time complexity for generating a random \RR-set is $O\big(\frac{\OPTT_i}{\eta_i}m_i\big)$.
\end{lemma}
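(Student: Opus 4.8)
The plan is to bound the running time of \RR-set generation by the number of edge inspections performed during the stochastic reverse BFS (its \emph{width}), and then convert that combinatorial quantity into the analytic quantity $\OPTT_i$ via the binary estimator $\tilde{\Gamma}$. Generating a random \RR-set has two parts: sampling the root set $K$ (costing $O(k)$, with $\E[k]=n_i/\eta_i$) and running the reverse BFS from $K$ in $G_i$. When the BFS expands a node $u\in R$ it inspects every incoming edge of $u$ exactly once, and each node of $R$ is expanded exactly once, so the number of edge inspections is $W:=\sum_{u\in R}\indeg_i(u)$, where $\indeg_i(u)$ is the in-degree of $u$ in $G_i$. Since every non-root node of $R$ is discovered through a distinct inspected live edge, $|R|\le k+W$, and hence the whole procedure runs in $O(W+k)$ time. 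Thus it suffices to control $\E[W]$.

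First I would apply linearity of expectation over the nodes of $G_i$, writing $W=\sum_{u\in V_i}\indeg_i(u)\cdot\mathbf{1}[u\in R]$ and therefore
\begin{equation}
\E[W]=\sum_{u\in V_i}\indeg_i(u)\cdot\Pr[u\in R].
\end{equation}
The crucial observation is that the event $u\in R$ is exactly the event that the singleton $\{u\}$ intersects the \RR-set, which is precisely the event on which the binary estimator $\tilde{\Gamma}(u\mid S_{i-1})$ takes the value $\eta_i$ (and $0$ otherwise). Taking the expectation of this two-valued estimator gives the identity $\E[\tilde{\Gamma}(u\mid S_{i-1})]=\eta_i\cdot\Pr[u\in R]$, i.e.\ $\Pr[u\in R]=\E[\tilde{\Gamma}(u\mid S_{i-1})]/\eta_i$. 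Substituting yields
\begin{equation}
\E[W]=\frac{1}{\eta_i}\sum_{u\in V_i}\indeg_i(u)\cdot\E[\tilde{\Gamma}(u\mid S_{i-1})].
\end{equation}

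Next I would invoke the definition $\OPTT_i=\max_v\E[\tilde{\Gamma}(v\mid S_{i-1})]$ to bound each factor $\E[\tilde{\Gamma}(u\mid S_{i-1})]\le\OPTT_i$, together with the elementary counting identity $\sum_{u\in V_i}\indeg_i(u)=m_i$, giving $\E[W]\le\frac{\OPTT_i}{\eta_i}m_i$. To absorb the $O(k)$ sampling and bookkeeping term I would note that $\sum_{u}\E[\tilde{\Gamma}(u\mid S_{i-1})]=\eta_i\,\E[|R|]\ge\eta_i\,\E[k]=n_i$ (since $K\subseteq R$), whence $\OPTT_i\ge 1$; the residual $O(\E[k])=O(n_i/\eta_i)$ is therefore lower order and is in any case folded into the $(m+n)$ factor of the overall complexity. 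Combining, the expected generation time is $O\big(\frac{\OPTT_i}{\eta_i}m_i\big)$.

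The main obstacle — really the only subtle point — is justifying that the edge-inspection count is the right cost measure and that the randomized choice of $k$ requires no separate handling. The latter is automatic: the identity $\Pr[u\in R]=\E[\tilde{\Gamma}(u\mid S_{i-1})]/\eta_i$ is stated for the fully randomized \RR-set, so the randomness over $k$, over the root set $K$, and over the realization $\Phi$ is all collapsed into the single probability $\Pr[u\in R]$, and the randomized rounding of $k$ needs no special treatment. What remains is purely the bookkeeping of verifying $W=\sum_{u\in R}\indeg_i(u)$ and $|R|\le k+W$, so that node-visit overhead never exceeds the edge-inspection bound.
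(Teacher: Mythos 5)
Your proposal is correct and follows essentially the same route as the paper: both decompose the expected cost as $\sum_{v}\indeg(v)\cdot\Pr[v\in R]$, identify $\Pr[v\in R]$ with $\E[\tilde{\Gamma}(v\mid S_{i-1})]/\eta_i$ via the binary estimator, bound each term by $\OPTT_i/\eta_i$, and sum the in-degrees to get $m_i$. Your extra care with the $O(k)$ root-sampling overhead (via $K\subseteq R$ and $\OPTT_i\geq 1$) is a detail the paper elides, but it does not change the argument.
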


Now, we present the following lemma that gives the expected number of \RR-sets generated by \OPIMF. The proof is similar to that of \OPIMC \cite{Tang_OPIM_2018}.
\begin{lemma}\label{lemma:sample-ept}
	For the $i$-th round of seed selection in $G_i$, the expected number of \RR-sets \OPIMF generated is $O\big(\frac{\eta_i\ln{n_i}}{\varepsilon^2\OPTT_i}\big)$.\footnote{In general, it is $O\big(\frac{\eta_i\ln{({n_i}/{\varepsilon})}}{\varepsilon^2\OPTT_i}\big)$. Here, we assume that $\varepsilon\in \Omega\big(\frac{1}{\poly(n_i)}\big)$.}
\end{lemma}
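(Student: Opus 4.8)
The plan is to mirror the doubling analysis of \OPIMC~\cite{Tang_OPIM_2018}, replacing ordinary coverage by the truncated estimator built on \RR-sets. The first step is to pin down a \emph{sufficient} sample size. Since a node $v$ covers a random \RR-set with probability $\E[\tilde{\Gamma}(v\mid S_{i-1})]/\eta_i$, the expected coverage of the estimator-optimal node satisfies $\E[\Lambda_\R(v^\diamond)]=\abs{\R}\cdot\OPTT_i/\eta_i$. I would then exhibit a threshold
\[
\theta^\ast=O\!\Big(\tfrac{\eta_i\ln n_i}{\varepsilon^2\OPTT_i}\Big)
\]
with the property that once $\abs{\R}\geq\theta^\ast$, the stopping test in Line~\ref{algline:OPIM-E-stop} fires with high probability. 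The mechanism is that the explicit forms in Lines~\ref{algline:OPIM-E-lower}--\ref{algline:OPIM-E-upper} place both $\Lambda^l(v^\ast)$ and $\Lambda^u(v^\circ)$ within an additive $O(\sqrt{\Lambda_\R(v^\ast)\,a_1})$ of $\Lambda_\R(v^\ast)$, so their ratio exceeds $1-\hat\varepsilon$ as soon as $\Lambda_\R(v^\ast)=\Omega(a_1/\hat\varepsilon^2)$; using $\Lambda_\R(v^\ast)\geq\Lambda_\R(v^\diamond)$ and $\E[\Lambda_\R(v^\diamond)]=\abs{\R}\,\OPTT_i/\eta_i$ together with $a_1=O(\ln n_i)$ and $\hat\varepsilon=\Theta(\varepsilon)$ then solves for exactly the stated $\theta^\ast$.

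The second step is the expectation bookkeeping. \OPIMF begins with $\theta_\circ$ \RR-sets and doubles, so at round $t$ it holds $\theta_t=\theta_\circ 2^{t-1}$ sets, and the number it ultimately generates is $\theta_t$ if it halts in round $t$. Writing the expected count as $\sum_{t=1}^{T}\theta_t\Pr[\text{halt in round }t]$, I would split the sum at the first index $t^\ast$ with $\theta_{t^\ast}\geq\theta^\ast$. Because doubling overshoots $\theta^\ast$ by at most a factor of two, the prefix $t\leq t^\ast$ contributes at most $\theta_{t^\ast}\leq 2\theta^\ast=O(\theta^\ast)$. For the suffix $t>t^\ast$, reaching round $t$ requires the test to have failed in round $t-1$, and I would show that this failure probability decays geometrically in $\theta_t/\theta^\ast$, so that the tail $\sum_{t>t^\ast}\theta_t\Pr[\text{reach round }t]$ is a convergent series also of order $\theta^\ast$. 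Adding the two parts yields the claimed $O\!\big(\tfrac{\eta_i\ln n_i}{\varepsilon^2\OPTT_i}\big)$.

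I expect the suffix estimate to be the delicate part. The key observation is that the test can fail only when $\Lambda_\R(v^\ast)=O(a_1/\hat\varepsilon^2)$, which is forced \emph{deterministically} by the formulas for $\Lambda^l(v^\ast)$ and $\Lambda^u(v^\circ)$; yet when $\theta_t\gg\theta^\ast$ the coverage $\Lambda_\R(v^\ast)\geq\Lambda_\R(v^\diamond)$ has mean $\theta_t\,\OPTT_i/\eta_i\gg a_1/\hat\varepsilon^2$, so a Chernoff lower-tail bound drives the failure probability down like $\exp(-\Omega(\theta_t/\theta^\ast))$. Multiplying this fast decay against the geometrically growing $\theta_t$ is what prevents the rare runs that proceed all the way to the cap $\theta_{\max}$ (reached only at round $T$) from inflating the expectation beyond $O(\theta^\ast)$; this is exactly the place where the doubling schedule, rather than a single conservative sample size, earns the tight bound. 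Finally, the footnote's hypothesis $\varepsilon\in\Omega(1/\poly(n_i))$ collapses $a_1=\ln(3T/\delta)+\ln n_i$ to $O(\ln n_i)$, so that the logarithmic factors fuse into the single $\ln n_i$ appearing in the statement.
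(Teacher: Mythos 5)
Your proposal is correct and follows essentially the same route as the paper's proof: fix a sufficient threshold $\theta^\ast=O\big(\tfrac{\eta_i\ln n_i}{\varepsilon^2\OPTT_i}\big)$ at which the stopping test of Line~\ref{algline:OPIM-E-stop} fires with high probability, observe that the doubling schedule overshoots $\theta^\ast$ by at most a factor of two, and kill the tail with the geometric sum $\sum_{z\ge t}\theta_\circ 2^{z-1}\delta^{2^{z-t}}=O(\theta^\ast)$. Your middle step is in fact slightly cleaner than the paper's (which introduces four concentration events $\mathcal{E}_1$--$\mathcal{E}_4$ and an auxiliary $\varepsilon_2$): you note that the stopping rule is a deterministic threshold on $\Lambda_\R(v^\ast)$, so only a lower-tail bound on $\Lambda_\R(v^\diamond)$ is needed, but this is a refinement of the same argument rather than a different one.
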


Finally, we provide the expected time complexity of \OPIMF in the following lemma.
\begin{lemma}\label{lemma:main-lem}
	For the $i$-th round of seed selection in $G_i$, \OPIMF achieves an expected time complexity of $O\big(\frac{m_i+n_i}{\varepsilon^2}\ln{n_i}\big)$.
\end{lemma}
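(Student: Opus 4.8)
The plan is to read off the total running time as the expectation of a random sum and then combine the two quantities already isolated in Lemmas~\ref{lemma:time-ept} and \ref{lemma:sample-ept}: the expected cost of generating one random \RR-set and the expected number of \RR-sets that \OPIMF materializes. Let $X_j$ denote the (random) time spent generating the $j$-th \RR-set and let $\theta$ denote the number of \RR-sets present when the stopping test in Line~\ref{algline:OPIM-E-stop} first succeeds. Because the doubling schedule only \emph{adds} \RR-sets rather than regenerating them, the total work is exactly $\sum_{j=1}^{\theta}X_j$. Since all \RR-sets are drawn i.i.d., every $X_j$ shares the expectation $\E[X_1]=O\big(\frac{\OPTT_i}{\eta_i}m_i\big)$ supplied by Lemma~\ref{lemma:time-ept}, while $\E[\theta]=O\big(\frac{\eta_i\ln n_i}{\varepsilon^2\OPTT_i}\big)$ by Lemma~\ref{lemma:sample-ept}; a naive multiplication would already produce $O\big(\frac{m_i}{\varepsilon^2}\ln n_i\big)$.

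The main obstacle is that $\theta$ and the costs $X_j$ are \emph{not} independent, as both are functions of the same random \RR-sets, so multiplying expectations is not a priori valid. I would resolve this through the doubling structure: \OPIMF re-evaluates its stopping test only at batch boundaries, so the event $\{\theta\ge j\}$ that the $j$-th \RR-set is actually generated is determined by the \RR-sets produced strictly before it, namely $R_1,\dots,R_{j-1}$, and is therefore independent of $R_j$ and hence of $X_j$. Thus $\theta$ is a stopping time for the i.i.d.\ sequence $(X_j)$, and
\[
\E\Big[\textstyle\sum_{j=1}^{\theta}X_j\Big]=\sum_{j\ge1}\E\big[X_j\,\mathbf{1}[\theta\ge j]\big]=\sum_{j\ge1}\E[X_1]\Pr[\theta\ge j]=\E[X_1]\cdot\E[\theta],
\]
which is Wald's identity. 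Substituting the two bounds cancels $\OPTT_i$ and $\eta_i$, leaving the edge-traversal contribution $O\big(\frac{m_i}{\varepsilon^2}\ln n_i\big)$.

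It remains to recover the additive $n_i$ term, which stems from the fixed per-\RR-set overhead that Lemma~\ref{lemma:time-ept} does not dominate: before the reverse BFS, each \RR-set must draw and initialize its random root set $K$, whose expected size is $\E[k]=n_i/\eta_i$, costing $O(n_i/\eta_i)$. Multiplying this base cost by $\E[\theta]$ gives $O\big(\frac{n_i\ln n_i}{\varepsilon^2\OPTT_i}\big)$, and I would finish by observing that $\OPTT_i=\Omega(1)$: by Corollary~\ref{corollary:mrr-relative-random-general}, $\OPTT_i=\max_v\E[\tilde\Gamma(v\mid S_{i-1})]\ge(\ratio)\max_v\E[\Gamma(v\mid S_{i-1})]\ge\ratio$, since every node $v$ activates at least itself and hence $\Gamma_\phi(v\mid S_{i-1})\ge1$. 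This turns the base term into $O\big(\frac{n_i}{\varepsilon^2}\ln n_i\big)$, and summing with the edge-traversal contribution yields the claimed $O\big(\frac{m_i+n_i}{\varepsilon^2}\ln n_i\big)$. The two genuinely non-routine points are the stopping-time/Wald justification that converts a product of expectations into the expectation of a random sum, and the lower bound $\OPTT_i=\Omega(1)$, without which the $n_i$ overhead could not be decoupled from $\OPTT_i$.
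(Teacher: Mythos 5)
Your proposal is correct and follows essentially the same route as the paper, which likewise invokes Wald's equation to multiply the per-\RR-set cost of Lemma~\ref{lemma:time-ept} by the expected sample count of Lemma~\ref{lemma:sample-ept}. You additionally spell out two points the paper leaves implicit --- the stopping-time justification for Wald's identity under the doubling schedule, and the origin of the additive $n_i$ term via the $O(n_i/\eta_i)$ root-sampling overhead combined with $\OPTT_i\geq\ratio$ --- both of which are sound.
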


At the first glance, the expected time complexity of \OPIMF is counterintuitive. In particular, the expected root size of $n_i/\eta_i$ in the $i$-th round is increasing with $i$. It seems that the time complexity of \OPIMF is more likely to increase with $i$. However, Lemma~\ref{lemma:main-lem} just tells us the opposite. This is due to either the residual graph $G_i$ being reduced significantly (Lemma~\ref{lemma:time-ept}) or the \RR-set size being reduced considerably (Lemma~\ref{lemma:sample-ept}). Overall, the time complexity of \OPIMF in each round can be independent of the number of initially selected nodes. There are at most $\eta$ rounds in total, we can derive the expected time complexity of \ASM instantiated with \OPIMF.
\begin{theorem}\label{thm:trim-expected-time}
	\ASM with the instantiation of \OPIMF has an expected time complexity of $O\big(\frac{\eta \cdot(m+n)}{\varepsilon^2}\ln{n}\big)$.
\end{theorem}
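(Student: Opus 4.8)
The plan is to reduce the overall bound to two ingredients already available in the excerpt: the per-round time bound of Lemma~\ref{lemma:main-lem}, and a deterministic bound on the number of rounds, and then to take expectation carefully over the random stopping time. First I would argue that \ASM performs at most $\eta$ rounds, surely, regardless of the realization. In each round $i$ the selected seed $s_i$ is itself activated and, together with every node it influences, is removed from $G_i$ to form $G_{i+1}$ (Line~\ref{algline:ASM-stop} and the surrounding loop); hence each round activates at least one fresh node, and the seeds selected across rounds are pairwise distinct. Thus after $k$ rounds at least $k$ distinct nodes are active, so once $k=\eta$ we have $\Gamma(S)\ge\eta$ and the loop terminates. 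Writing $L$ for the (random) number of rounds, this gives $L\le\eta$ with probability $1$.

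Next I would upgrade Lemma~\ref{lemma:main-lem} to a per-round bound that is \emph{uniform} over all reachable states. Since every residual graph $G_i$ is an induced subgraph of $G$, we have $n_i\le n$ and $m_i\le m$, and therefore $\ln n_i\le\ln n$. Consequently, whatever residual graph $G_i$ is reached, the expected time of round $i$ conditioned on $G_i$ is $O\big(\frac{m_i+n_i}{\varepsilon^2}\ln n_i\big)=O\big(\frac{m+n}{\varepsilon^2}\ln n\big)$, a bound independent of $i$ and of the preceding history.

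Finally I would assemble the total. Let $T_i$ denote the time spent in round $i$, so the overall running time is $\sum_{i=1}^{L}T_i$. Because $L\le\eta$ surely, we may truncate and write $\E\big[\sum_{i=1}^{L}T_i\big]\le\sum_{i=1}^{\eta}\E\big[T_i\,\mathbf{1}[i\le L]\big]$. The crucial observation is that the event $\{i\le L\}$ -- that the algorithm has not yet stopped at the start of round $i$ -- is determined by the history through round $i-1$ (equivalently, by $G_i$). Conditioning on that history and invoking the uniform bound from the previous step gives $\E[T_i\mid\text{history}]=O\big(\frac{m+n}{\varepsilon^2}\ln n\big)$, so each summand is $O\big(\frac{m+n}{\varepsilon^2}\ln n\big)$, and summing the $\eta$ terms yields the claimed $O\big(\frac{\eta\,(m+n)}{\varepsilon^2}\ln n\big)$.

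The part requiring the most care is this last step: because $L$ is a random stopping time driven by the same randomness as the $T_i$, one cannot simply multiply $\eta$ by an ``expected per-round cost.'' The argument goes through only by exploiting both that $L\le\eta$ holds surely \emph{and} that the conditional expected cost of each round is uniformly bounded over all reachable residual graphs; the measurability of $\{i\le L\}$ with respect to the pre-round-$i$ history is exactly what lets linearity of expectation close the argument cleanly.
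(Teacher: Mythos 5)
Your proposal is correct and follows essentially the same route as the paper, which obtains the result directly from Lemma~\ref{lemma:main-lem} together with the observation that there are at most $\eta$ rounds (each round activates at least one new node). The paper leaves the stopping-time bookkeeping implicit; your explicit treatment of the measurability of $\{i\le L\}$ and the uniform conditional per-round bound is a careful filling-in of the same argument, not a different one.
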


\vspace{-2mm}
\section{Extensions}\label{sec:extension}

\OPIMF selects one node in each round until at least $\eta$ users are influenced. Therefore, the seed selection phase in \ASM instantiated by \OPIMF can be quite time consuming due to that the marginal (truncated) spread of a singleton node set is potentially small which may (i) involve in many rounds to achieve the target $\eta$, and (ii) generate a large number of \RR-sets for constructing an $\alpha$-approximate solution in each round.
To mitigate the enormous overhead, we propose a batched version of \OPIMF, referred to as \OPIMFB\footnote{\underline{TR}uncated \underline{I}nfluence \underline{M}aximization in the \underline{B}atched model.} algorithm, to accelerate the node selection process of \ASM.

\subsection{Batched Version of \OPIMF} \label{sec:trimb-alg}

\begin{algorithm}[!t]
\begin{small}
	\caption{{\OPIMFB}$(G_i,\varepsilon, b)$}
	\label{alg:OPIM-Batch}
	\KwIn{Graph $G_i$, error threshold $\varepsilon$, and batch size $b$.}
	\KwOut{A $\rho_b(\ratio)(1-\varepsilon)$-approximate solution $S_b$ with size-$b$ for truncated influence maximization, where $\rho_b=1-(1-1/b)^b$.} 
	$\delta\leftarrow {\varepsilon}/{\big(100(\ratio)(1-\varepsilon)\eta_i\big)}$, and $\hat{\varepsilon}\leftarrow{{99\varepsilon}/{(100-\varepsilon)}}$\;\label{algline:OPIM-B-variable}
	$\theta_{\max} \leftarrow {2n_i\Big(\sqrt{\ln\frac{6}{\delta}}+\sqrt{ \big(\ln\binom{n_i}{b}+\ln\frac{6}{\delta}\big)/\rho_b}\Big)^2}/{(b\hat{\varepsilon}^2)}$\;\label{algline:OPIM-B-Max}
	$\theta_{\circ}\leftarrow{\theta_{\max}\cdot b\hat{\varepsilon}^2}/{n_i}$\;\label{algline:OPIM-B-Min}
	$T\leftarrow\lceil \log_2\frac{\theta_{\max}}{\theta_{\circ}}\rceil+1$\;
	$a_1\leftarrow\ln(3T/\delta)+\ln \binom{n_i}{b}$, and $a_2\leftarrow\ln(3T/\delta)$\;
	generate a set $\R$ of $\theta_{\circ}$ random \RR-sets\;
	\For{$t\leftarrow1$ \KwTo $T$}
	{
		find $S_b\leftarrow$ Greedy$(\R)$\; \label{algline:OPIM-B-greedy}
		$\Lambda^l(S_b)\leftarrow(\sqrt{\Lambda_\R(S_b)+{2a_1}/{9}}-\sqrt{{a_1}/{2}} )^2- {a_1}/{18}$\;\label{algline:OPIM-B-lower}
		$\Lambda^u(S_b^\circ)\leftarrow(\sqrt{\Lambda_\R(S_b)/\rho_b+{a_2}/{2}}+\sqrt{{a_2}/{2}} )^2$\;\label{algline:OPIM-B-upper}
		\lIf{$\frac{\Lambda^l(S_b)}{\Lambda^u(S_b^\circ)}\geq \rho_b(1-\hat{\varepsilon})$ {\bf or} $t=T$\label{algline:OPIM-B-stop}}
		{\Return{$S_b$}}
		double the size of $\R$\;
	}
\end{small}
\end{algorithm}

Algorithm~\ref{alg:OPIM-Batch} shows the details of the \OPIMFB algorithm. \OPIMFB generalizes \OPIMF by selecting a fixed number of $b$ seeds in each round, where $b$ is an input parameter to determine the batch size. Specifically, \OPIMFB first generates a small number of random \RR-sets and then uses a greedy algorithm for maximum coverage  \cite{Vazirani_approxAlg_2003} to identify a size-$b$ seed set $S_b$ to cover \RR-sets with an approximation guarantee of $\rho_b = 1-(1-1/b)^b$ (Line~\ref{algline:OPIM-B-greedy}). If $S_b$ meets the condition (Line~\ref{algline:OPIM-B-stop}), \OPIMFB terminates; otherwise, the number of \RR-sets is doubled until a qualified $S_b$ is derived. Consequently, the approximation ratio of \OPIMFB is $\rho_b(\ratio)(1-\varepsilon)$. Note that when the batch size $b$ is $1$, \OPIMFB degenerates to \OPIMF. 

The major differences in the design between \OPIMFB and \OPIMF are as follows. First, in \OPIMFB, the definitions of variables $\theta_{\max}$ and $\theta_{\circ}$ are involved with $\rho_b$ and $b$ for generalization, as shown in Line~\ref{algline:OPIM-B-Max} and Line~\ref{algline:OPIM-B-Min}, respectively. Second, to obtain the upper bound on the coverage of the optimal solution $S_b^\circ$ in $\R$, the coverage of $\Lambda_\R(S_b)$ is divided by $\rho_b$ (Line~\ref{algline:OPIM-B-upper}). Third, the ratio in the stop condition is updated to be $\rho_b(1-\hat{\varepsilon})$ (Line~\ref{algline:OPIM-B-stop}). 

\subsection{Theoretical Analysis}\label{sec:trimb-analysis}

The theoretical analysis of \OPIMFB can be obtained by generalizing the  properties of \OPIMF.

\spara{Approximation Guarantee} To establish the overall approximation guarantee, we first analyze the approximation ratio of \OPIMFB in each round of seed selection.
\begin{lemma}\label{lem:trimb-alpha}
For the $i$-th round of seed selection in $G_i$, \OPIMFB returns a $\rho_b(\ratio)(1-\varepsilon)$-approximate solution, where $\rho_b=1-(1-1/b)^b$. 
\end{lemma}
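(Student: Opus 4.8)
The plan is to follow the analysis of \OPIMF in Lemma~\ref{lem:thrim-alpha} almost verbatim, inserting the factor $\rho_b$ precisely at the point where the greedy maximum-coverage routine replaces the single-node maximizer. Write $S_b^\circ$ for the size-$b$ set maximizing $\Delta(\cdot\mid S_{i-1})$ and $S_b^\diamond$ for the size-$b$ set maximizing $\E[\tilde{\Gamma}(\cdot\mid S_{i-1})]$. First I would record the two concentration statements that are the batched analogues of Equations~\eqref{eqn:tis-lower} and \eqref{eqn:tis-upper}. For the lower bound, since $S_b$ is data-dependent (output by Greedy on $\R$), I invoke Lemma~\ref{lemma:concentration-ept} together with a union bound over all $\binom{n_i}{b}$ candidate size-$b$ sets; this is exactly why $a_1$ carries the $\ln\binom{n_i}{b}$ term used in Line~\ref{algline:OPIM-B-lower}, and it yields $\tfrac{\eta_i\Lambda^l(S_b)}{\abs{\R}}\le \E[\tilde{\Gamma}(S_b\mid S_{i-1})]$ with high probability. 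For the upper bound, $S_b^\diamond$ is a single fixed set, so a one-shot concentration bound with the smaller constant $a_2=\ln(3T/\delta)$ suffices.

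The second step is to connect the greedy coverage to the optimal expected truncated spread, which is where $\rho_b$ enters. Writing $\Lambda_\R^{(b)}=\max_{\abs{S}=b}\Lambda_\R(S)$, the $(1-(1-1/b)^b)$-approximation of greedy maximum coverage~\cite{Vazirani_approxAlg_2003} gives $\Lambda_\R(S_b)\ge\rho_b\,\Lambda_\R^{(b)}\ge\rho_b\,\Lambda_\R(S_b^\diamond)$, hence $\Lambda_\R(S_b^\diamond)\le\Lambda_\R(S_b)/\rho_b$. Substituting this into the fixed-set concentration bound for $S_b^\diamond$ explains the division by $\rho_b$ in Line~\ref{algline:OPIM-B-upper} and delivers $\tfrac{\eta_i\Lambda^u(S_b^\circ)}{\abs{\R}}\ge\E[\tilde{\Gamma}(S_b^\diamond\mid S_{i-1})]\ge\E[\tilde{\Gamma}(S_b^\circ\mid S_{i-1})]$, where the last inequality holds because $S_b^\diamond$ maximizes $\E[\tilde{\Gamma}]$.

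The third step chains these together through the ratio bound of Corollary~\ref{corollary:mrr-relative-random-general}. Applying Equation~\eqref{eqn:mrr-relative-ratio} with $S=S_b$ and $S'=S_b^\circ$, then inserting the two concentration bounds, gives with high probability
\[
\Delta(S_b\mid S_{i-1})\ge(\ratio)\frac{\E[\tilde{\Gamma}(S_b\mid S_{i-1})]}{\E[\tilde{\Gamma}(S_b^\circ\mid S_{i-1})]}\,\Delta(S_b^\circ\mid S_{i-1})\ge(\ratio)\frac{\Lambda^l(S_b)}{\Lambda^u(S_b^\circ)}\,\Delta(S_b^\circ\mid S_{i-1}).
\]
When the stopping test of Line~\ref{algline:OPIM-B-stop} fires, $\Lambda^l(S_b)/\Lambda^u(S_b^\circ)\ge\rho_b(1-\hat{\varepsilon})$, so $\Delta(S_b\mid S_{i-1})\ge\rho_b(\ratio)(1-\hat{\varepsilon})\,\Delta(S_b^\circ\mid S_{i-1})$; converting $\hat{\varepsilon}$ back to $\varepsilon$ via the definitions in Line~\ref{algline:OPIM-B-variable} yields the claimed $\rho_b(\ratio)(1-\varepsilon)$ factor.

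Finally, I would close the argument in the expected-guarantee form required by Theorem~\ref{thm:approx} (the $\E[1/\Delta]$ formulation of the footnote to Lemma~\ref{lem:thrim-alpha}), handling the residual case $t=T$ in which the stopping test may fail: here $\theta_{\max}$ (Line~\ref{algline:OPIM-B-Max}) is chosen large enough that the bound still holds with overwhelming probability, and the $\delta$-probability failure event contributes negligibly to the expectation. The main obstacle I anticipate is the bookkeeping of this second step, namely routing the greedy factor $\rho_b$ so that it multiplies the optimum's upper bound rather than the solution's lower bound, while using the coarser union bound ($\ln\binom{n_i}{b}$) only for the data-dependent $S_b$ and the sharper single-set bound for the fixed $S_b^\diamond$. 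Keeping these two concentration regimes and the placement of $\rho_b$ mutually consistent is the crux; everything else transfers mechanically from the $b=1$ case of Lemma~\ref{lem:thrim-alpha}.
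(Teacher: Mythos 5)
Your proposal is correct and follows essentially the same route as the paper's proof: a union bound over all $\binom{n_i}{b}$ candidate sets for the data-dependent lower bound, the greedy maximum-coverage guarantee $\Lambda_\R(S_b)\ge\rho_b\Lambda_\R(S_b^\circ)$ to justify dividing by $\rho_b$ in the upper bound on the optimum's coverage, and the chain through Corollary~\ref{corollary:mrr-relative-random-general} plus the stopping condition, with the $t=T$ case absorbed by the choice of $\theta_{\max}$ and the failure event handled in the $\E[1/\Delta]$ form. Your explicit separation of $S_b^\diamond$ (maximizer of $\E[\tilde{\Gamma}]$) from $S_b^\circ$ (maximizer of $\Delta$) is slightly more careful than the paper's writeup but does not change the argument.
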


Combining Theorem~\ref{thm:approx} and Lemma~\ref{lem:trimb-alpha}, we obtain the approximation guarantee of \OPIMFB. 
\begin{theorem}\label{thm:trimb-ratio}
	\ASM with the instantiation of \OPIMFB achieves an expected approximation ratio of $\frac{(\ln \eta+1)^2}{\rho_b(\ratio)(1-\varepsilon)}$.
\end{theorem}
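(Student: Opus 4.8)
The plan is to obtain Theorem~\ref{thm:trimb-ratio} by composing the single-round guarantee of Lemma~\ref{lem:trimb-alpha} with a batched generalization of the policy-level bound of Theorem~\ref{thm:approx}. Lemma~\ref{lem:trimb-alpha} already supplies the per-round ingredient: in every round \OPIMFB returns a size-$b$ batch $S_b$ whose expected marginal truncated spread is within a factor $\alpha := \rho_b(\ratio)(1-\varepsilon)$ of that of the best size-$b$ batch, where $\rho_b = 1-(1-1/b)^b$. Thus if I can show that \emph{any} policy which at each round selects such an $\alpha$-approximate size-$b$ batch attains an approximation ratio of $\frac{(\ln\eta+1)^2}{\alpha}$ to the optimal ASM policy, then substituting $\alpha = \rho_b(\ratio)(1-\varepsilon)$ immediately yields the claimed $\frac{(\ln\eta+1)^2}{\rho_b(\ratio)(1-\varepsilon)}$ bound, completing the theorem.

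First I would set up the batched policy-level bound. Since the expected truncated spread is adaptive monotone and adaptive submodular, \ASM instantiated with \OPIMFB is an adaptive batch-greedy policy for the adaptive stochastic minimum-cost coverage problem of \cite{Golovin_adaptive_2017} with coverage target $\eta$, where each node carries unit cost so that a batch of $b$ nodes costs $b$. I would then invoke the batched form of the cost-coverage guarantee underlying Theorem~\ref{thm:approx} (the machinery of Theorem~40 in \cite{Golovin_adaptive_2017}), but with the per-round benchmark taken to be the optimal size-$b$ batch $S_b^\circ$ rather than the single best node. To accommodate the randomness of \OPIMFB, the approximate-greedy condition must be read in the reciprocal-expectation sense, namely $\E[1/\Delta(S_b\mid S_{i-1})] \leq \frac{1}{\alpha}\cdot 1/\Delta(S_b^\circ\mid S_{i-1})$, exactly as in the footnote to Lemma~\ref{lem:thrim-alpha}; this is the form that the Golovin--Krause analysis actually consumes.

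The main obstacle I anticipate is verifying that batching degrades the ratio only by the factor $1/\rho_b$ rather than by $1/b$. The naive danger is that charging cost $b$ per round while comparing against a single greedy item would leak a factor of $b$; the cure is that the correct per-round benchmark is the best size-$b$ batch, whose expected marginal gain is commensurate with its cost $b$ because adaptive submodularity forces the expected coverage deficit to contract geometrically at a rate that, \emph{per unit of cost spent}, matches the single-item rate. The greedy maximum-coverage step inside \OPIMFB recovers a $\rho_b$ fraction of that optimal batch, while the \RR-set estimation error and the truncation bias contribute the remaining $(1-\varepsilon)$ and $(\ratio)$ factors, all of which are already folded into $\alpha$ by Lemma~\ref{lem:trimb-alpha}. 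Once this per-batch contraction estimate is in hand, combining it with the integrality of the residual gap $\eta_i$ (which stays at least $1$ until termination) reproduces the $(\ln\eta+1)^2$ term precisely as in the single-node case of Theorem~\ref{thm:approx}, and the result follows by the substitution $\alpha = \rho_b(\ratio)(1-\varepsilon)$.
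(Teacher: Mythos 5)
Your proposal follows essentially the same route as the paper, which obtains Theorem~\ref{thm:trimb-ratio} simply by combining the per-round guarantee of Lemma~\ref{lem:trimb-alpha} with the policy-level bound of Theorem~\ref{thm:approx}. In fact you are more explicit than the paper about the one genuinely delicate point --- that the Golovin--Krause cost-coverage machinery must be re-run with the optimal size-$b$ batch as the per-round benchmark and unit cost replaced by cost $b$, so that only a $1/\rho_b$ (not $1/b$) factor is lost --- a step the paper leaves entirely implicit.
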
\vspace{-2mm}
\spara{Remark} Note that there exists a gap between the optimal policy in the sequential model and the optimal policy in the batched model, which is known as the {\em adaptivity gap}~\cite{Golovin_adaptive_2017}. Adaptivity gap quantifies the performance difference between the optimal adaptive policy and the optimal non-adaptive policy. To explain, a size-$b$ seed set is selected as a batch ($b \ge 1$) in \OPIMFB without observing the realization of any seed therein. This selection is an non-adaptive process compared to that of $b = 1$ in \OPIMF. As a consequence, there exists an adaptivity gap between the two algorithms if the batch size $b>1$. However, to the best of our knowledge, this adaptivity gap remains unknown in viral marketing applications, which makes it hard to quantify the difference between the optimal policy in the sequential model and that in the batched model. Meanwhile, the existing bound of adaptivity gap of $(\ratio)$ in \cite{Chen_activeLearning_2013} is not applicable to adaptive seed minimization. It holds only if the nodes in social graph $G$ are {\em independent}, which, however, is not true.

\spara{Time Complexity} The time complexity of \OPIMFB depends on three factors: (i) the time for generating a random \RR-set, (ii) the number of \RR-sets generated, and (iii) the time to derive a size-$b$ seed set. The expected time used for generating a random \RR-set is given in Lemma~\ref{lemma:time-ept}. We now show the number of \RR-sets generated.
\begin{lemma}\label{lem:trimb-sample}
	For the $i$-th round of seed selection in $G_i$, the expected number of \RR-sets \OPIMFB generates is $O\Big(\frac{\eta_i\ln{\binom{n_i}{b}}}{\varepsilon^2\OPTT_{b,i}}\Big)$, where $\OPTT_{b,i}$ denotes the maximum expected truncated spread among all the size-$b$ seed sets in $G_i$.
\end{lemma}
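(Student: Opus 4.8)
The plan is to follow the doubling-and-concentration argument used for \OPIMC\ \cite{Tang_OPIM_2018} and for the singleton case (Lemma~\ref{lemma:sample-ept}), adapted to size-$b$ batches. First I would fix the round $i$ and let $S_b^\circ = \arg\max_{S\subseteq V_i,\, |S|=b}\E[\tilde{\Gamma}(S\mid S_{i-1})]$, so that $\OPTT_{b,i}=\E[\tilde{\Gamma}(S_b^\circ\mid S_{i-1})]$. Since an \RR-set contributes $\eta_i$ to $\tilde{\Gamma}$ exactly when it is covered, for any size-$b$ set $S$ and any collection $\R$ of $\theta$ random \RR-sets we have $\E[\Lambda_\R(S)]=\theta\cdot\E[\tilde{\Gamma}(S\mid S_{i-1})]/\eta_i$; in particular the optimal expected coverage is $\theta\,\OPTT_{b,i}/\eta_i$. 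This identity converts the statement into a bound on the number of \RR-sets needed for the empirical coverages to concentrate.

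Next I would isolate a threshold $\theta^\ast$, the smallest power-of-two multiple of $\theta_\circ$ (capped at $\theta_{\max}$) for which the stopping test $\Lambda^l(S_b)/\Lambda^u(S_b^\circ)\ge \rho_b(1-\hat\varepsilon)$ is guaranteed to fire with high probability. The argument couples three facts once $\theta\ge\theta^\ast$: (i) \textsc{Greedy} achieves $\Lambda_\R(S_b)\ge \rho_b\max_{|S|=b}\Lambda_\R(S)\ge\rho_b\Lambda_\R(S_b^\circ)$ on the sampled sets; (ii) by Lemma~\ref{lemma:concentration-ept} the lower bound $\Lambda^l(S_b)$ is, with high probability, a valid lower bound on $\theta\,\E[\tilde{\Gamma}(S_b\mid S_{i-1})]/\eta_i$ that is tight up to a $(1-O(\hat\varepsilon))$ factor; and (iii) symmetrically $\Lambda^u(S_b^\circ)$ upper-bounds $\theta\,\OPTT_{b,i}/\eta_i$ up to $(1+O(\hat\varepsilon))$. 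Combining these, the ratio is at least $\rho_b(1-O(\hat\varepsilon))$, which meets the test after the constants are absorbed through the definition of $\hat\varepsilon$ in Line~\ref{algline:OPIM-B-variable}. Solving the concentration inequality for the smallest $\theta$ making the mean coverage $\theta\,\OPTT_{b,i}/\eta_i$ exceed $\Theta(\ln(\cdot)/\hat\varepsilon^2)$ yields $\theta^\ast=O\big(\tfrac{\eta_i\ln\binom{n_i}{b}}{\varepsilon^2\OPTT_{b,i}}\big)$. The $\ln\binom{n_i}{b}$ factor (rather than $\ln n_i$) enters precisely here: because $S_b$ is data-dependent, the lower-bound event must hold uniformly over all $\binom{n_i}{b}$ candidate size-$b$ sets, which is exactly why the algorithm sets $a_1=\ln(3T/\delta)+\ln\binom{n_i}{b}$.

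Finally I would convert $\theta^\ast$ into the expected sample count. Because \OPIMFB starts from $\theta_\circ$ and doubles $\abs{\R}$ each iteration, on the good event it halts the first time $\abs{\R}\ge\theta^\ast$ and so uses at most $2\theta^\ast$ \RR-sets; on the failure event it uses at most $\theta_{\max}$. A union bound over the $T=O(\log n_i)$ iterations bounds the total failure probability by $O(\delta)$, so $\E[\abs{\R}]\le 2\theta^\ast+O(\delta)\cdot\theta_{\max}$. The choice $\delta=\varepsilon/\big(100(\ratio)(1-\varepsilon)\eta_i\big)$ together with $\theta_{\max}=O(\eta_i\,\polylog(n_i)/\varepsilon^2)$ makes $\delta\,\theta_{\max}$ a lower-order term, leaving $\E[\abs{\R}]=O\big(\tfrac{\eta_i\ln\binom{n_i}{b}}{\varepsilon^2\OPTT_{b,i}}\big)$. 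I expect the main obstacle to be step two: showing that the greedy $\rho_b$-approximation, which only controls the \emph{empirical} coverage $\Lambda_\R(S_b)$, still drives the \emph{ratio} $\Lambda^l(S_b)/\Lambda^u(S_b^\circ)$ past $\rho_b(1-\hat\varepsilon)$ once $\theta\ge\theta^\ast$, while simultaneously keeping the uniform union bound over all $\binom{n_i}{b}$ sets from inflating $\theta^\ast$ beyond the claimed order. Verifying that the failure contribution $\delta\,\theta_{\max}$ is genuinely dominated relies on the explicit constants chosen in Lines~\ref{algline:OPIM-B-variable}--\ref{algline:OPIM-B-Min}.
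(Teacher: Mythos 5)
Your overall strategy is the same one the paper uses (the paper does not write out a separate proof of this lemma; it is obtained by generalizing the proof of Lemma~\ref{lemma:sample-ept} from singletons to size-$b$ sets), and your first two stages are sound: the identity $\E[\Lambda_\R(S)]=\abs{\R}\cdot\E[\tilde{\Gamma}(S\mid S_{i-1})]/\eta_i$, the threshold $\theta^\ast=\Theta\big(\tfrac{\eta_i\ln\binom{n_i}{b}}{\varepsilon^2\OPTT_{b,i}}\big)$, the role of the greedy $\rho_b$-guarantee in relating $\Lambda_\R(S_b)$ to $\Lambda_\R(S_b^\circ)$, and the observation that the $\ln\binom{n_i}{b}$ term comes from making the lower-bound event uniform over all size-$b$ candidates (which is exactly why $a_1$ contains $\ln\binom{n_i}{b}$) all match the paper's argument.

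The gap is in your final accounting step. The bound $\E[\abs{\R}]\le 2\theta^\ast+O(\delta)\cdot\theta_{\max}$ is valid but does not yield the claim: with $\delta=\Theta(\varepsilon/\eta_i)$ and $\theta_{\max}=\Theta\big(\tfrac{n_i\ln\binom{n_i}{b}}{b\varepsilon^2}\big)$ you get $\delta\,\theta_{\max}=\Theta\big(\tfrac{n_i\ln\binom{n_i}{b}}{b\,\eta_i\,\varepsilon}\big)$, and the ratio of this to $\theta^\ast$ is $\Theta\big(\tfrac{n_i\,\varepsilon\,\OPTT_{b,i}}{b\,\eta_i^2}\big)$, which is $\gg 1$ whenever $\OPTT_{b,i}=\Theta(\eta_i)$ and $\eta_i\ll\varepsilon n_i/b$ --- a perfectly realistic regime (small target, one highly influential seed set). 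So the failure term is \emph{not} lower order in general. The paper's proof of Lemma~\ref{lemma:sample-ept} avoids this by never invoking $\theta_{\max}$ in the expectation: it shows via Lemma~\ref{lemma:concentration-additive} that with $c\theta^\ast$ samples the stopping test fails with probability at most $\delta^c$ (doubly-exponentially decaying along the doubling schedule), so the expected number of samples generated past $\theta^\ast$ is $\sum_{z\ge t}\theta_\circ 2^{z-1}\delta^{2^{z-t}}\le 4\theta^\ast$, giving $\E[\abs{\R}]\le 5\theta^\ast$ outright. You need this refined tail argument (or an equivalent one) to close the proof; a single union bound over the $T$ iterations controls the \emph{correctness} failure probability, but it is too coarse to control the \emph{stopping time}.
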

On the other hand, the greedy algorithm for identifying the size-$b$ seed set runs in time linear to the total size of its input \cite{Vazirani_approxAlg_2003}, \ie~$\sum_{R \in \R}\abs{R}$. Meanwhile, the total number of \RR-sets examined in all the iterations is within twice of that in the last iteration. According to Wald's equation \cite{Wald_equation_1947}, the expected time complexity of the greedy procedure is $O(\E[\abs{\R}]\cdot\E[\abs{R}])$, which is dominated by that for generating \RR-sets. Consequently, by Lemma~\ref{lemma:time-ept} and Lemma~\ref{lem:trimb-sample}, the expected time used in the $i$-th round of \OPIMFB is $O\big(\tfrac{b(m_i+n_i)\ln n_i}{\varepsilon^2}\big)$.
There are at most $O(\eta/b)$ rounds in total. Based on the analysis above, the expected time complexity of \OPIMFB is given in the following theorem.
\begin{theorem}\label{thm:trimb-expected-time}
	\ASM with the instantiation of \OPIMFB achieves an expected time complexity of $O\big(\frac{\eta \cdot (m+n)}{\varepsilon^2}\ln{n}\big)$.
\end{theorem}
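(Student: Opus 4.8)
The plan is to obtain the overall complexity by multiplying the expected per-round cost of \OPIMFB by the total number of rounds, and then to verify that the batch-size factor $b$ cancels in the product. Essentially all of the analytical work has already been carried out: Lemma~\ref{lemma:time-ept} and Lemma~\ref{lem:trimb-sample} together furnish the per-round expected cost $O\big(\tfrac{b(m_i+n_i)\ln n_i}{\varepsilon^2}\big)$, and the preceding discussion bounds the number of rounds by $O(\eta/b)$. What remains is a bounding-and-summation argument.

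First I would make the per-round bound uniform over all rounds. Since every residual graph $G_i$ is a vertex-induced subgraph of $G$, we have $m_i \le m$, $n_i \le n$, and hence $\ln n_i \le \ln n$. Substituting these into the per-round expression yields a uniform deterministic upper bound
\[
C := O\Big(\tfrac{b(m+n)\ln n}{\varepsilon^2}\Big)
\]
on the expected running time of round $i$, regardless of which residual graph $G_i$ arises. Because this bound references no structural quantity of $G_i$ beyond its size, it continues to hold conditionally on everything revealed in rounds $1,\dots,i-1$; by the tower property the unconditional expected cost of each round is therefore also at most $C$.

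Next I would bound the number of rounds. In \OPIMFB each round inserts a batch of $b$ fresh seed nodes, drawn from the inactive set $V_i$, and these seeds are themselves newly activated. Hence each round increases the number of activated nodes by at least $b$, so the stopping condition $\Gamma(S)\ge\eta$ is met after at most $\lceil\eta/b\rceil = O(\eta/b)$ rounds; this is a deterministic almost-sure cap on the random number of rounds $L$.

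Finally I would combine the two. Writing the total cost as $\sum_{i=1}^{L}T_i$, where $T_i$ is the cost of round $i$ (and $T_i := 0$ for $i>L$), linearity of expectation over the deterministically capped index range gives
\[
\E\Big[\textstyle\sum_{i=1}^{L}T_i\Big]=\sum_{i=1}^{\lceil\eta/b\rceil}\E[T_i]\le \lceil\eta/b\rceil\cdot C = O\Big(\tfrac{\eta}{b}\Big)\cdot O\Big(\tfrac{b(m+n)\ln n}{\varepsilon^2}\Big)=O\Big(\tfrac{\eta\,(m+n)\ln n}{\varepsilon^2}\Big),
\]
so the factor $b$ cancels and the claimed complexity follows. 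The step I expect to require the most care is this summation: the residual graph $G_i$, and thus $T_i$, depends on the random realizations observed in earlier rounds, and $L$ is itself random. The clean route around this is precisely the structure-free ceiling $C$ together with the deterministic cap $L\le\lceil\eta/b\rceil$, which let me sum the conditional per-round expectations one round at a time without having to track the joint distribution of the sequence $(G_i)_i$.
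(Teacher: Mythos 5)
Your proposal is correct and follows essentially the same route as the paper: the per-round expected cost $O\big(\tfrac{b(m_i+n_i)\ln n_i}{\varepsilon^2}\big)$ obtained from Lemma~\ref{lemma:time-ept} and Lemma~\ref{lem:trimb-sample} via Wald's equation, multiplied by the $O(\eta/b)$ cap on the number of rounds, with the factor $b$ cancelling. Your explicit handling of the conditional expectations and the random round count via a structure-free uniform ceiling and a deterministic cap is a welcome tightening of a step the paper leaves implicit, but it is not a different argument.
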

\section{Additional Related Work}\label{sec:related-work}

In Section~\ref{sec:existing-solu}, we have discussed the work \cite{Vaswani_adapIM_2016} most related to ours. In what follows, we survey other relevant work in the literature. 

Influence maximization, as the dual problem of seed minimization, seeks to identify a set of $k$ seed nodes with the maximum expected spread. Domingos and Richardson \cite{Domingos_maxInfluence_2001,Richardson_viralMarketing_2002} are the first to study viral marketing from an algorithmic perspective. After that, Kempe \etal \cite{Kempe_maxInfluence_2003} formulate the influence maximization problem and propose a greedy algorithm that returns $(\ratio-\epsilon)$-approximation for several influence diffusion models, by utilizing Monte Carlo simulations. Subsequently, there has been a large body of research on improved algorithms for influence maximization \cite{Kim_IM_2013,Chen_LDAG_2010,Chen_MIA_2010,Chen_degreeDiscount_2009,Goyal_infMax_2011,Jung_IRIE_2012,Wang_community_2010, Leskovec_CELF_2007,Kempe_maxInfluence_2003,Kempe_influence_2005,Borgs_RIS_2014,Tang_IMM_2015,Tang_OPIM_2018,Tang_reverse_2014,Nguyen_DSSA_2016,Huang_SSA_2017,Tang_infMax_2017,Tang_IMhop_2018,Galhotra_EaSyIM_2016,Arora_benchmark_2017,Cheng_IMRank_2014,Cohen_SKIM_2014,Goyal_CELF_2011,Zhou_UBLF_2013}. Among them, some recent work \cite{Borgs_RIS_2014,Huang_SSA_2017,Nguyen_DSSA_2016,Tang_IMM_2015,Tang_OPIM_2018,Tang_reverse_2014} focuses on algorithms that ensure $(\ratio-\varepsilon)$-approximations by utilizing the reverse influence sampling technique \cite{Borgs_RIS_2014}. 

Seed minimization, which has mainly been studied from the non-adaptive perspective, aims at finding a minimum-size set of seed nodes to achieve a given threshold of expected spread. Chen~\cite{Chen_MINSeed_2009} investigates seed minimization under a variant of the linear threshold model, where each node is assigned with a fixed threshold. Chen shows that the problem cannot be approximated within a ratio of $O(2^{\log^{1-\epsilon}n})$ unless $\NP\subseteq\DTIME(n^{\polylog(n)})$ as the expected spread function under the fixed threshold model is not submodular. After that, Long and Wong \cite{Long_MINSeed_2011} study seed minimization under the widely used independent cascade and linear threshold models. Goyal~\etal~\cite{Goyal_MINTSS_2013} provide a bi-criteria approximation algorithms for seed minimization. Zhang~\etal~\cite{Zhang_SMPCG_2014} then improve the theoretical results by removing the bi-criteria restriction. However, the requirements of these algorithms are either impractical or extremely stringent, which makes these algorithms vastly ineffective in practice. Han~\etal~\cite{Han_SM_2017} propose the \TEUC algorithm for non-adaptive seed minimization by utilizing reverse influence sampling for estimating the spreads of nodes. However, the expected time complexity of the algorithm is unknown, and its worst-case time complexity is prohibitively large. As we show in the experiments, our adaptive algorithm is more effective than these non-adaptive algorithms in terms of the number of seed nodes required. 

\eat{
Golovin and Krause~\cite{Golovin_adaptive_2017} study the adaptive stochastic minimum cost cover problem, which is quite similar to adaptive seed minimization. However, the exact coverage of each node is assumed accessible in~\cite{Golovin_adaptive_2017}. In contrast, the expected spread of any seed set in adaptive seed minimization cannot be exactly computed in polynomial time (under both the independent cascade \cite{Chen_MIA_2010} and linear threshold \cite{Chen_LDAG_2010} models). Vaswani and Lakshmanan \cite{Vaswani_adapIM_2016} design an algorithm to address the adaptive seed minimization problem. However, 
the algorithm demands the expected spread of any node set $S$ on a random residual graph to be computed within a small relative error. This requirement is so stringent such that no available methods could satisfy it without incurring prohibitive computing overhead. In addition, by attempting to select the node with the largest marginal influence spread, the algorithm in \cite{Vaswani_adapIM_2016} cannot provide theoretical guarantees, as analyzed in Section~\ref{sec:existing-solu}. Different from \cite{Vaswani_adapIM_2016}, our \ASM framework aims to iteratively find the seed node with the maximum marginal truncated spread using the \OPIMF algorithm. \OPIMF utilizes a novel sampling method that generates \RR-sets to estimate the truncated spread of a node set accurately and efficiently. In particular, \OPIMF is guaranteed to return a $(\ratio-\varepsilon)$-approximate seed node in each round, which ensures the non-trivial theoretical guarantee of \ASM. 
}

\eat{
\note[Laks]{This one line dismissal is not enough. There needs to be a deeper discussion of why techniques for those problems do not work for ASM. It is fundamental enough that you should bring it up early in the paper and not just relegate to related work. The point is, in the non-adaptive case, the same greedy algorithm with a slight modification can be used for mintss. In the adaptive case, what prevents an algorithm developed for adaptive IM such as [24], from being modified to work for ASM? This needs to be carefully discussed. 

BTW, there is one more motivational piece to this paper. In case of IM, going adaptive does not really boost the spread that much, also confirmed by the experiments in [24]. However, it has been observed that going adaptive provides a significant advantage for ASM. This needs to be emphasized. 

Finally, the relationship between TRIM and the algorithm for adaptive IM in [24] should also be clarified.}
}


Finally, there is a series of recent work \cite{Vaswani_adapIM_2016,Horel_adapSeeding_2015,Badanidiyuru_adapSeeding_2016,Seeman_adapSeeding_2013,Han_AIM_2018} that focuses on adaptive influence maximization. Recall that, as analyzed in Section~\ref{subsec:framework}, to construct approximate solutions for adaptive seed minimization, some approximation algorithms for truncated influence maximization are required. However, the algorithms for adaptive influence maximization generally target at maximizing the influence spread in each round, which cannot provide theoretical guarantees for truncated influence maximization, as we point out in Section~\ref{sec:naive-rrset}. As a consequence, techniques developed for adaptive influence maximization are inapplicable to the adaptive seed minimization problem. In addition, in the case of influence maximization, going adaptive does not really boost the spread significantly, as confirmed by the experiments in \cite{Han_AIM_2018}. However, it shall be observed in our experiments that going adaptive provides a substantial advantage for seed minimization.   

\begin{figure}[!t]
	\centering
	\vspace{-0.15in}
	\subfloat[NetHEPT $\&$ Youtube]{\includegraphics[width=0.45\linewidth]{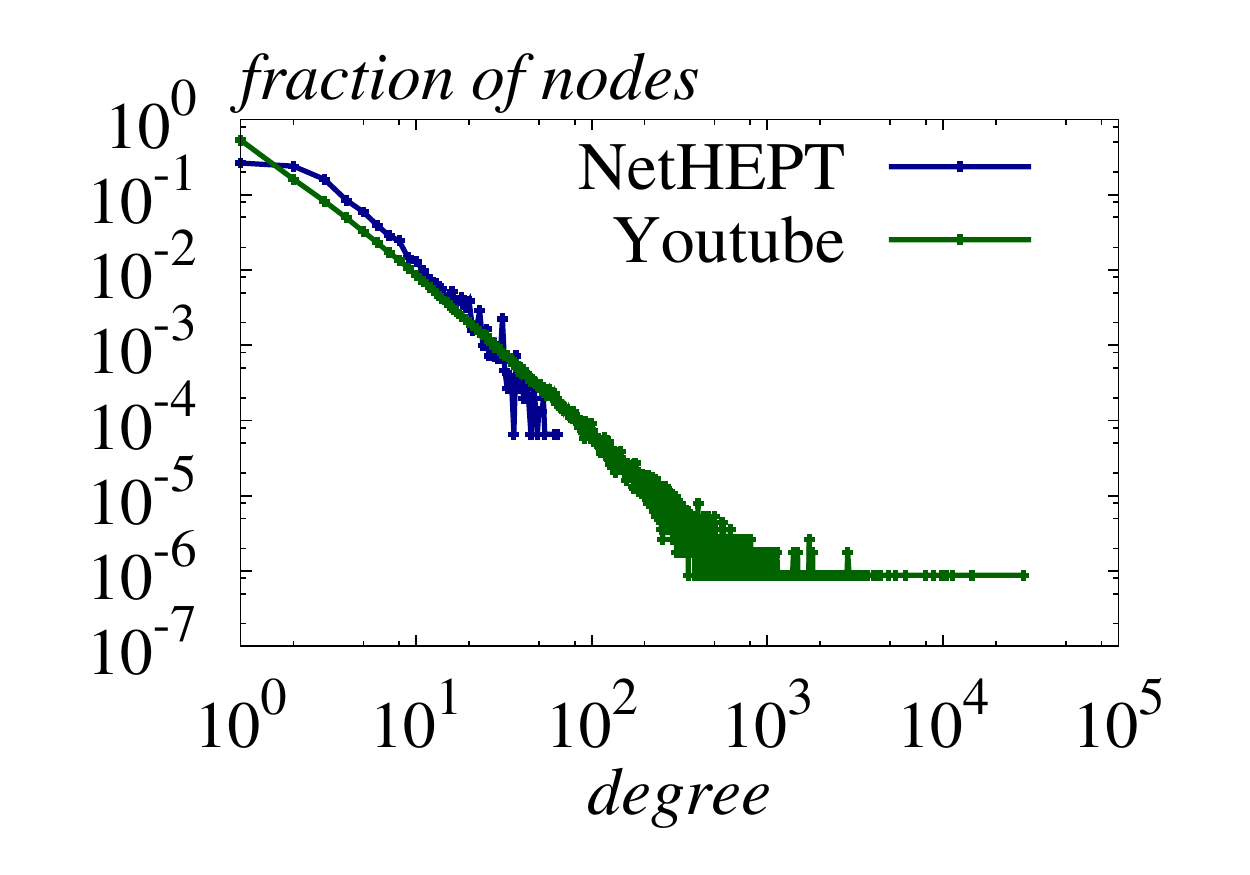}\label{subfig:hep-you}}\hfill
	\subfloat[Epinions $\&$ LiveJournal]{\includegraphics[width=0.45\linewidth]{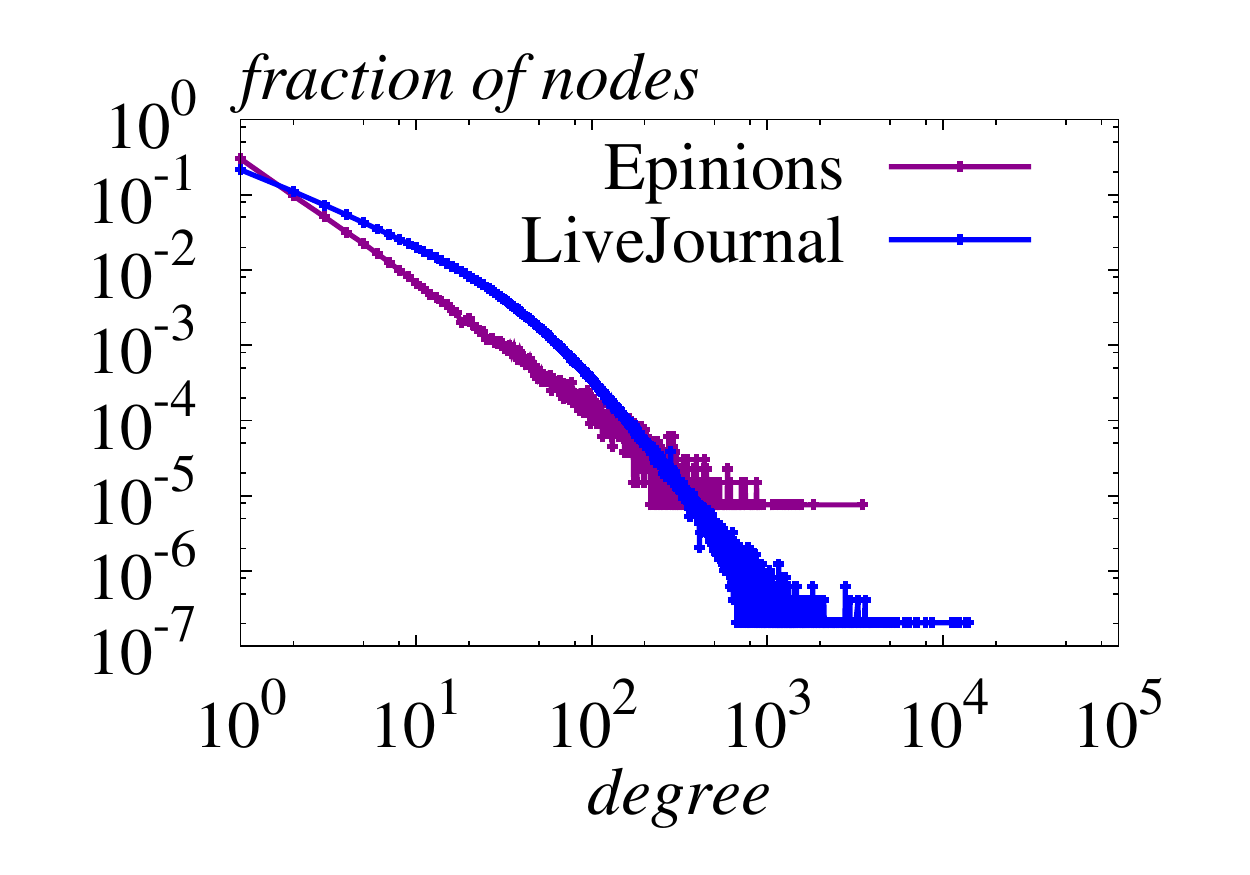}\label{subfig:epi-live}}
	\caption{Degree distribution of tested datasets.} \label{fig:Deg-dis}
	\vspace{2mm}
\end{figure}
\section{Experiments}\label{sec:experiment}

This section evaluates the performance of the proposed algorithms against the state of the art. All the experiments are conducted on a Linux machine with an Intel Xeon 2.6GHz CPU and 64GB RAM. For fair comparison, we first randomly generate $20$ possible realizations for each dataset, and then measure the performance of each algorithm on those $20$ realizations and report the average performance.

\subsection{Experimental Setting}\label{sec:exp-setting}

\begin{table}[!t]
	\centering
	\captionsetup{aboveskip=6pt,belowskip=0pt}
	\caption{Dataset details. ($\boldsymbol{\textrm{K}=10^3, \textrm{M}=10^6}$)} \label{tbl:dataset}
	\vspace{-1mm}
	\setlength{\tabcolsep}{0.3em} 
	\begin{small}
		\begin{tabular} {c|rrrcc}
			\hline
			{\bf Dataset} & \multicolumn{1}{c}{$\boldsymbol{n}$} & \multicolumn{1}{c}{$\boldsymbol{m}$} & \multicolumn{1}{c}{\bf Type}  &
			\multicolumn{1}{c}{\bf Avg.\ deg.} & \multicolumn{1}{c}{\bf  LWCC size}  \\ \hline
			{NetHEPT}       & 15.2K			&  31.4K		& 	undirected		 &	4.18        & 6.80K \\ 		
			{Epinions}		& 132K			&  841K			&  	directed		 &	13.4        & 119K  \\ 
			{Youtube}		& 1.13M			&  2.99M		&  	undirected		 &	5.29        & 1.13M \\ 
			{LiveJournal}   & 4.85M			&  69.0M		&  	directed		 &	28.5        & 4.84M \\ \hline
		\end{tabular}
	\end{small}
\vspace{2mm}
\end{table}

\vspace{-2mm}
\spara{Datasets} The experiments are conducted on four datasets, i.e., {NetHEPT}, {Epinions}, {Youtube}, and {LiveJournal}. {NetHEPT \cite{Chen_degreeDiscount_2009}} represents the academic collaboration networks of "High Energy Physics - Theory" area. The rest of the three are real-life social networks from~\cite{Leskovec_SNAP_2014}. Table~\ref{tbl:dataset} summarizes the details of the four datasets. Note that an undirected edge is transformed into two directed edges. There does exist any isolated node in the four tested datasets. Furthermore, the number of nodes in the largest weakly connected component (LWCC) indicates that nodes are highly interconnected, especially for the three social networks. As shown in \figurename~\ref{fig:Deg-dis}, all the four datasets have a power law degree distribution. The largest dataset that has been used for adaptive seed minimization in the literature contains $75k$ nodes and $500k$ edges~\cite{Vaswani_adapIM_2016}, which is far smaller than {LiveJournal}. To the best of our knowledge, {LiveJournal} with millions of nodes and edges is the largest dataset ever tested in adaptive seed minimization experiments.

\begin{figure*}[!t]
	\centering
	\includegraphics[height=9.5pt]{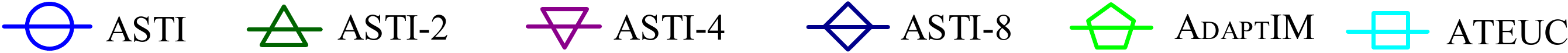}\vspace{-0.2in}\\
	\subfloat[NetHEPT]{\includegraphics[width=0.23\linewidth]{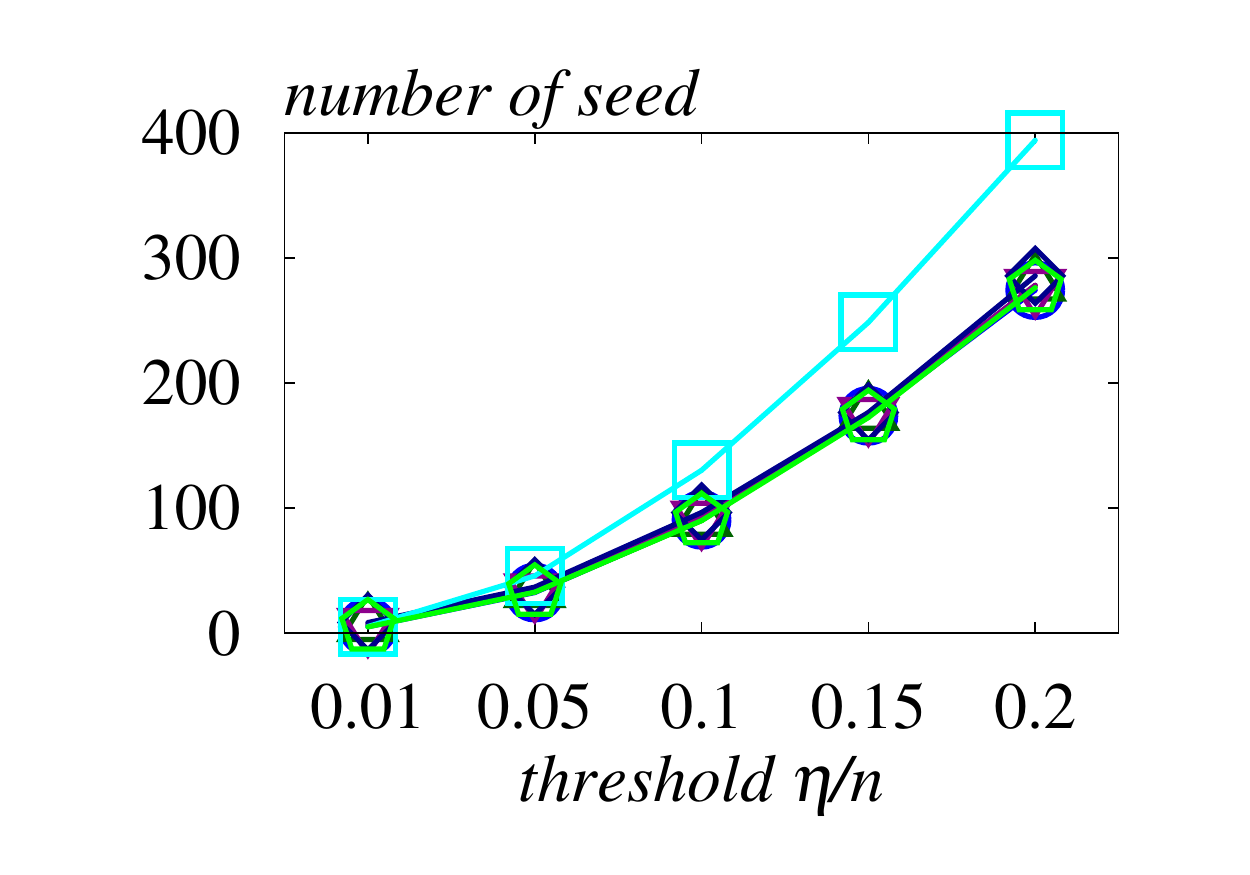}\label{subfig:NetHEPT-seed-ic}}\hfill
	\subfloat[Epinions]{\includegraphics[width=0.23\linewidth]{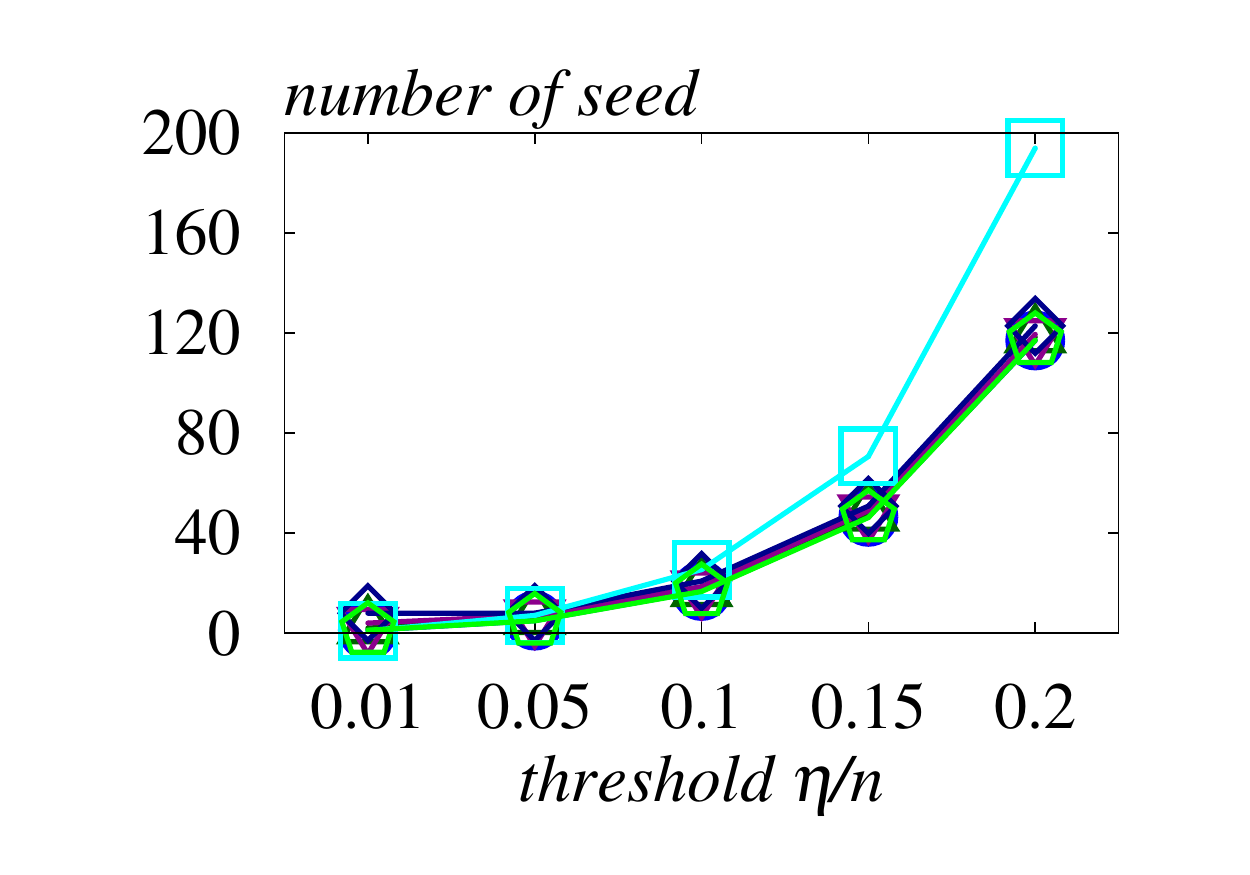}\label{subfig:Epinions-seed-ic}}\hfill
	\subfloat[Youtube]{\includegraphics[width=0.23\linewidth]{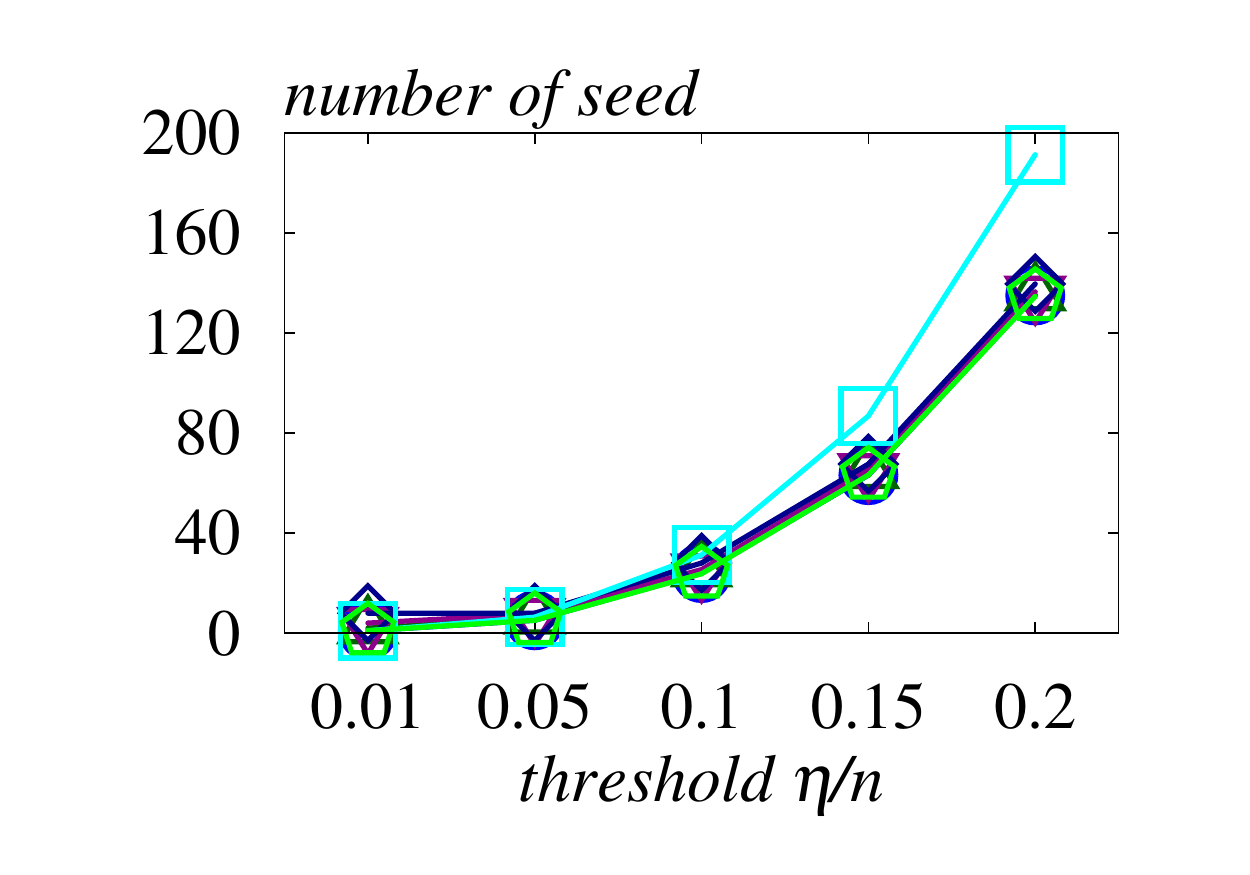}\label{subfig:Youtube-seed-ic}}\hfill
	\subfloat[LiveJournal]{\includegraphics[width=0.23\linewidth]{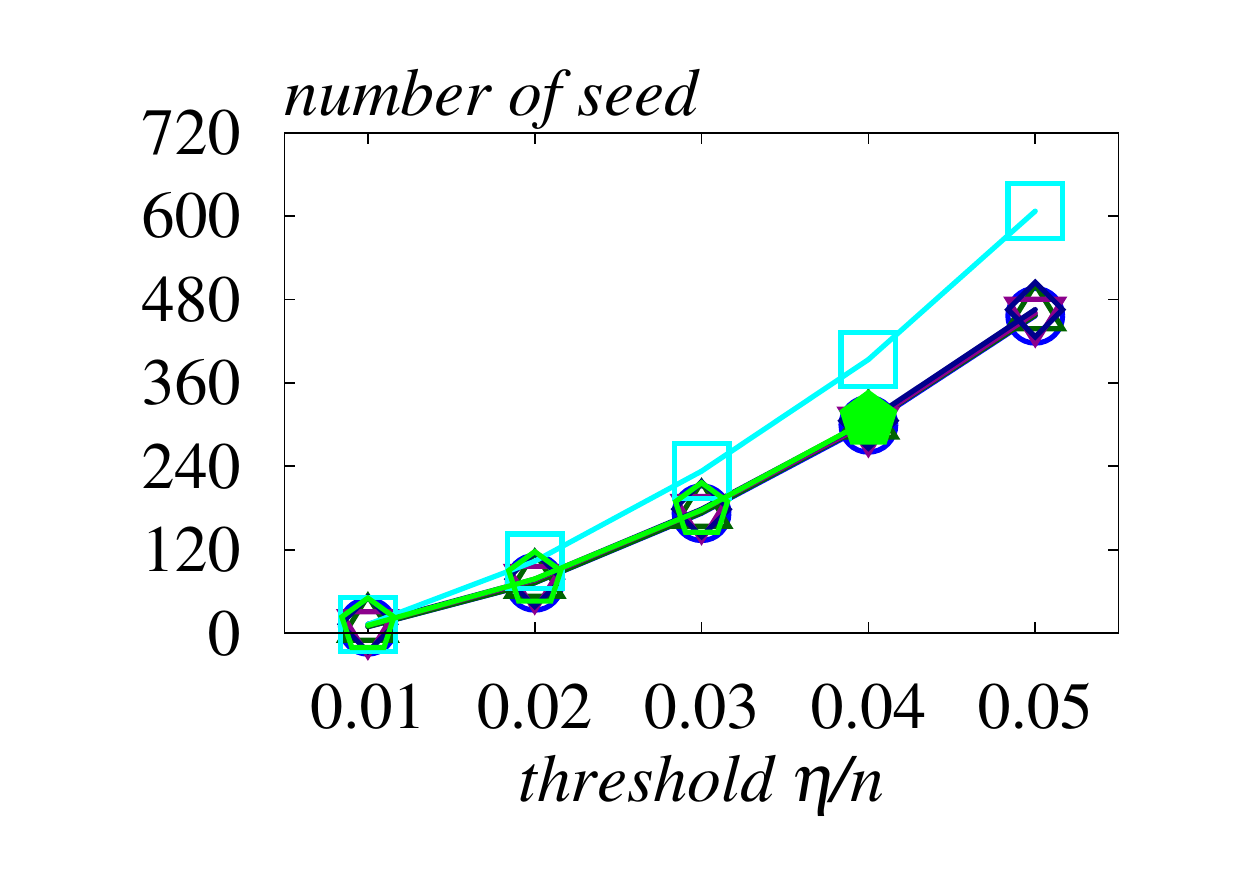}\label{subfig:LiveJournal-seed-ic}}
	\caption{Number of seed nodes vs. threshold under the IC model.}\label{fig:seed-ic}
\end{figure*}

{\spara{Algorithms} We evaluate six algorithms: \ASM, \ASMT, \ASMF, \ASME, \AdaptSM and \TEUC~\cite{Han_SM_2017}. \ASMB is \ASM instantiated by \OPIMFB with the batch sizes of $b$. (Note that \ASM is the version with a batch size of $1$.) \AdaptSM is modified from the \AdaptIM method proposed in \cite{Han_AIM_2018} for the adaptive influence maximization problem. It iteratively runs a non-adaptive influence maximization algorithm (\ie~EPIC~\cite{Han_AIM_2018}) to select the node that maximizes the expected marginal influence spread on the residual graphs, until the desired threshold is reached. \AdaptSM differs from our \ASM algorithm in that it greedily selects the node to maximize the influence spread instead of the truncated influence spread. The batch size of \AdaptSM is set to $1$ by default. As introduced in Section~\ref{sec:related-work}, \TEUC is the state of the art for the non-adaptive seed minimization problem. By comparing \ASM with \TEUC, we aim to prove the advantage of adaptivity over non-adaptivity in terms of the effectiveness. Meanwhile, three batched algorithms, \ie \ASMT, \ASMF, \ASME, are compared with both \ASM and \TEUC to study how the batch size would affect the efficiency and effectiveness. For \AdaptSM, we obtain the source code of \AdaptIM from the authors~\cite{Han_AIM_2018} with some necessary modifications (e.g., stop condition). For the other five algorithms, we implement them in C++ strictly following the algorithm description and compile them with the same optimization options.}

\spara{Parameter Settings} In our experiments, all the algorithms are tested under both the \textit{Independent Cascade (IC)} model and the \textit{Linear Threshold (LT)} model. Following the common setting in the literature \cite{Tang_reverse_2014,Arora_benchmark_2017}, we set the approximation parameter $\varepsilon=0.5$ for the five adaptive algorithms. For those parameters in \TEUC, we use the values recommended in~\cite{Han_SM_2017}. For each dataset, we set the edge probability $p(\langle u,v\rangle)=\frac{1}{\indeg_v}$ where $\indeg_v$ is the in-degree of node $v$.  

The performance metrics measured include the number of seeds selected and the corresponding running time. To better understand the performance of the algorithms, we design the {\em large $\eta$ setting} of the threshold for {NetHEPT}, {Epinions}, and {Youtube}, \ie $\frac{\eta}{n}=\{0.01, 0.05, 0.1, 0.15, 0.2\}$, where $n$ is the number of nodes in the social network. Observing that around $2K$ nodes are required on {LiveJournal} under the {\em large $\eta$ setting} which is not convenient for exhibition, we thus use a tailored {\em small $\eta$ setting}, \ie $\frac{\eta}{n}=\{0.01, 0.02, 0.03, 0.04, 0.05\}$ for {LiveJournal}.

\vspace{-2mm}
\subsection{Results under the IC model}\label{sec:exp-result-ic}
\vspace{-2mm}

\spara{Seed Size vs. Threshold} \figurename~\ref{fig:seed-ic} reports the number of seeds selected by the six algorithms for different thresholds $\eta$ under the IC model. As can be seen, \ASM selects far fewer seed nodes than \TEUC does, especially when the threshold $\eta$ becomes larger. In general, \TEUC selects around $30\%$--$40\%$ more nodes than \ASM does on all the four datasets. In particular, with a threshold $\eta/n=0.2$ on dataset {Epinions}, \ASM selects $116.95$ seed nodes on average while \TEUC needs $193.8$ seed nodes (\ie~$65.7\%$ more nodes). For the sake of clarity, Table~\ref{tbl:impro-ratio} shows the exact improvement ratio of \ASM over \TEUC on the number of seed nodes for the corresponding five thresholds under both the IC and LT model. Note that there exist many points (indicated by {\bf N/A}) where the actual number of nodes activated by the seed set returned by \TEUC does not reach the required threshold under some realizations. This is because \TEUC selects a node set $S$ such that $\E[I(S)]\geq \eta$ but may influence fewer than $\eta$ nodes under some realizations, whereas our adaptive algorithms always ensure that at least $\eta$ nodes are influenced by the returned node set under every realization. We shall explore this in more detail in Section~\ref{sec:spread-distribution}. These facts support the superiority of adaptive algorithms over non-adaptive algorithms. We also observe that the number of nodes selected by \AdaptSM is close to that of \ASM, which indicates that \AdaptSM is \textit{empirically} effective in seed minimization. However, it does not provide any approximation guarantees in terms of the number of nodes selected. Another interesting observation is that \ASMT, \ASMF, and \ASME slightly increase the number of seed nodes selected compared with \ASM and still select nodes far less than \TEUC does for most of the cases. This confirms that adaptive algorithms by utilizing the information of partial realizations are more effective than non-adaptive algorithms.
\begin{table}[!t]
	\centering
	\captionsetup{aboveskip=6pt,belowskip=0pt}
	\begin{small}
		\caption{Improvement ratio of \ASM over \TEUC}\label{tbl:impro-ratio}
		\begin{tabular}{l|l|ccccc}\hline
			\multicolumn{1}{c|}{}       & $\eta/n$    & 0.01                       & 0.05   & 0.1    & 0.15   & 0.2    \\ \hline
			\multirow{4}{*}{\rotatebox{90}{\bf IC Model}}  & NetHEPT     & {\bf N/A}                        & 40.8\% & 43.8\% & 43.0\% & 43.7\% \\
			& Epinions    & {\bf N/A}                        & {\bf N/A} & 50.7\% & {\bf N/A} & {\bf 65.7\%} \\
			& Youtube     & 0.0\%                      & 24.3\% & {\bf N/A} & 37.5\% & 41.7\% \\
			& LiveJournal & {\bf N/A}                     & 43.0\% & 34.9\% & {\bf N/A} & 33.0\% \\ \hline
			\multirow{4}{*}{\rotatebox{90}{\bf LT Model}} & NetHEPT     & {\bf N/A} & {\bf N/A} & {\bf N/A} & 44.3\% & 47.5\% \\
			& Epinions    & {\bf N/A} & {\bf N/A} & {\bf N/A} & {\bf N/A} & {\bf N/A} \\
			& Youtube     & 0.0\%                  & 39.5\% & 54.1\% & {\bf N/A} & 47.9\% \\
			& LiveJournal & {\bf N/A}                     & {\bf N/A} & {\bf N/A}    & {\bf N/A}    & {\bf N/A}  \\ \hline
		\end{tabular}
		\begin{tablenotes}
			\small
			\item {\bf N/A}: \TEUC does not meet the threshold for some realizations.
		\end{tablenotes}
	\end{small}
	    \vspace{1.7mm}
\end{table}

\begin{figure*}[!t]
	\centering
		\includegraphics[height=9.5pt]{lgd}\vspace{-0.2in}\\
	\subfloat[NetHEPT]{\includegraphics[width=0.23\linewidth]{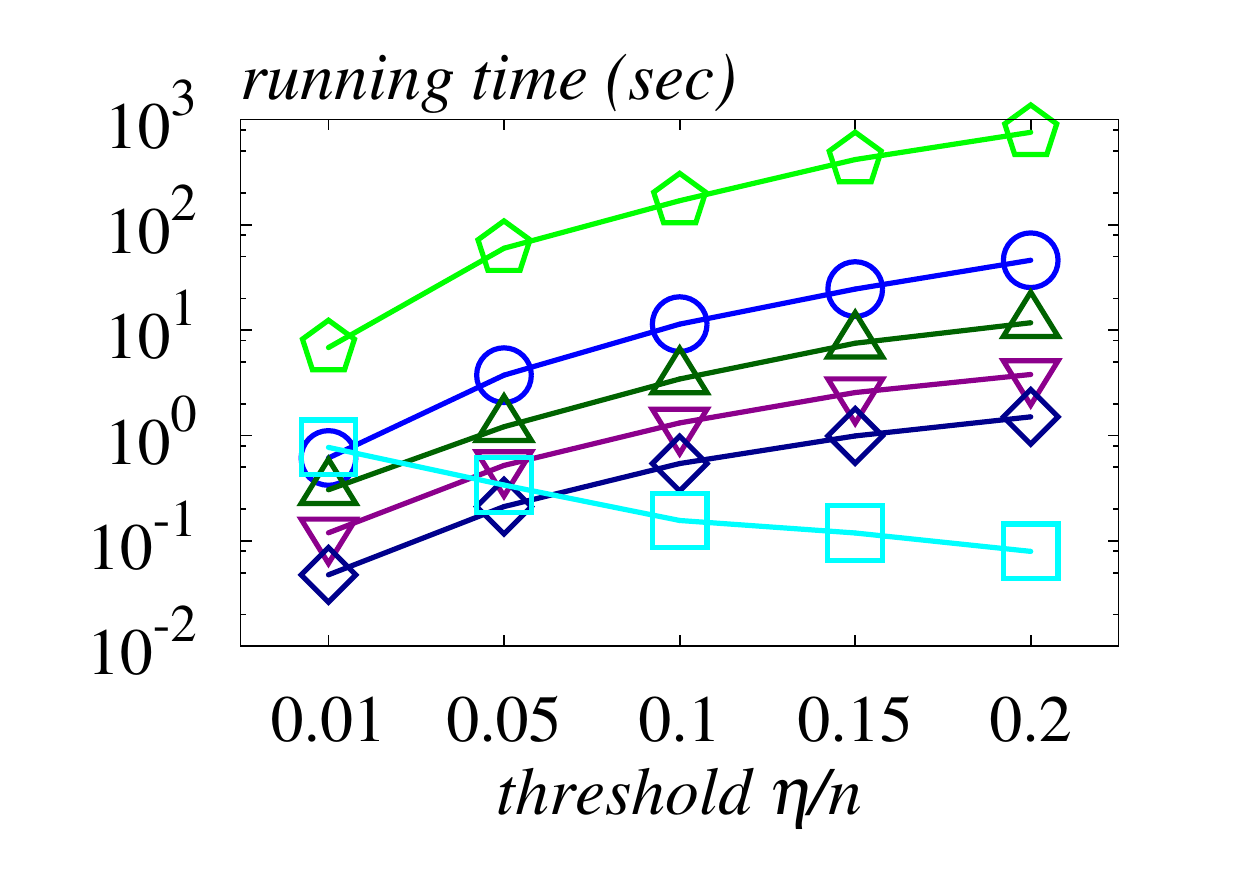}\label{subfig:NetHEPT-runtime-ic}}\hfill
	\subfloat[Epinions]{\includegraphics[width=0.23\linewidth]{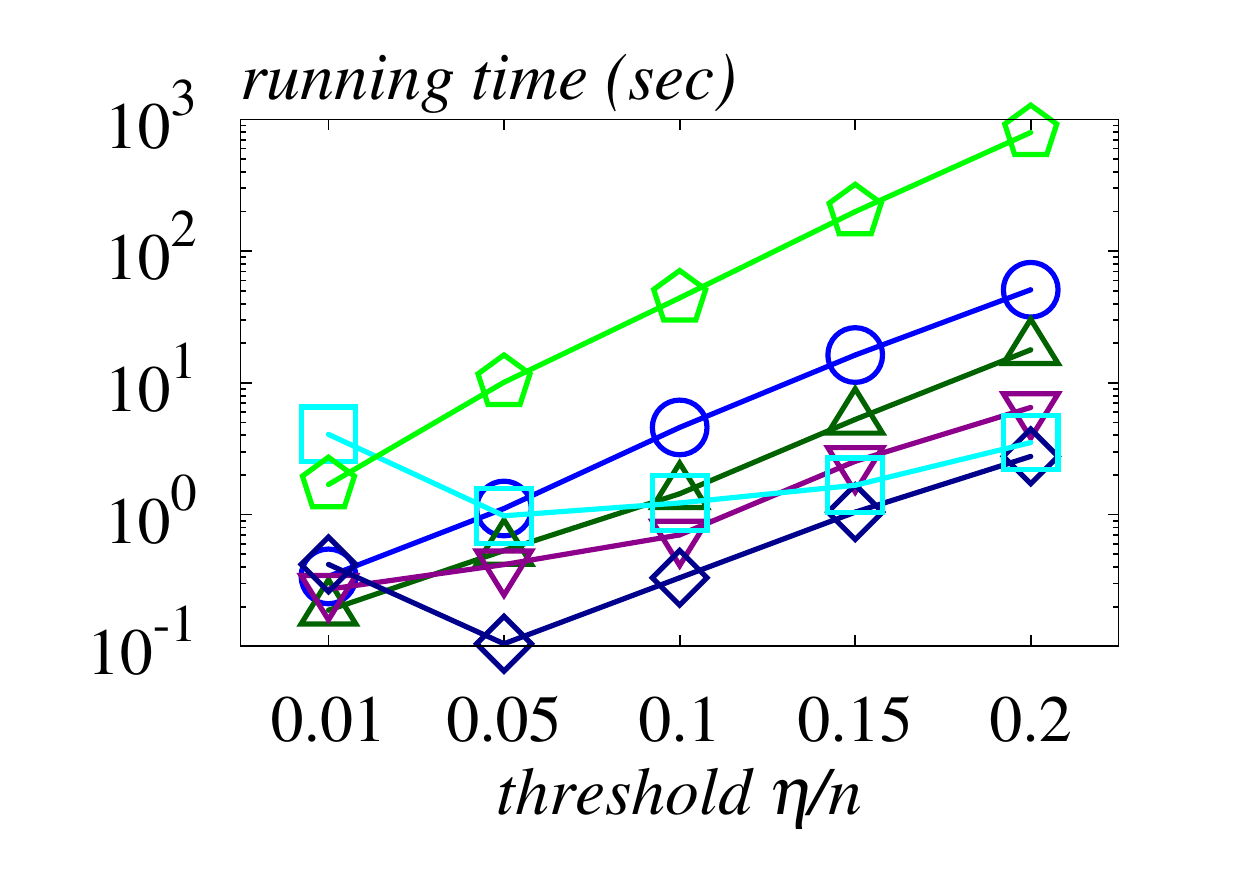}\label{subfig:Epinions-runtime-ic}}\hfill
	\subfloat[Youtube]{\includegraphics[width=0.23\linewidth]{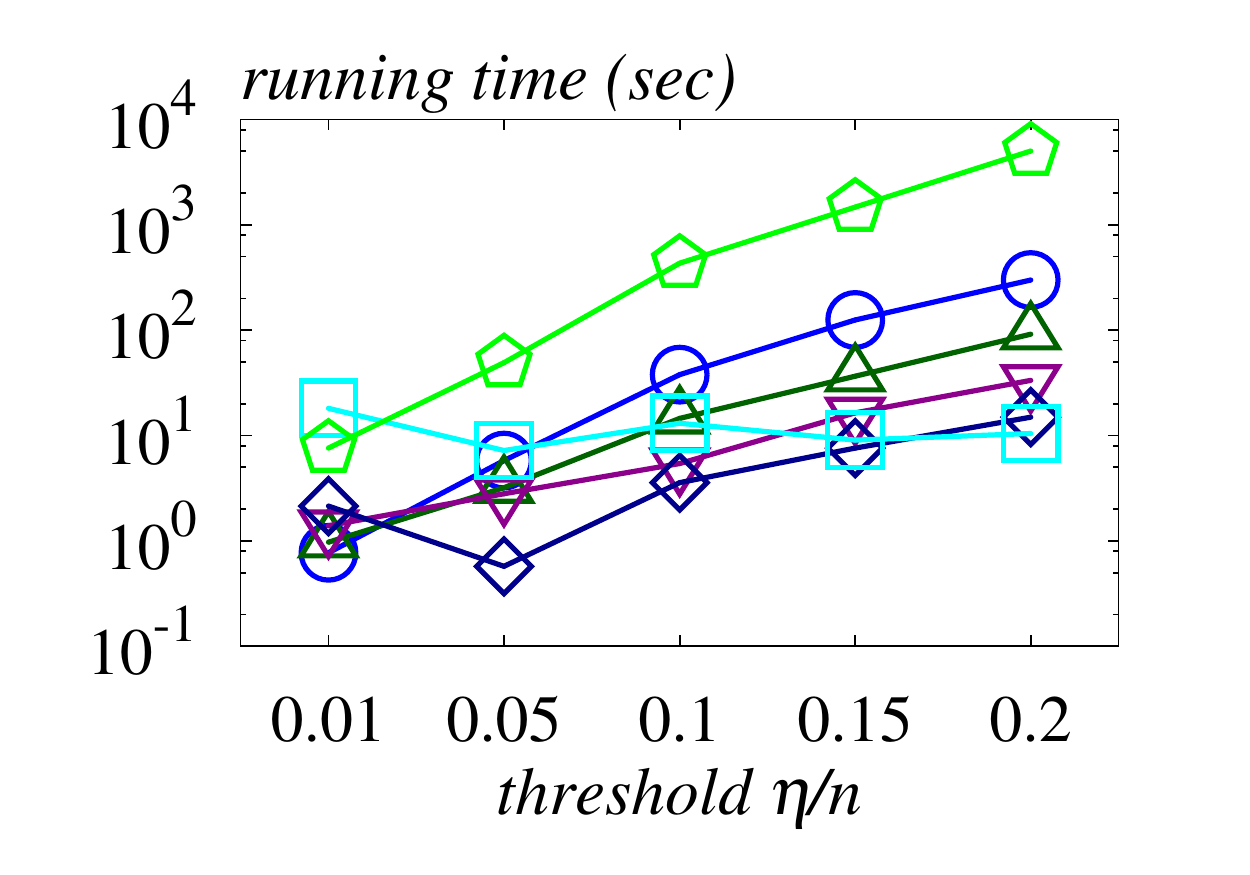}\label{subfig:Youtube-runtime-ic}}\hfill
	\subfloat[LiveJournal]{\includegraphics[width=0.23\linewidth]{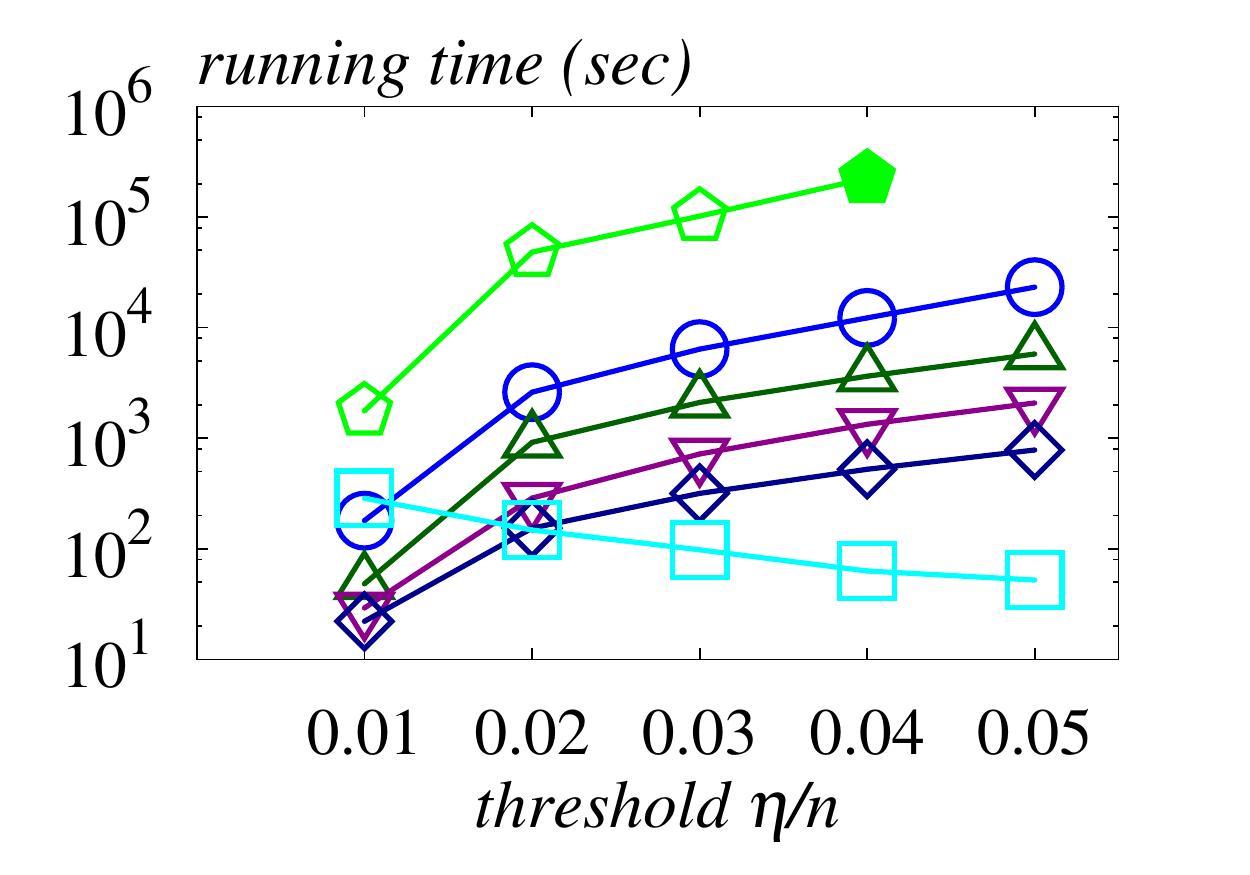}\label{subfig:LiveJournal-runtime-ic}}
	\caption{Running time vs. threshold under the IC model.}\label{fig:runtime-ic}
\end{figure*}
\begin{figure*}[!t]
	\vspace{1mm}
	\centering
	\includegraphics[height=9.5pt]{lgd}\vspace{-0.2in}\\
	\subfloat[NetHEPT]{\includegraphics[width=0.23\linewidth]{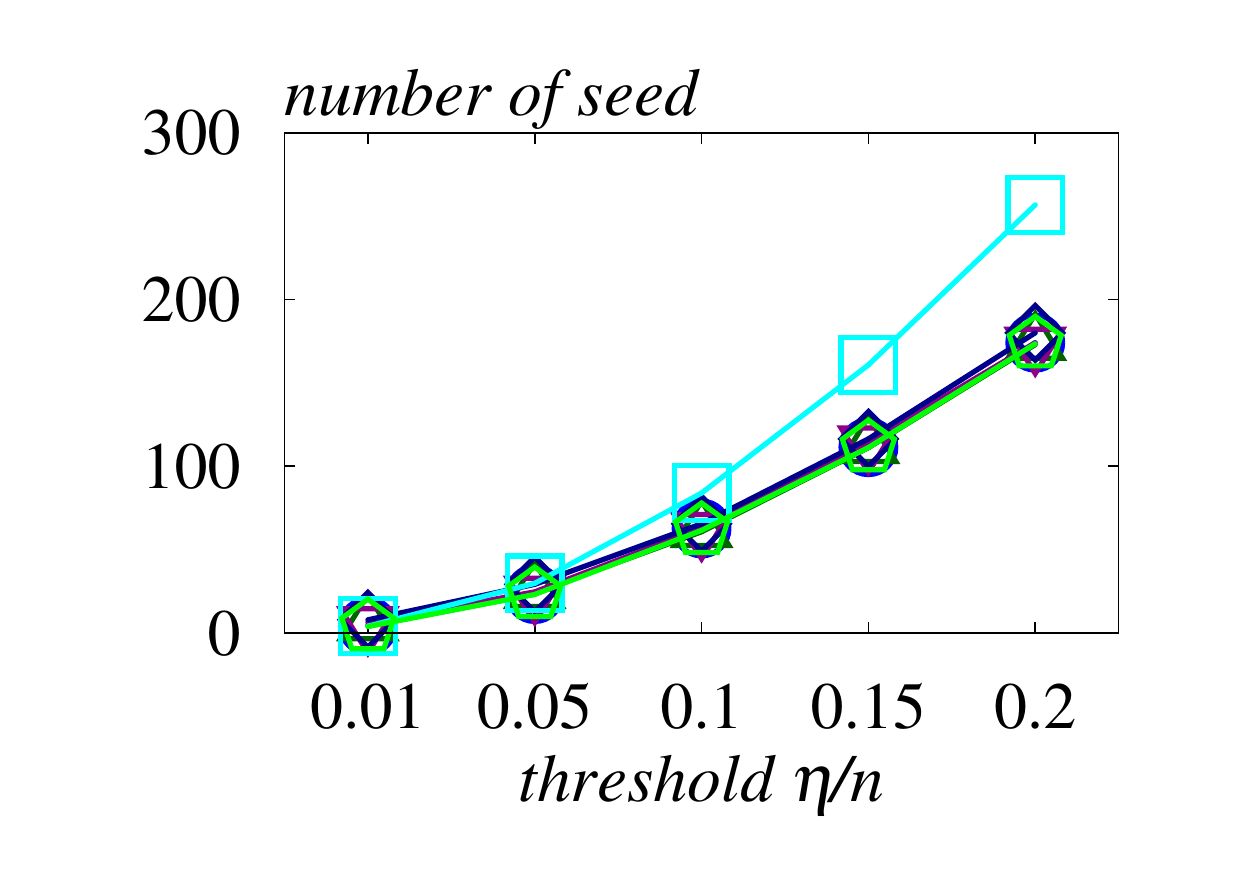}\label{subfig:NetHEPT-seed-lt}}\hfill
	\subfloat[Epinions]{\includegraphics[width=0.23\linewidth]{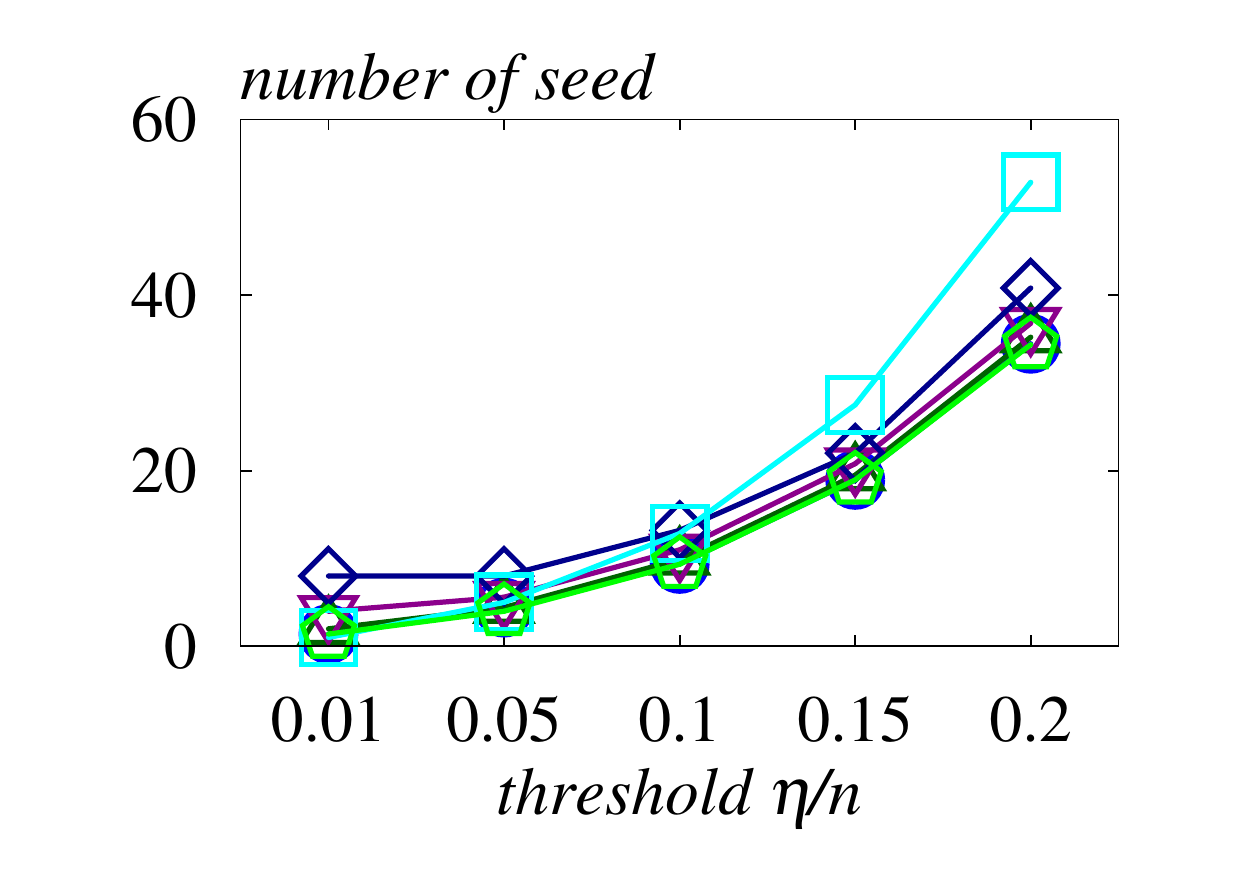}\label{subfig:Epinions-seed-lt}}\hfill
	\subfloat[Youtube]{\includegraphics[width=0.23\linewidth]{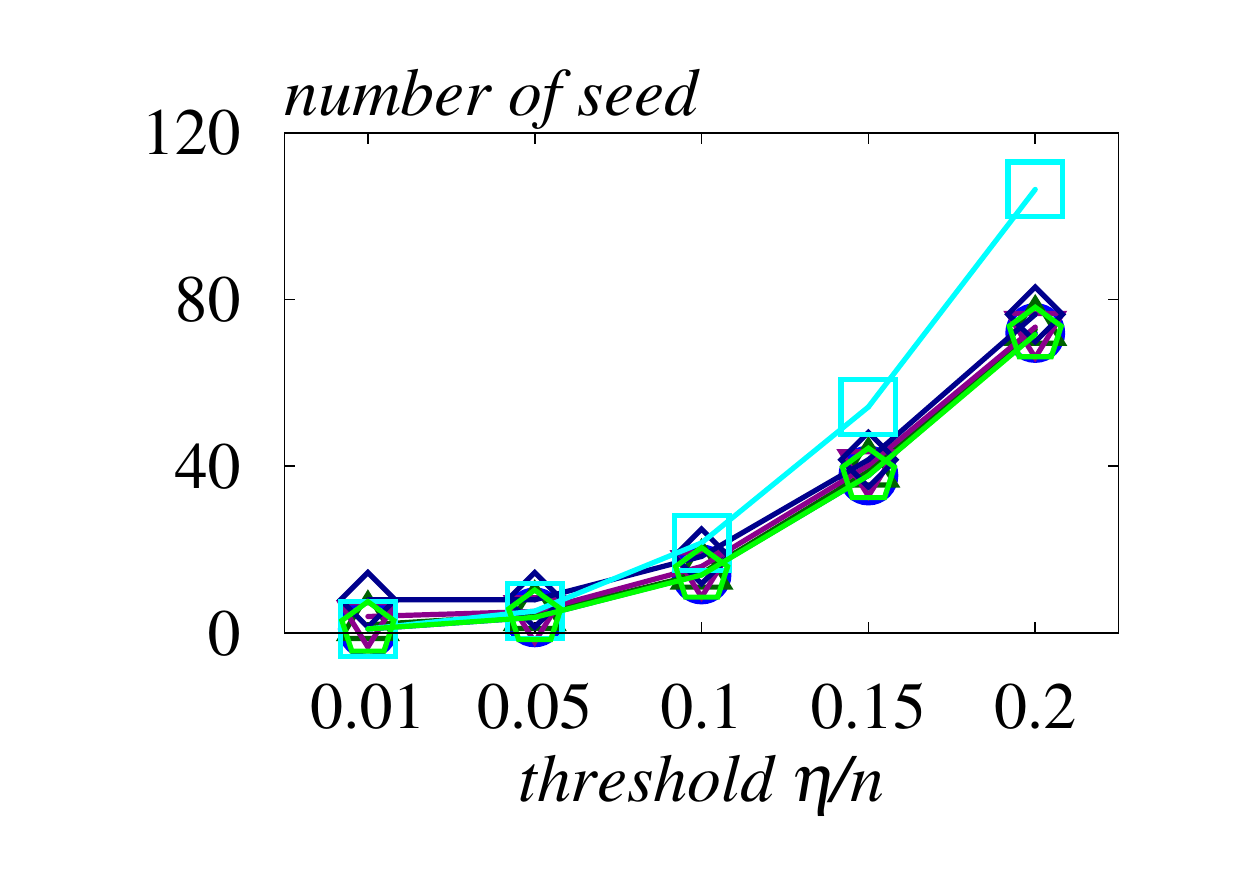}\label{subfig:Youtube-seed-lt}}\hfill
	\subfloat[LiveJournal]{\includegraphics[width=0.23\linewidth]{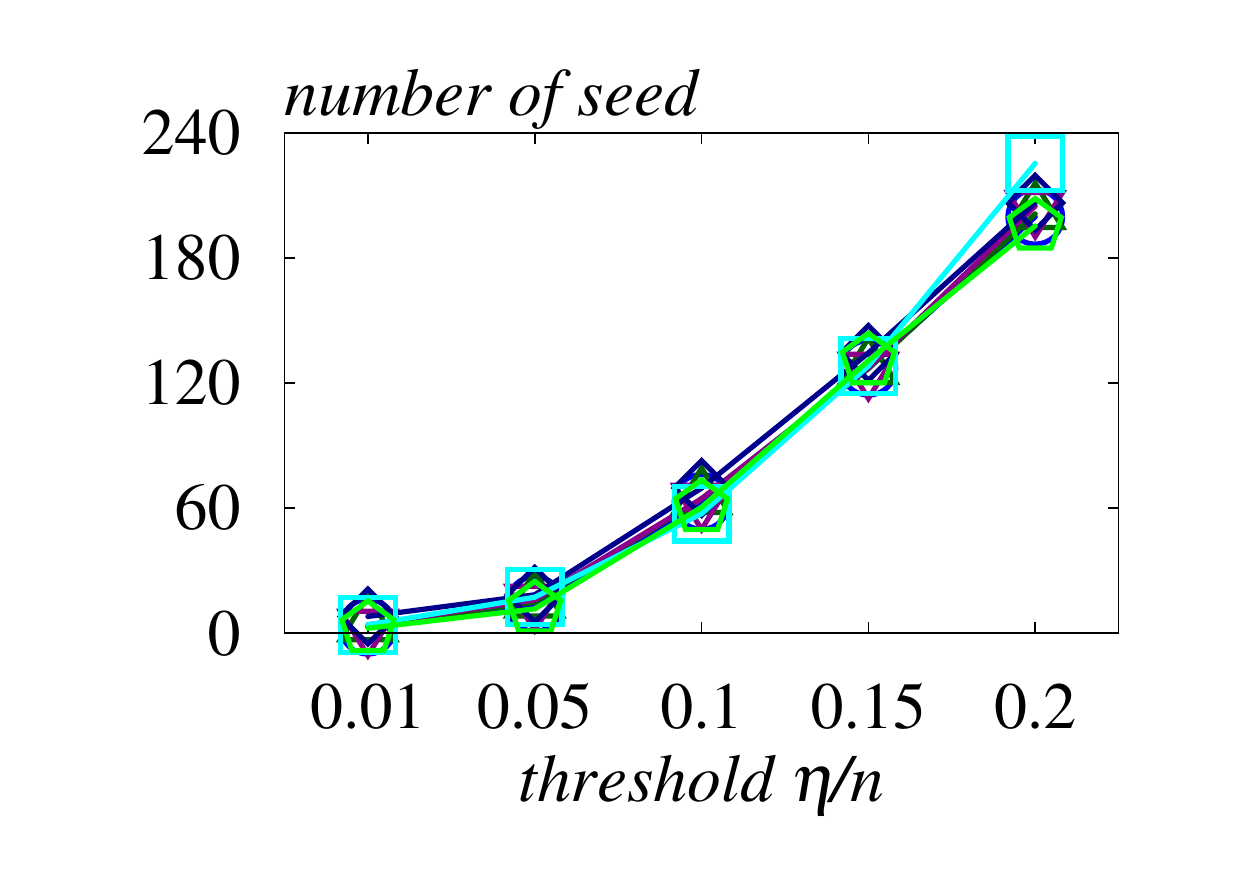}\label{subfig:LiveJournal-seed-lt}}
	\caption{Number of seed nodes vs. threshold under the LT model.}\label{fig:seed-lt}
	\vspace{1mm}
\end{figure*}
\spara{Running Time vs. Threshold} \figurename~\ref{fig:runtime-ic} presents the results of running time against the threshold under the IC model. As the results show, \TEUC runs faster than the other five adaptive algorithms on the four datasets when the threshold $\eta$ is large. The main reason is that adaptive algorithms involve multiple rounds of seed selection whereas only one round is required for non-adaptive algorithms. Observe that the running time of \TEUC generally decreases with the increase of the threshold $\eta$,  unlike the results of the five adaptive algorithms. The reason lies in the design of \TEUC. Specifically, \TEUC selects two seed set candidates $S_u$ and $S_l$, which are taken as the upper bound and lower bound on the number of seed nodes in the optimal solution. Only when the condition $|S_u| \le 2|S_l|$ is satisfied, the candidate set $S_u$ is returned as the solution; otherwise \TEUC will continue to refine $S_u$ and $S_l$~\cite{Han_SM_2017}. The larger the threshold, the more seed nodes are required, and the more easily this stop condition is met, which explains the unique running time pattern of \TEUC. We also observe that \AdaptSM runs around $10$--$20$ times slower than \ASM for all cases. Particularly, \AdaptSM cannot finish within $72$ hours when $\eta/n=0.05$ under the IC model on the LiveJournal dataset (see \figurename~\ref{subfig:LiveJournal-runtime-ic}). This demonstrates that \AdaptSM is significantly inferior to \ASM in terms of computational overheads. The reason behind this is that \ASM selects the node to maximize the expected marginal \textit{truncated} spread, while \AdaptSM attempts to maximize the expected marginal influence spread. Specifically, recall that the expected number of \RR-sets generated by \ASM is proportional to $\eta_i/\OPTT_i$. Meanwhile, the expected number of RR-sets generated by \AdaptSM is proportional to $n_i/\OPTT_i^\prime$, where $\OPTT_i^\prime$ is the maximum expected marginal influence spread in the $i$-th round of seed selection in $G_i$. For the last few rounds of seed selection, we have $\OPTT_i^{\prime}\approx\OPTT_i\approx\eta_i\ll n_i$, which indicates that the number of \RR-sets generated by \ASM is much smaller than the number of RR-sets generated by \AdaptSM. Consequently, \ASM runs remarkably faster than \AdaptSM. As such, \ASM is more preferable than \AdaptSM, as the former provides significantly better efficiency and approximation guarantees than the latter, while offering similar empirical effectiveness. Note that the batched algorithms, \ie \ASMT, \ASMF, and \ASME, reduce the running time significantly, to around $30\%$, $10\%$, and $5\%$ of \ASM, which makes them quite competitive with \TEUC in terms of the efficiency, not to mention \AdaptSM. In addition, as explained earlier, the terminal condition $|S_u| \le 2|S_l|$ in \TEUC is easier satisfied when the threshold $\eta$ is larger, and hence, \TEUC runs faster along with the increase of $\eta$. On the other hand, the running times of the adaptive algorithms increase with $\eta$. Therefore, \ASMF and \ASME outperform \TEUC on datasets Epinions and Youtube when $\eta$ is relatively small, but when the threshold $\eta/n=0.2$, the running times of all three algorithms become similar, as shown in Figures~\ref{subfig:Epinions-runtime-ic} and ~\ref{subfig:Youtube-runtime-ic}. Recall that \ASME selects far fewer seed nodes than \TEUC does. Therefore, \ASME strikes a good balance between efficiency and effectiveness in the current setting. We also observe that the running time of $\ASME$ fluctuates from $\eta/n=0.01$ to $\eta/n=0.05$ on datasets {Epinions} and {Youtube}. This is due to the combined effects of the threshold and the batch size. In these cases, it needs no more than $8$ nodes to reach the thresholds. Consequently, \ASME finishes selecting seed nodes within just one round. However, when $\eta/n$ increases from $0.01$ to $0.05$, the root size of \RR-sets decreases. As a consequence, it takes relatively less time to generate a random \RR-set in practice, which leads to the decrease in running time.

\vspace{-1mm}
\subsection{Results under the LT model}\label{sec:exp-result-lt}
\vspace{-2mm}

\begin{figure*}[!t]
	\centering
	\includegraphics[height=9.5pt]{lgd}\vspace{-0.2in}\\
	\subfloat[NetHEPT]{\includegraphics[width=0.23\linewidth]{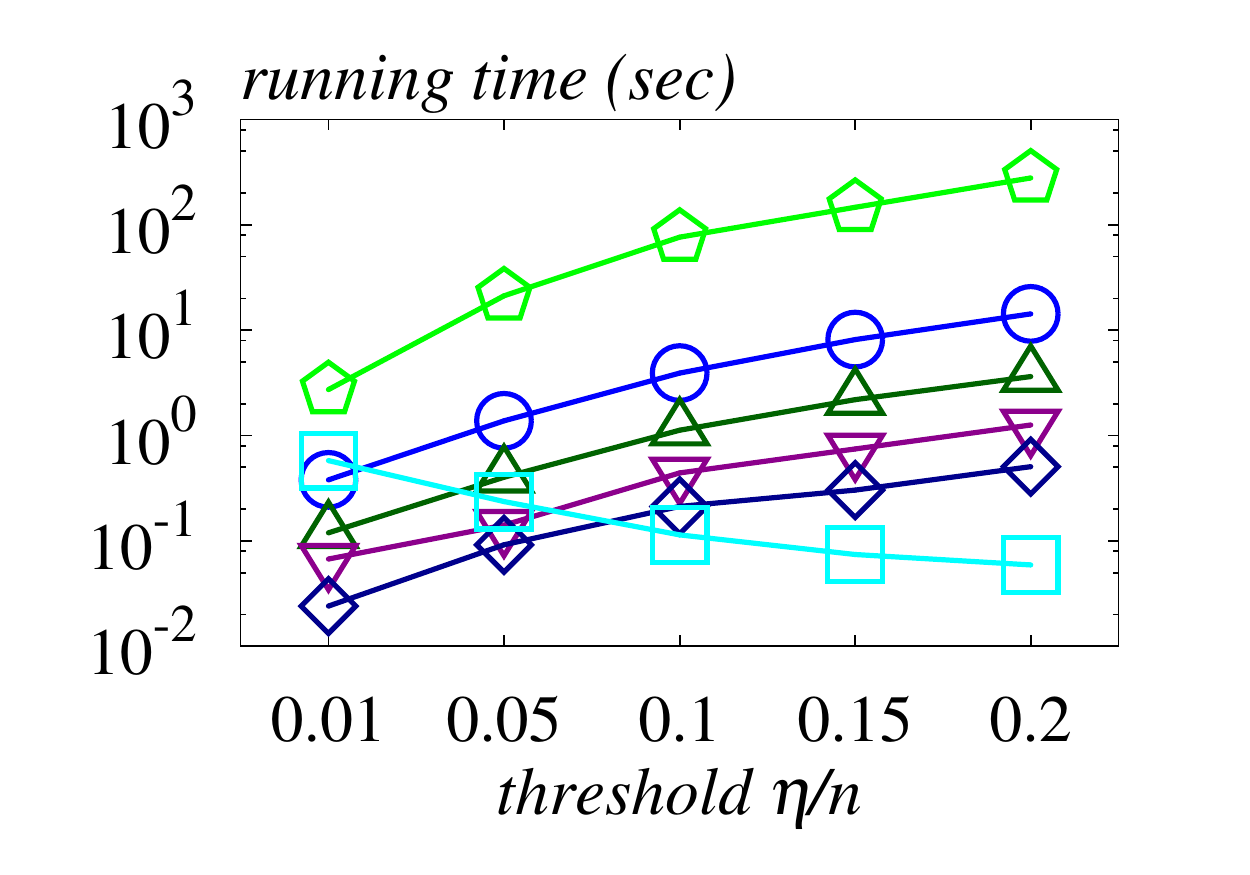}\label{subfig:NetHEPT-runtime-lt}}\hfill
	\subfloat[Epinions]{\includegraphics[width=0.23\linewidth]{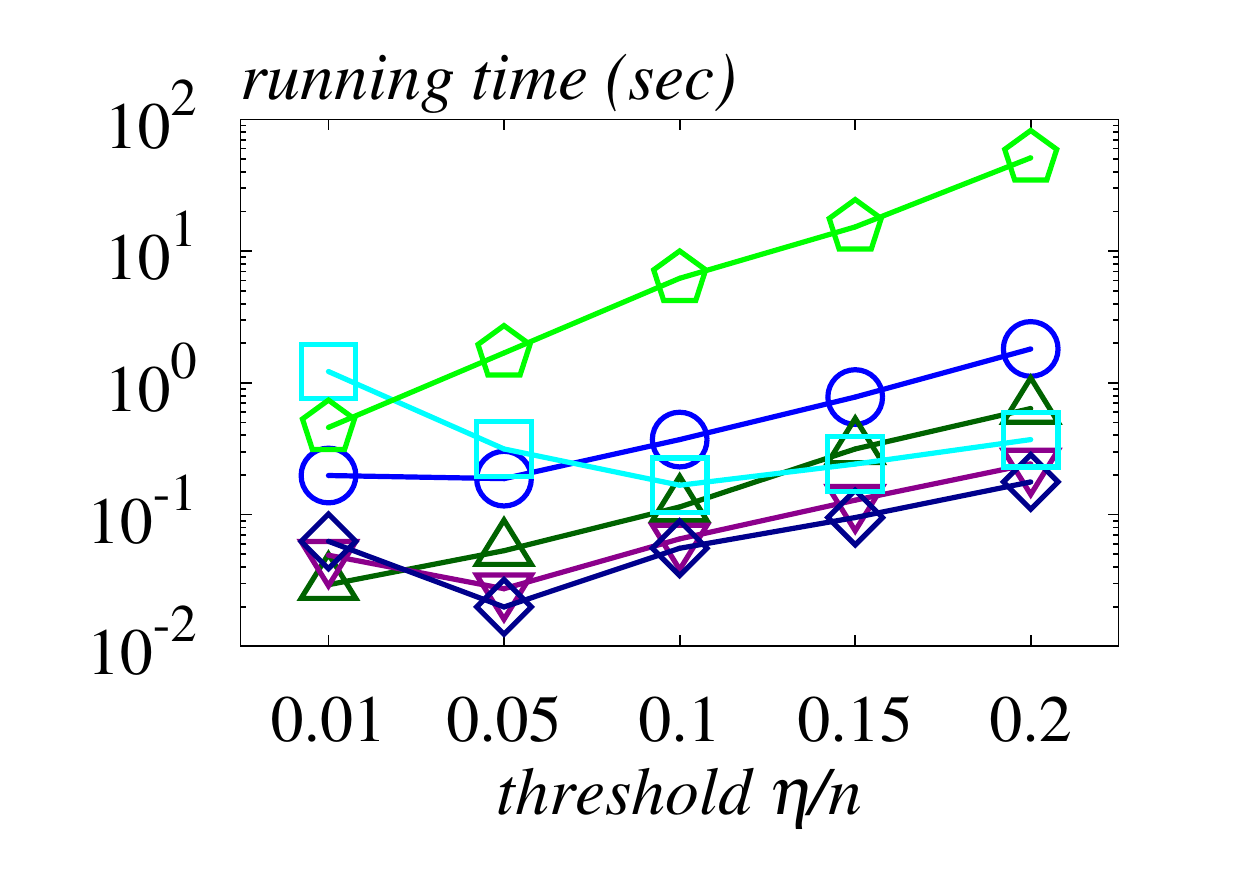}\label{subfig:Epinions-runtime-lt}}\hfill
	\subfloat[Youtube]{\includegraphics[width=0.23\linewidth]{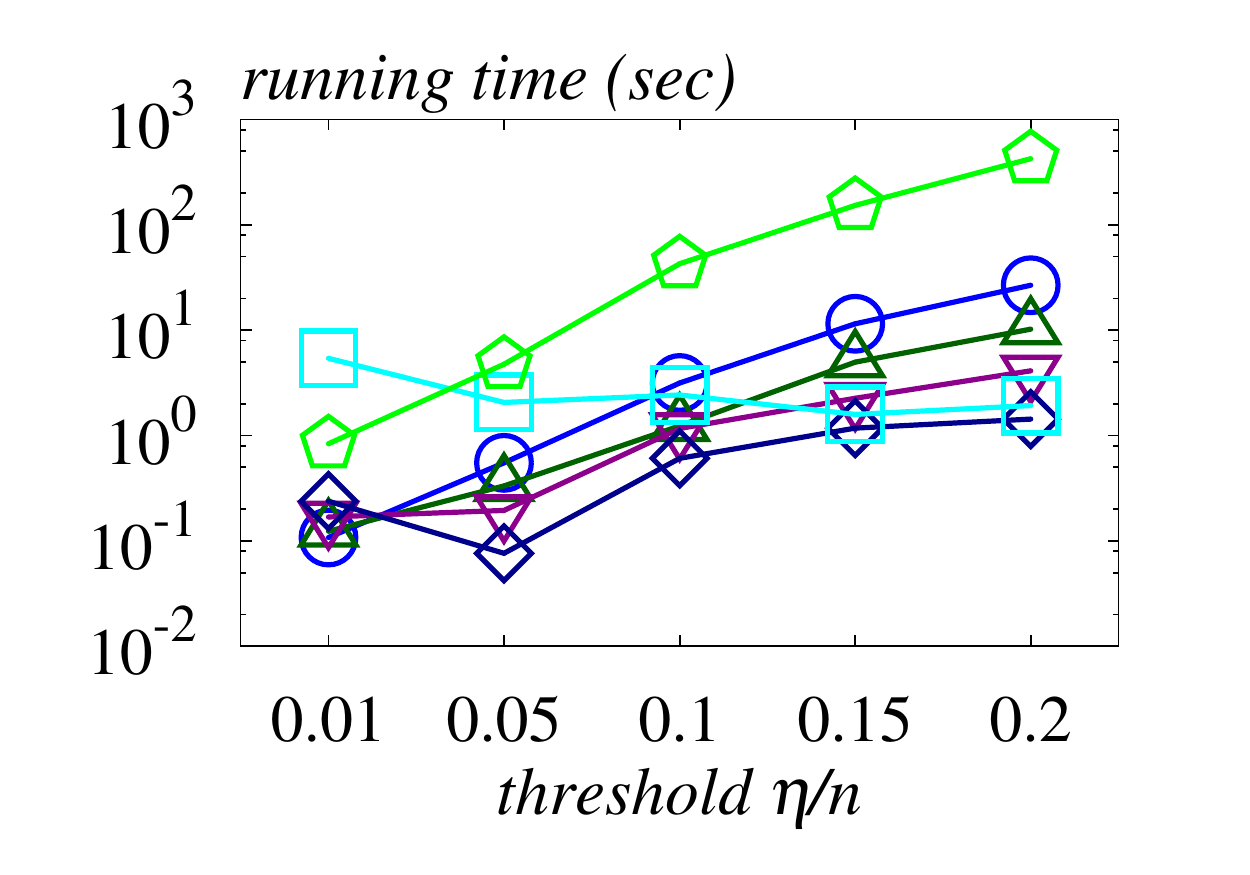}\label{subfig:Youtube-runtime-lt}}\hfill
	\subfloat[LiveJournal]{\includegraphics[width=0.23\linewidth]{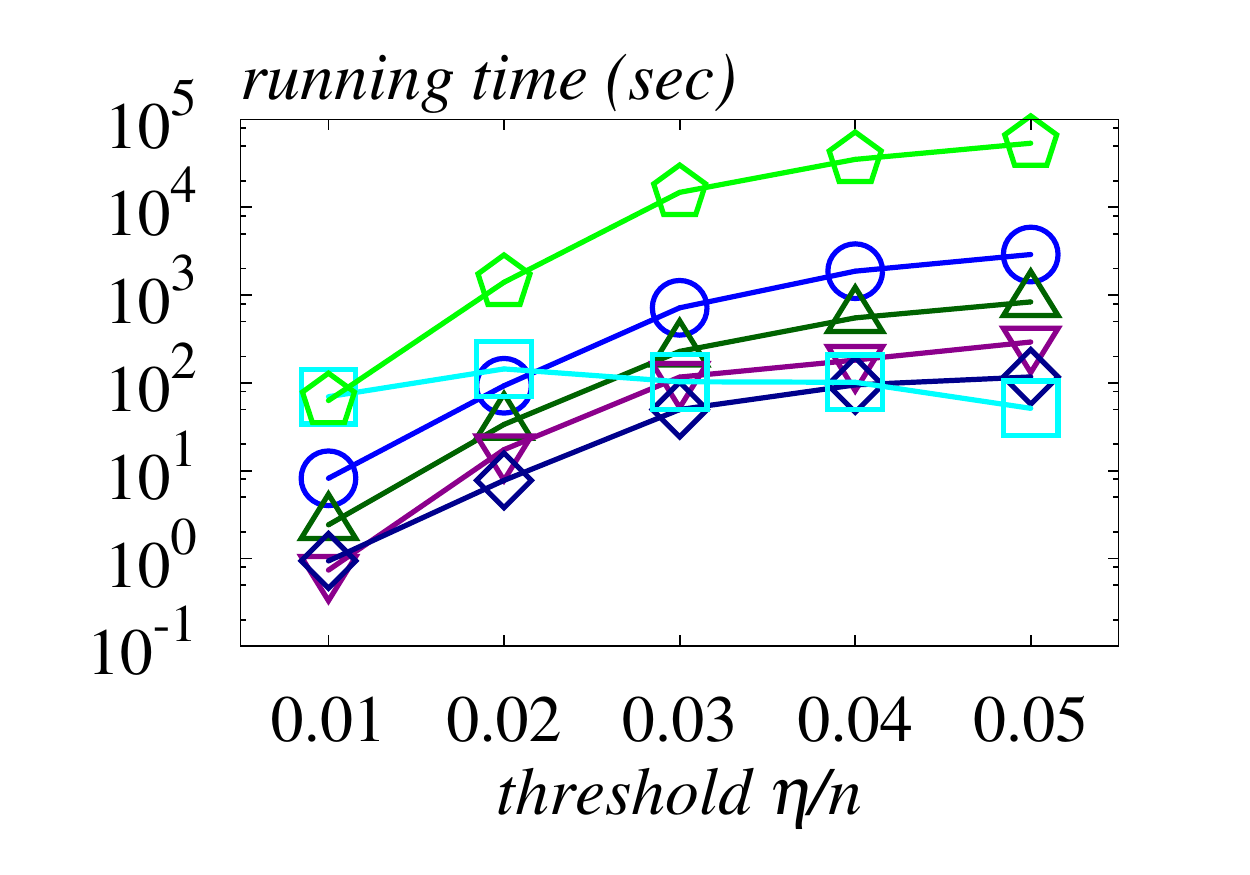}\label{subfig:LiveJournal-runtime-lt}}
	\caption{Running time vs. threshold under the LT model.}\label{fig:runtime-lt}
\end{figure*}

\spara{Seed Size vs. Threshold} \figurename~\ref{fig:seed-lt} reports the number of nodes selected by different algorithms under the LT model. In general, the results show similar trends to those observed in \figurename~\ref{fig:seed-ic}. Similarly, \AdaptSM selects a close number of nodes as \ASM does on the four datasets, with negligible difference. \TEUC requires around $40\%$ more nodes than the five adaptive algorithms do. Details are displayed in Table~\ref{tbl:impro-ratio}. In addition, we also observe that \ASME selects more nodes than \TEUC for several settings (e.g., $\eta/n=0.01$ on the Epionions and Youtube datasets). Through a careful analysis, we find that (i) all the algorithms select less nodes under the LT model than those under the IC model, and (ii) \ASME selects $8$ seed nodes in a batch with influence spread much higher than the requirements. These observations clearly tell us that there is a tradeoff in the setting of batch size. Increasing the batch size will speed up the algorithms but may result in more nodes selected. 

\spara{Running Time vs. Threshold} \figurename~\ref{fig:runtime-lt} shows the results of running time for different thresholds under the LT model. The conclusions we summarize for \figurename~\ref{fig:runtime-ic} are generally applicable to \figurename~\ref{fig:runtime-lt} as well. The major differences lie in two aspects: (i) the running time under the LT model is shorter than that under the IC model under the same setting as it takes less time to generate a random \RR-set under the LT model than that under the IC model (as mentioned and analyzed in previous work~\cite{Arora_benchmark_2017,Tang_OPIM_2018}\eat{see Appendix A in \cite{Tang_OPIM_2018} for details}), which is consistent with the results in \figurename~\ref{fig:seed-lt}, (ii) \ASMF outperforms \TEUC on {Epinions} and \ASME outperforms \TEUC on both {Epinions} and {Youtube} for all cases under the LT model. This fact indicates  (i) the batched version of \ASM is more scalable than \TEUC does, and (ii) when the batch size $b$ is well-calibrated, \ASM can beat \TEUC in both efficiency and effectiveness. 

\subsection{Discussions on Spread Distribution}\label{sec:spread-distribution}

As discussed previously, non-adaptive algorithms may find solutions with influence spread far away from the requirement (\ie~either under-qualified or over-qualified). \figurename~\ref{fig:spread-dis} reports the spread distribution of $20$ realizations achieved by the \ASM and \TEUC algorithms on the {NetHEPT} dataset under the IC and LT models, respectively. The solid (red) line in the figure represents the spread threshold ($153$) required. As shown, \TEUC fails to reach the threshold for $5$ and $6$ realizations under the IC and LT models, respectively, with corresponding percentages of $25\%$ and $30\%$. In addition, for $5$ and $6$ realizations under the IC and LT models, respectively, the seed nodes selected by \TEUC produce influence spread much higher (over $50\%$) than the requirement. In contrast, \ASM meets the spread requirement for all the realizations under both the IC and LT models. Moreover, the spread produced by \ASM is generally kept close to the requirement. The spread exceeds the requirement by more than $50\%$ for only $2$ realizations under the LT model. These two over-qualified exceptions are due to that the last seed node selected achieves much higher spread than the gap to reach $\eta$, which is rare to happen in practice. These observations indicate that non-adaptive algorithms are unreliable for seed minimization. 
\begin{figure}[!t]
	\vspace{-0.1in}
	\centering
	\vspace{-0.1in}
	\subfloat[IC model]{\includegraphics[width=0.49\linewidth]{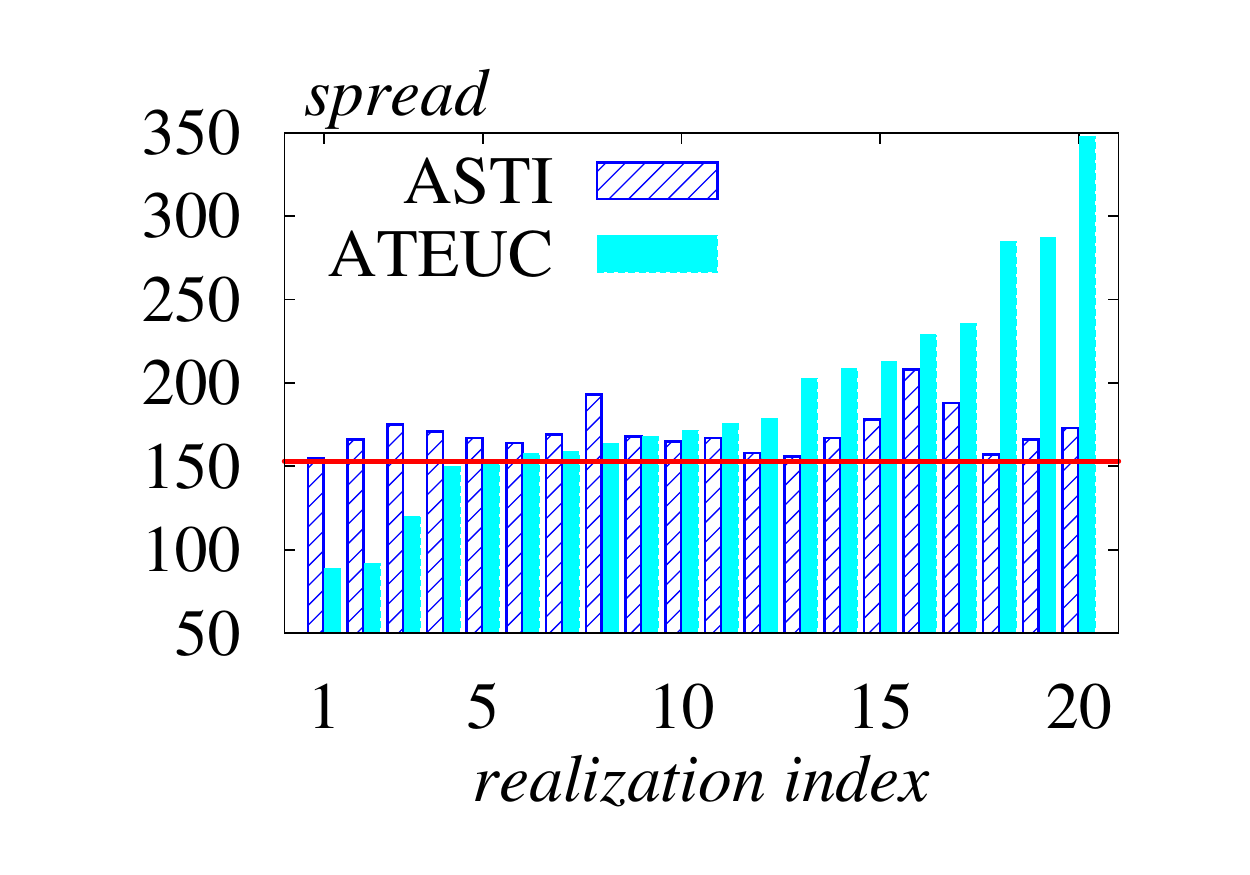}\label{subfig:hep-ic-bar}}\hfill
	\subfloat[LT model]{\includegraphics[width=0.49\linewidth]{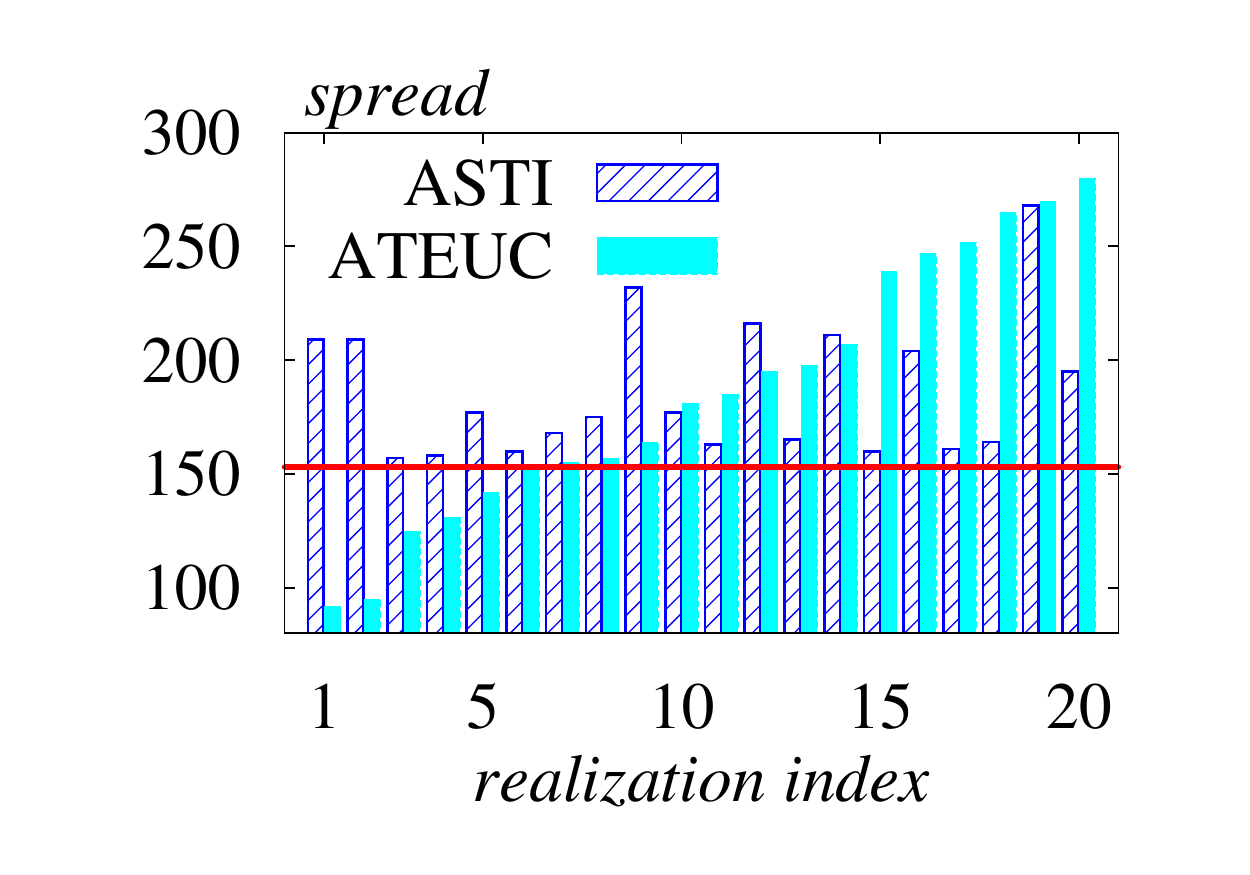}\label{subfig:hep-lt-bar}}
	\caption{Spread for 20 realizations on {NetHEPT}.}\label{fig:spread-dis}
	\vspace{2mm}
\end{figure}

\section{Conclusion}\label{sec:conclusion}
This paper studies the problem of adaptive seed minimization, and proposes algorithms that provide both strong theoretical guarantees and superior empirical effectiveness. Our approach is based on a novel \ASM framework instantiated by a truncated influence maximization algorithm \OPIMF, which has a provable approximation guarantee. The core of our \OPIMF algorithm is an elegant sampling method that generates random multi-root reverse reachable (\RR) sets for estimating the truncated influence spread. We also extend \OPIMF into its batched version \OPIMFB to further improve the efficiency of seed selection. With extensive experiments on real data, we show that our solutions considerably outperform the state of the art for seed minimization under both the IC and LT diffusion models.

\eat{We note that our adaptive algorithms take several hours to finish on the LiveJournal dataset with millions of nodes when the threshold is large (e.g., $\eta/n=0.05$). Reusing the \RR-sets generated in previous rounds of seed selection can further speed up the algorithms but will introduce bias on spread estimation. In the future, we will study how to fix or characterize the bias.}

\begin{acks}
	This research is supported by \grantsponsor{NRF-RSS2016-004}{Singapore National Research Foundation}{} under grant~\grantnum{NRF-RSS2016-004}{NRF-RSS2016-004}, by \grantsponsor{MOE2015-T2-2-069}{Singapore Ministry of Education Academic Research Fund Tier 2}{} under grant~\grantnum{MOE2015-T2-2-069}{MOE2015-T2-2-069}, by \grantsponsor{NUSSUG}{National University of Singapore}{} under an~\grantnum{NUSSUG}{SUG}, by \grantsponsor{MOE2017-T1-002-024}{Singapore Ministry of Education Academic Research Fund Tier 1}{} under grant~\grantnum{MOE2017-T1-002-024}{MOE2017-T1-002-024}, and by a \grantnum{NSERC}{Discovery grant} and a \grantnum{NSERC}{Discovery Accelerator Supplement grant} from the \grantsponsor{NSERC}{Natural Sciences and Engineering Research Council of Canada (NSERC)}{}.
\end{acks}

\newpage
\bibliographystyle{ACM-Reference-Format}
\bibliography{reference}
\appendix
\section{Concentration Bounds}\label{appenix:inequality}
We show some useful martingale concentration bounds, \ie~the Chernoff-like bounds \cite{Tang_IMM_2015} and their variants \cite{Tang_OPIM_2018}.
\begin{lemma}[\cite{Tang_IMM_2015}]\label{lemma:concentration-additive}
	Let $X_1-\E[X_1],\dots,X_{T}-\E[X_{T}]$ be a martingale difference sequence such that $X_i\in[0,1]$ for each $i$. Let $\bar{X}=\frac{1}{{T}}\sum_{i=1}^{T} X_i$. If $\E[X_i]$ is identical for every $i$, i.e., $\E[X_i]=\E[\bar{X}]$, then for any $\lambda\geq 0$, we have
	\begin{align}
	&\Pr[\bar{X}>\E[\bar{X}]+\lambda]\leq \exp\Big(-\frac{\lambda^2{T}}{2\E[\bar{X}]+2\lambda/3}\Big),\label{eqn:uppertail}\\
	&\Pr[\bar{X}<\E[\bar{X}]-\lambda]\leq \exp\Big(-\frac{\lambda^2{T}}{2\E[\bar{X}]}\Big).\label{eqn:lowertail}
	\end{align}
\end{lemma}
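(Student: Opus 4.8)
The plan is to establish both tail bounds by the exponential-moment (Chernoff) method, carried over from the independent to the martingale setting through an iterated-expectation argument. Write $\mu:=\E[\bar X]$ and let $\mathcal{F}_{i-1}$ be the $\sigma$-algebra generated by $X_1,\dots,X_{i-1}$; since $X_1-\E[X_1],\dots,X_T-\E[X_T]$ is a martingale difference sequence with a common mean, the hypothesis is equivalent to $\E[X_i\mid\mathcal{F}_{i-1}]=\mu$ for every $i$. For the upper tail \eqref{eqn:uppertail}, I would fix $\theta>0$ and apply Markov's inequality to the nonnegative random variable $e^{\theta\sum_{i}(X_i-\mu)}$, which reduces the task to bounding the moment generating function $\E\big[e^{\theta\sum_i(X_i-\mu)}\big]$ and then optimizing over $\theta$.

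The heart of the argument is to factorize this MGF into conditional contributions, and this is exactly where the martingale structure is used. Because $X_i\in[0,1]$ and $x\mapsto e^{\theta x}$ is convex, the chord bound gives $e^{\theta X_i}\le 1+(e^\theta-1)X_i$ pointwise on $[0,1]$; taking the conditional expectation and substituting $\E[X_i\mid\mathcal{F}_{i-1}]=\mu$ yields
\[
\E\big[e^{\theta(X_i-\mu)}\mid\mathcal{F}_{i-1}\big]\le e^{-\theta\mu}\big(1+(e^\theta-1)\mu\big)\le \exp\!\big(\mu(e^\theta-1-\theta)\big),
\]
where the last step uses $1+y\le e^{y}$. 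Peeling off the factors one at a time via the tower property (conditioning first on $\mathcal{F}_{T-1}$, then on $\mathcal{F}_{T-2}$, and so on) then gives $\E\big[e^{\theta\sum_i(X_i-\mu)}\big]\le \exp\!\big(T\mu(e^\theta-1-\theta)\big)$, and hence $\Pr[\bar X>\mu+\lambda]\le\exp\!\big(T\mu(e^\theta-1-\theta)-\theta T\lambda\big)$. Setting $\theta=\ln(1+\lambda/\mu)$ minimizes the exponent and produces the exact Chernoff bound $\exp\!\big(-T[(\mu+\lambda)\ln(1+\lambda/\mu)-\lambda]\big)$.

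For the lower tail \eqref{eqn:lowertail}, I would rerun the identical computation with a negative exponent (equivalently, bounding $\E[e^{-s(X_i-\mu)}\mid\mathcal{F}_{i-1}]\le\exp(\mu(e^{-s}-1+s))$ for $s>0$), and the optimal choice $s=-\ln(1-\lambda/\mu)$ yields $\exp\!\big(-T[\lambda+(\mu-\lambda)\ln(1-\lambda/\mu)]\big)$. Finally, to recast both exact exponents in the stated Bernstein form, I would invoke the elementary inequalities $(1+x)\ln(1+x)-x\ge \tfrac{x^2}{2+2x/3}$ and $x+(1-x)\ln(1-x)\ge \tfrac{x^2}{2}$ for $x=\lambda/\mu\ge 0$, which after substituting $x=\lambda/\mu$ and clearing a factor of $\mu$ deliver the denominators $2\mu+2\lambda/3$ and $2\mu$, respectively; the difference between the two denominators traces back to the extra second-order correction needed only in the first inequality. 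The main obstacle is conceptual rather than computational: one must verify that the convexity-plus-conditional-mean bound makes the MGF telescope exactly as in the independent case, so that the lack of full independence costs nothing. The two closing calculus inequalities are routine, each following by checking the value and the sign of the derivative at $x=0$.
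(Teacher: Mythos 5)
Your proof is correct. The paper itself does not prove this lemma --- it imports it verbatim from the cited IMM paper of Tang et al.\ --- and your argument (Markov on the exponential moment, the chord bound $e^{\theta x}\le 1+(e^{\theta}-1)x$ on $[0,1]$ combined with $\E[X_i\mid\mathcal{F}_{i-1}]=\mu$ to telescope the MGF, optimal $\theta$, then the standard comparisons $(1+x)\ln(1+x)-x\ge \tfrac{x^2}{2+2x/3}$ and $x+(1-x)\ln(1-x)\ge \tfrac{x^2}{2}$) is exactly the standard martingale Chernoff--Bernstein derivation by which the result is established in that source. The only cosmetic quibble is that the first closing inequality needs a second-derivative (convexity) check since its first derivative also vanishes at $x=0$, and the lower tail should note that the case $\lambda\ge\mu$ is trivial because $\bar X\ge 0$; neither affects correctness.
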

\begin{lemma}[\cite{Tang_OPIM_2018}]\label{lemma:concentration-ept}
	Let $X_1-\E[X_1],\dots,X_{T}-\E[X_{T}]$ be a martingale difference sequence such that $X_i\in[0,1]$ for each $i$. Let $\bar{X}=\frac{1}{{T}}\sum_{i=1}^{T} X_i$. If $\E[X_i]$ is identical for every $i$, i.e., $\E[X_i]=\E[\bar{X}]$, then for any $\lambda\geq 0$, we have
	\begin{align}
	&\Pr\Big[\E[\bar{X}]\cdot T<\Big(\sqrt{\bar{X} T+\tfrac{2\lambda}{9}}-\sqrt{\tfrac{\lambda}{2}} \Big)^2- \tfrac{\lambda}{18}\Big]\leq \e^{-\lambda},\label{eqn:ept-lower}\\
	&\Pr\Big[\E[\bar{X}]\cdot T>\Big(\sqrt{\bar{X} T+\tfrac{\lambda}{2}}+\sqrt{\tfrac{\lambda}{2}} \Big)^2\Big]\leq \e^{-\lambda}.\label{eqn:ept-upper}
	\end{align}
\end{lemma}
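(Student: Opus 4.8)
The plan is to obtain both inequalities as \emph{inversions} of the two-sided tail bounds already recorded in Lemma~\ref{lemma:concentration-additive}: each of \eqref{eqn:ept-lower} and \eqref{eqn:ept-upper} is merely a confidence bound on the unknown mean $\mu:=\E[\bar X]$, re-expressed through the observable empirical sum $s:=\bar X\cdot T=\sum_{i=1}^T X_i$, so the whole argument is a change of variables applied to \eqref{eqn:uppertail} and \eqref{eqn:lowertail}. Concretely, I would derive the upper confidence bound \eqref{eqn:ept-upper} from the lower tail \eqref{eqn:lowertail}, and the lower confidence bound \eqref{eqn:ept-lower} from the upper tail \eqref{eqn:uppertail}. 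In both cases the martingale and identical-mean hypotheses needed to invoke Lemma~\ref{lemma:concentration-additive} are inherited verbatim, so no new probabilistic input is required.

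For \eqref{eqn:ept-upper}, first pick the deviation $\delta:=\sqrt{2\lambda\mu/T}$, which is exactly the value making the right-hand side of \eqref{eqn:lowertail} equal to $\e^{-\lambda}$. Hence the event $\{\bar X<\mu-\delta\}$ has probability at most $\e^{-\lambda}$; on its complement $\bar X\ge\mu-\delta$, multiplying through by $T$ gives $s\ge \mu T-\sqrt{2\lambda\,\mu T}$. Viewing this as a quadratic inequality in $y:=\sqrt{\mu T}$, namely $y^2-\sqrt{2\lambda}\,y-s\le 0$, and taking the admissible (nonnegative) root yields $\sqrt{\mu T}\le \sqrt{s+\tfrac{\lambda}{2}}+\sqrt{\tfrac{\lambda}{2}}$, i.e. the bound in \eqref{eqn:ept-upper}. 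Since the implication $\bar X\ge\mu-\delta\Rightarrow \mu T\le(\sqrt{s+\lambda/2}+\sqrt{\lambda/2})^2$ just established contraposes to the event-containment $\{\mu T>(\sqrt{s+\lambda/2}+\sqrt{\lambda/2})^2\}\subseteq\{\bar X<\mu-\delta\}$, we obtain $\Pr[\mu T>(\sqrt{s+\lambda/2}+\sqrt{\lambda/2})^2]\le\e^{-\lambda}$, which is \eqref{eqn:ept-upper}.

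The lower bound \eqref{eqn:ept-lower} follows the same template from the upper tail \eqref{eqn:uppertail}, but the Bernstein-type denominator $2\E[\bar X]+2\delta/3$ makes the calibration $\delta^2T=\lambda(2\mu+2\delta/3)$ a quadratic in $\delta$ rather than a closed-form expression. I would solve it by completing the square, substitute the resulting $\delta$ into $s\le \mu T+\delta T$ (the complement of the tail event), and recombine terms so that $\mu T$ is isolated as $(\sqrt{s+\tfrac{2\lambda}{9}}-\sqrt{\tfrac{\lambda}{2}})^2-\tfrac{\lambda}{18}$. A convenient way to organize this is to introduce the auxiliary variable $v:=\sqrt{2\lambda\,\mu T+\lambda^2/9}$, which linearizes the calibration constraint and turns the final identity into a routine square-completion.

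I expect the lower-bound case to be the only real obstacle, and it is algebraic rather than probabilistic: because $\delta$ is now defined implicitly, one must (i) select the correct root of the defining quadratic and check it is nonnegative, (ii) verify that the map $\mu T\mapsto \mu T+\delta(\mu T)\,T$ used to pass from the tail event to the stated bound is monotone, so that the event-containment $\{\mu T<(\sqrt{s+2\lambda/9}-\sqrt{\lambda/2})^2-\lambda/18\}\subseteq\{\bar X>\mu+\delta\}$ is valid, and (iii) confirm that the constants $\tfrac{2\lambda}{9}$, $\tfrac{\lambda}{2}$, and $\tfrac{\lambda}{18}$ match after the square-completion. A useful sanity check throughout is the limit $\lambda\to0$, in which both \eqref{eqn:ept-lower} and \eqref{eqn:ept-upper} must collapse to $\mu T=s$, confirming that the coefficients are correctly balanced.
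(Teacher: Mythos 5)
Your proposal is correct, and it is worth noting that the paper itself supplies no proof of this lemma: it is imported verbatim from \cite{Tang_OPIM_2018}, so there is no in-paper argument to compare against. Your derivation --- inverting the two additive martingale tail bounds of Lemma~\ref{lemma:concentration-additive} into confidence bounds on $\mu T$ expressed through the observable sum $s=\bar X T$ --- is exactly the standard route by which the cited source establishes these inequalities, and your constants check out: for \eqref{eqn:ept-upper} the calibration $\delta=\sqrt{2\lambda\mu/T}$ and the quadratic $y^2-\sqrt{2\lambda}\,y-s\le 0$ in $y=\sqrt{\mu T}$ give precisely $(\sqrt{s+\lambda/2}+\sqrt{\lambda/2})^2$, and for \eqref{eqn:ept-lower} the substitution $v=\sqrt{2\lambda\mu T+\lambda^2/9}$ turns $s\le\mu T+\delta T$ into $(v+\lambda)^2\ge 2\lambda(s+2\lambda/9)$, whose unwinding yields $(\sqrt{s+2\lambda/9}-\sqrt{\lambda/2})^2-\lambda/18$ on the nose. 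The only detail to make explicit in a full write-up is the degenerate regime $\sqrt{2\lambda(s+2\lambda/9)}<\lambda$ (equivalently $s<5\lambda/18$), where one cannot square the inequality $v\ge\sqrt{2\lambda(s+2\lambda/9)}-\lambda$ directly; there the claimed lower bound is at most $\lambda/18-\lambda/18=0\le\mu T$ and the containment holds vacuously. Your worry (ii) about monotonicity is not actually needed, since $\delta$ is a deterministic function of the fixed unknown $\mu$ and the event containment follows pointwise from the algebraic implication.
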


\section{Proofs}\label{appendix:proof}
We first introduce the following lemma that is used to prove Theorem~\ref{thm:approx}.
\begin{lemma}[\cite{Golovin_adaptive_2017}]
	If function $\Gamma$ satisfies all the following conditions:
	\begin{itemize}
		\item there exists $Q$ such that $\Gamma_\phi(V)=Q$ for all $\phi$;
		\item $\Gamma$ is integer-valued;
		\item $\Gamma$ is self-certifying;
		\item $\Gamma$ is strong adaptive monotone;
		\item $\Gamma$ is strong adaptive submodular;
	\end{itemize}
	then an $\alpha$-approximate greedy policy $\pi$ achieves an approximation ratio of $\frac{(\ln \eta+1)^2}{\alpha}$.
\end{lemma}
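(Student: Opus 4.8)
The plan is to read the statement as an instance of the \emph{adaptive stochastic minimum cost cover} problem and to solve it with the adaptive submodular coverage machinery. Here the coverage function is the truncated spread $\Gamma$, the target value is $Q=\eta$ (because $\Gamma_\phi(V)=\min\{I_\phi(V),\eta\}=\eta$ for every realization $\phi$), and the cost of a policy is the number of seeds it selects before $\Gamma$ reaches $\eta$. The five hypotheses are exactly the inputs this machinery needs: strong adaptive monotonicity ensures that the conditional expected value of $\Gamma$ never drops as more is observed; adaptive submodularity forces the conditional expected marginal gains to be diminishing along every branch of the policy tree; being integer-valued makes the smallest positive residual gap $\eta-\Gamma$ at least $1$; and self-certifying lets a policy recognize full coverage in each realization, so greedy can legitimately stop exactly when $\Gamma=\eta$. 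The whole task then reduces to bounding the expected cost $\E[\abs{S(\pi,\phi)}]$ of the $\alpha$-greedy policy $\pi$ in terms of the optimal expected cost.

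The first technical tool I would establish is an \emph{adaptive coverage inequality}: for the $\alpha$-greedy policy $\pi$ and any benchmark policy $\pi'$,
\begin{equation}
\Gamma_{\avg}(\pi_{[t]})\ \ge\ \big(1-\e^{-\alpha t/\ell}\big)\,\Gamma_{\avg}(\pi'_{[\ell]}),
\end{equation}
where $\pi_{[t]}$ denotes $\pi$ run for $t$ steps and $\Gamma_{\avg}(\cdot)$ its expected attained value. This is the adaptive analogue of the classical submodular coverage bound; conditioned on any partial realization, adaptive submodularity and strong adaptive monotonicity imply that the best available marginal gain is at least a $1/\ell$ fraction of the residual gap to $\Gamma_{\avg}(\pi'_{[\ell]})$, the greedy step secures at least an $\alpha$ fraction of it, and summing the resulting per-step geometric decrease over the policy tree yields the stated exponential. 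If the benchmark happened to have bounded \emph{worst-case} cost $W$ and to cover in every realization, I could take $\ell=W$ so that $\Gamma_{\avg}(\pi'_{[W]})=\eta$; then the expected residual gap after $t$ greedy steps is at most $\eta\,\e^{-\alpha t/W}$, and since $\E[\abs{S(\pi,\phi)}]=\sum_{t\ge0}\Pr[\eta-\Gamma(\pi_{[t]})\ge 1]\le\sum_{t\ge0}\min\{1,\eta\,\e^{-\alpha t/W}\}$, integer-valuedness and a geometric tail give a bound of the form $\frac{\ln\eta+1}{\alpha}\,W$. This is the ``one-logarithm'' half of the argument.

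The square appears in converting this worst-case-cost guarantee into the average-cost ratio that adaptive seed minimization actually measures, since a policy with small expected cost may be arbitrarily expensive on a low-probability set of realizations. I would handle this by truncating $\pi^\ast$ at $\ell=2\,\E[\abs{S(\pi^\ast,\phi)}]$ steps, so Markov's inequality certifies full coverage on at least half the probability mass, and then iterating the coverage step in phases: each phase drives down the probability that greedy is still uncovered, and conditioning on that shrinking event inflates the relevant optimal cost, so roughly $\ln\eta$ phases are needed before integer-valuedness and self-certifying force coverage in every realization. Multiplying the per-phase $\ln\eta$ by the number of phases produces the $\frac{(\ln\eta+1)^2}{\alpha}$ ratio.

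I expect this average-versus-worst-case conversion to be the main obstacle. Extracting a single $\ln\eta$ from diminishing returns is routine, but the second factor is delicate because expected coverage close to $\eta$ does \emph{not} certify coverage in every realization; the argument must track how conditioning on the uncovered realizations distorts the optimal cost and rely on the self-certifying property to translate ``failure probability driven below threshold'' into ``covered everywhere.'' This is precisely the point at which the earlier $(\ln\eta+1)$ claim of Golovin and Krause broke down, so I would keep the benchmark truncation and the greedy coverage as two separate logarithmic accountings rather than risk a single-shot argument.
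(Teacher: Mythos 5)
The first thing to note is that the paper does not prove this statement at all: it is imported verbatim as a black-box result of Golovin and Krause \cite{Golovin_adaptive_2017}, and the paper's own work goes into verifying the five hypotheses for the truncated spread (the proof of Theorem~\ref{thm:approx}), not into deriving the $\frac{(\ln\eta+1)^2}{\alpha}$ ratio from them. So your proposal is attempting something the paper deliberately outsources, and it should be judged against the argument in \cite{Golovin_adaptive_2017} rather than against anything in this paper.

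As a reconstruction of that argument your outline has the right architecture: one logarithm from the adaptive coverage inequality $\Gamma_{\avg}(\pi_{[t]})\ge(1-\e^{-\alpha t/\ell})\Gamma_{\avg}(\pi'_{[\ell]})$ combined with integer-valuedness and the tail-sum formula for expected cost, and a second logarithm from converting a worst-case-cost benchmark into an average-cost one. You also correctly identify that the second conversion is exactly where the earlier $(\ln\eta+1)$ claim broke down, which is the footnoted history the paper alludes to. However, the proposal is a plan rather than a proof at precisely that delicate point. Two things are left unestablished. First, the phase argument: you assert that truncating $\pi^\ast$ at twice its expected cost, conditioning on the still-uncovered realizations, and iterating ``roughly $\ln\eta$'' times yields the product of the two logarithms, but you never track how the conditional optimal cost inflates across phases (conditioning on an event of probability $p$ can blow it up by $1/p$), nor verify that the per-phase costs telescope to $(\ln\eta+1)^2$ rather than something with an extra factor. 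Second, you use adaptive submodularity and monotonicity generically, whereas the lemma explicitly requires the \emph{strong} versions (the pointwise condition $\Delta(v\mid S_{j-1};S_{i-1})\ge\Delta(v\mid S_{i-1})$ verified in the paper's proof of Theorem~\ref{thm:approx}); this strengthening is what licenses restarting the coverage inequality from an arbitrary partial realization inside each phase, and without invoking it the phase recursion does not go through. So the proposal identifies the correct strategy and the correct obstruction, but does not close the gap that makes this lemma nontrivial.
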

\begin{proof}[Proof of Theorem~\ref{thm:approx}]
	Obviously, $\Gamma_\phi(V)=\eta$ for all $\phi$ and $\Gamma$ is an integer-valued function. Now, we need to prove that for any $v\in V_i$, $\phi,\phi'\in\Omega_i$, and $j\leq i$
	\begin{align}
	&\Gamma_{\phi}(S_{i-1})=\Gamma_{\phi}(V)~\text{if and only if}~\Gamma_{\phi'}(S_{i-1})=\Gamma_{\phi'}(V),\label{eqn:self-certifying}\\
	&\Gamma_\phi(v\mid S_{i-1})\geq 0,\label{eqn:strong-monotone}\\
	&\Delta (v\mid S_{j-1})\geq \Delta (v\mid S_{i-1}),\label{eqn:adapt-sub}\\
	&\Delta (v\mid S_{j-1}; S_{i-1})\geq \Delta (v\mid S_{i-1}),\label{eqn:point-sub}
	\end{align}
	where $\Delta (v\mid S_{j-1};S_{i-1}):=\EWi{i}[\Gamma_\Phi(v\mid S_{j-1})]$. Equation~\eqref{eqn:self-certifying} represents self-certifying, Equation~\eqref{eqn:strong-monotone} describes strong monotonicity, Equations~\eqref{eqn:adapt-sub} and \eqref{eqn:point-sub} capture strong adaptive submodularity.
	
	Equation~\eqref{eqn:self-certifying} obviously holds, \ie~if $\Gamma_{\phi}(S_{i-1})=\Gamma_{\phi}(V)=\eta$, we must have $\Gamma_{\phi'}(S_{i-1})=\eta=\Gamma_{\phi'}(V)$, and vice versa.
	
	Equation~\eqref{eqn:strong-monotone} holds naturally as ``selecting more nodes never hurts'' the function $\Gamma$. 
	
	Next, we prove Equation~\eqref{eqn:adapt-sub}. Let $\phi_i$ be a realization of $G_i$ with probability $p(\phi_i)$ according to the influence propagation. Let $\Omega_j(\phi_i)$ be the subset realizations of $\Omega_j$ that are consistent with $\phi_i$. That is, for every $\phi\in\Omega_j(\phi_i)$ and every edge $e\in E_i$, the statuses of $e$ are the same in $\phi$ and $\phi_i$ such that both are either live or blocked. Then, for any $\phi_i$, 
	\begin{equation*}
	\sum_{\phi\in\Omega_j(\phi_i)}p(\phi)=p(\phi_i).
	\end{equation*}
	In addition, for any $\phi\in \Omega_i$, let $V_\phi(v\mid S_{i-1})$ be the set of nodes activated by $v$ in $G_i$. Thus, $\abs{V_\phi(v\mid S_{i-1})}$ is the spread of $v$ in $G_i$ under realization $\phi$. As a consequence, the marginal truncated spread of $v$ in $G_i$ under $\phi$ is
	\begin{equation*}
	\Gamma_\phi(v\mid S_{i-1})=\min\{\abs{V_\phi(v\mid S_{i-1})},\eta_i\}.
	\end{equation*}
	Similarly, for any $\phi\in\Omega_j$, we have
	\begin{equation*}
	\Gamma_{\phi}(v\mid S_{j-1})=\min\{\abs{V_{\phi}(v\mid S_{j-1})},\eta_j\}\geq\Gamma_\phi(v\mid S_{i-1}),
	\end{equation*}
	where the inequality is due to $G_i\subseteq G_j$ and $\eta_i\leq\eta_j$. Therefore,
	\begin{align*}
	\Delta (v\mid S_{j-1})
	&=\sum_{\phi\in\Omega_j}\Gamma_\phi(v\mid S_{j-1})\cdot p(\phi)\\
	&\geq\sum_{\phi\in\Omega_j}\Gamma_\phi(v\mid S_{i-1})\cdot p(\phi)\\
	&=\sum_{\phi_i}\sum_{\phi\in\Omega_j(\phi_i)}\Gamma_\phi(v\mid S_{i-1})\cdot p(\phi)\\
	&=\sum_{\phi_i}\Gamma_\phi(v\mid S_{i-1})\cdot\sum_{\phi\in\Omega_j(\phi_i)}p(\phi)\\
	&=\sum_{\phi_i}\Gamma_\phi(v\mid S_{i-1})\cdot p(\phi_i)\\
	&=\Delta (v\mid S_{i-1}). 
	\end{align*}
	
	Finally, we prove Equation~\eqref{eqn:point-sub}. For any $\phi\in \Omega_i$, we have
	\begin{equation*}
	\Gamma_\phi(v\mid S_{j-1})\geq \Gamma_\phi(v\mid S_{i-1}).
	\end{equation*}
	Taking the expectation over $\Phi\sim \Omega_i$ completes the proof.
\end{proof}

\begin{proof}[Proof of Theorem~\ref{thm:mrr-relative-random}]
	We prove the elementary version of Equation~\eqref{eqn:mrr-relative-random}, \ie~for any given realization $\phi$,
	\begin{equation*}
		(\ratio){\Gamma}_\phi(S)\leq \E[\tilde{\Gamma}_\phi(S)] \leq {\Gamma}_\phi(S),
	\end{equation*}
	where the expectation is only taken over the randomness of root size $K$.
	
	Let $x=I_\phi(S)$ denote the number of nodes influenced by $S$ under $\phi$. Let $p(x)$ be the probability that none of the $k$ nodes sampled can be influenced by $S$, which is given by
	\begin{equation*}
	p(x):=\Pr[\tilde{\Gamma}_\phi(S)=0]={\tbinom{n-x}{k}}/{\tbinom{n}{k}}=\prod_{i=0}^{k-1}\frac{n-x-i}{n-i}.
	\end{equation*}
	Then, by the definition of $\tilde{\Gamma}_\phi(S)$, with probability $p(x)$, $\tilde{\Gamma}_\phi(S)=0$; and with probability $1-p(x)$, $\tilde{\Gamma}_\phi(S)=\eta$. As a consequence, we have
	\begin{equation}
	\E[\tilde{\Gamma}_\phi(S)] = \eta\big(1- \E[p(x)]\big),
	\end{equation}
	where the expectation on the right hand side is taken with respect to the randomness of $k$. Let $f(x)$ be the ratio of $\E[\tilde{\Gamma}_\phi(S)]$ to ${\Gamma}_\phi(S)$, which is given by
	\begin{equation*}
	f(x):=\frac{\E[\tilde{\Gamma}_\phi(S)]}{{\Gamma}_\phi(S)}=\frac{\eta\big(1- \E[p(x)]\big)}{\min\{x,\eta\}}.
	\end{equation*}
	Now, we need to prove that $\ratio\leq f(x)\leq 1$. We consider the following two scenarios: (i) $x\geq \eta$, and (ii) $x < \eta$.
	
	\underline{(i) $x\geq \eta$:} In this case, $f(x)=1- \E[p(x)]\leq 1$. Meanwhile,
	\allowdisplaybreaks[4]
	\begin{align*}
	\E[p(x)]
	&=r\prod_{i=0}^{k^\bot}\frac{n-x-i}{n-i}+(1-r)\prod_{i=0}^{k^\bot-1}\frac{n-x-i}{n-i}\\
	&=\Big(\frac{r(n-x-k^\bot)}{n-k^\bot}+(1-r)\Big)\prod_{i=0}^{k^\bot-1}\frac{n-x-i}{n-i}\\
	&=\Big(1-\frac{rx}{n-k^\bot}\Big)\prod_{i=0}^{k^\bot-1}\frac{n-x-i}{n-i}.
	\end{align*}
	As $1-y\leq \e^{-y}$ for any $0\leq y\leq 1$, in the above equation,
	\begin{equation*}
	    r.h.s.\leq \e^{-{rx}/{(n-k^\bot)}}\prod_{i=0}^{k^\bot-1} \e^{-x/(n-i)}\leq \e^{-({rx}/{n}+k^\bot x/n)}=\e^{-x/\eta}.
	\end{equation*}
	As $x\geq \eta$ by assumption, this implies that $f(x)\geq \ratio$.
	
	\underline{(ii) $x < \eta$:} In this case, $f(x)=\eta\big(1- \E[p(x)]\big)/x$. Take the derivative, 
	\begin{equation*}
	f'(x)
	=\frac{\eta}{x^2}\big(\E[p(x)]-1-x\E[p'(x)]\big).
	\end{equation*}
	Let $g(x)=p(x)-1-xp'(x)$. Take the derivative, when $x> 0$,
	\begin{equation*}
	g'(x)=p'(x)-p'(x)-xp''(x)=-xp''(x).
	\end{equation*}
	According to the definition of $p(x)$, we can get that
	\begin{equation*}
	    p''(x)=\sum_{i=0}^{k-1}\sum_{j=0,j\neq i}^{k-1}\frac{p(x)}{(n-x-i)(n-x-j)}\geq 0.
	\end{equation*}
	Thus, $g(x)$ decreases with $x$, which indicates that $g(x)\leq g(0)=0$. This implies that $f'(x)\leq 0$. As a consequence,
	\begin{align*}
	&f(x)\leq f(1)=\eta(1-\E[p(1)])=\eta(1-\E[{(n-k)}/{n}])=1,\\
	&f(x)\geq f(\eta)=1-\E[p(\eta)]\geq \ratio,
	\end{align*}
	where the last step above follows from the analysis for the case $x \ge \eta$, by considering the special case $x = \eta$.

	Hence, the theorem is proved.
\end{proof}

\begin{proof}[Proof of Corollary~\ref{corollary:mrr-relative-random-general}]
	Equation~\eqref{eqn:mrr-relative-random-general} follows directly from  Theorem~\ref{thm:mrr-relative-random}. By Equation~\eqref{eqn:mrr-relative-random-general},
	\begin{align*}
		&\E[{\Gamma}(S\mid S_{i-1})]\geq \E[\tilde{\Gamma}(S\mid S_{i-1})],\\
		&\frac{1}{\E[{\Gamma}(S^\prime\mid S_{i-1})]}\geq \frac{\ratio}{\E[\tilde{\Gamma}(S'\mid S_{i-1})]}.
	\end{align*}
	Hence, Equation~\eqref{eqn:mrr-relative-ratio} holds.
\end{proof}

\begin{proof}[Proof of Lemma~\ref{lem:NP-appro}]
	We consider the special case of the adaptive seed minimization problem in which the probability $p(e)=1$ for each edge $e \in E$. In this case, for any node $v \in V$, the set of nodes influenced by $v$ is the set of nodes that can be reached by $v$ in $G$, denoting as the \textit{cover set} $S_v$. Thus, for each node $v \in V$, its cover set $S_v$ is deterministic. As a consequence, the adaptive seed minimization problem reduces to a \textit{set cover} problem, \ie~aiming to find as few nodes as possible to cover at least $\eta$ nodes. 
	Feige~\cite{Feige_setcover_1998} has shown that no polynomial time algorithm can approximate the optimal solution of set cover within a ratio of $(1-\varepsilon)\ln \eta$ for any $\varepsilon >0$ unless $\NP\subseteq \DTIME(n^{O(\log \log n)})$. Hence, lemma~\ref{lem:NP-appro} holds on noting that ASM generalizes set cover. 
\end{proof}

\begin{proof}[Proof of Lemma~\ref{lem:thrim-alpha}]
	Let $\mathcal{E}$ be the following event:
	\begin{equation*}
	\mathcal{E}(v^\ast)\colon \E[\tilde{\Gamma}(v^\ast\mid S_{i-1})]\geq(1-\hat{\varepsilon})\E[\tilde{\Gamma}(v^\circ\mid S_{i-1})].
	\end{equation*}
	Note that $v^\ast$ is the seed node returned by the policy which is a random variable. Let $U_t$ be the set of possible seed nodes selected (but not necessarily returned) by \OPIMF in the $t$-th iteration in which each node $u\in U_t$ has a probability $\Pr[u]$ such that $\sum_{u\in U_t}\Pr[u]=1$, where $1\leq t\leq T$. Let $U^\ast_t$ denote the set of random seed nodes returned at the $t$-th iteration of \OPIMF, where $U^\ast_t\subseteq U_t$. Therefore, the event $\mathcal{E}(v^\ast)$ does not happen only if there exists a node $v^\ast_t\in U^\ast_t$ at iteration $t\in[1, T]$ satisfying that $\mathcal{E}(v^\ast_t)$ does not happen.
	
	If \OPIMF stops at the iteration $t=T$, according to the setting of $\theta_{\max}$ and by \cite{Tang_IMM_2015}, we have
	\begin{equation}\label{eqn:prob-last}
	\Pr[(t=T)\wedge \neg \mathcal{E}(v^\ast_t)]\leq \delta/3.
	\end{equation}
	If \OPIMF stops at the iteration $t<T$, for any node $v\in V_i$, we define two events $\mathcal{E}_1(v)$ and $\mathcal{E}_2(v)$ as
	\begin{align*}
	&\mathcal{E}_1(v)\colon\E[\Lambda_\R(v)]\geq\big(\sqrt{\Lambda_\R(v)+{2a_1}/{9}}-\sqrt{{a_1}/{2}} \big)^2- {a_1}/{18},\\
	&\mathcal{E}_2(v)\colon\E[\Lambda_\R(v)]\leq\big(\sqrt{\Lambda_\R(v)+{a_2}/{2}}+\sqrt{{a_2}/{2}} \big)^2.
	\end{align*}
	where $\E[\Lambda_\R(v)]={\abs{\R}}\cdot\E[\tilde{\Gamma}(v\mid S_{i-1})]/{\eta_i}$ is the expected coverage of $v$ in $\R$.
	Then, if $v$ is independent of $\R$, by Lemma~\ref{lemma:concentration-ept}, we have
	\begin{align}
	&\Pr\big[\neg \mathcal{E}_1(v)\big]\leq\frac{\delta}{3Tn_i},\label{eqn:ept-tis-lower}\\
	&\Pr\big[\neg \mathcal{E}_2(v)\big]\leq\frac{\delta}{3T}.\label{eqn:ept-tis-upper}
	\end{align}
	By a union bound that ensures all the $n_i$ nodes satisfying Equation~\eqref{eqn:ept-tis-lower}, we have 
	\begin{align*}
	\Pr\big[\neg \mathcal{E}_1(v^\ast_t)\big]
	&=\sum_{u\in U^\ast_t}\Pr[\neg \mathcal{E}_1(u)]\cdot\Pr[u]\\
	&\leq\sum_{u\in U_t}\Pr[\neg \mathcal{E}_1(u)]\cdot\Pr[u]\\
	&\leq \sum_{u\in U_t}{\delta}/{(3T)}\cdot \Pr[u]\\
	&={\delta}/{(3T)}
	\end{align*}
	Meanwhile, $v^\circ$ is independent of $\R$ naturally. Thus, together with the fact that $\Lambda_\R(v^\circ)\leq \Lambda_\R(v^\ast)$, by Equation~\eqref{eqn:ept-tis-upper}
	\begin{equation*}
	\Pr\big[\E[\Lambda_\R(v^\circ)]>\Lambda^u(v^\circ)\big]\leq\Pr\big[\neg \mathcal{E}_2(v^\circ)\big]\leq{\delta}/{(3T)}.
	\end{equation*}
	As a consequence, when \OPIMF stops at ${\Lambda^l(v^\ast_t)}/{\Lambda^u(v^\circ)}\geq 1-\hat{\varepsilon}$, if the event $\mathcal{E}(v^\ast_t)$ does not happen, then at least one of the events $\mathcal{E}_1(v^\ast_t)$ and $\mathcal{E}_2(v^\circ)$ does not happen. Thus, the event $\mathcal{E}(v^\ast_t)$ does not happen for all $t<T$ with probability at most:
	\begin{equation}\label{eqn:prob-before}
	\Pr\bigg[\!\bigvee_{t=1}^{T-1}\neg\mathcal{E}(v^\ast_t)\bigg]\leq (T-1)(\frac{\delta}{3T}+\frac{\delta}{3T})\leq \frac{2\delta}{3}.
	\end{equation}
	Combining Equations~\eqref{eqn:prob-last} and \eqref{eqn:prob-before} shows that the event $\mathcal{E}(v^\ast)$ holds with probability at least $1-\delta$. Thus, together with the Equation~\eqref{eqn:mrr-relative-ratio} in Corollary~\ref{corollary:mrr-relative-random-general}, we have
	\begin{align*}
	\E\Big[\frac{1}{\Delta(v^\ast\mid S_{i-1})}\Big]
	&\leq \Big(\frac{1-\delta}{(\ratio)(1-\hat{\varepsilon})}+\delta\cdot\eta_i\Big)\cdot\frac{1}{\Delta(v^\circ\mid S_{i-1})}\\
	&\leq \frac{1}{(\ratio)(1-\varepsilon)}\cdot\frac{1}{\Delta (v^\circ\mid S_{i-1})}.
	\end{align*}
	Hence, the lemma is proved.
\end{proof}

\begin{proof}[Proof of Lemma~\ref{lemma:time-ept}]
	For any node $v$, $v$ is not visited by a random \RR-set $R$ if and only if $v\notin R$. The probability for not visiting $v$ under a realization $\phi$ is $p(x_{v})$, where $x_{v}=I_\phi(v\mid S_{i-1})$ is the number of nodes that can be activated by $v$ in $G_i$ under $\phi$. On the other hand, if a node is visited, all of its incoming edges will be examined. Let $\indeg_v$ denote the number of $v$'s incoming edges. Then, the expected time complexity for generating a random \RR-set is
	\begin{equation}\label{eqn:def-time}
	\sum_{v\in V}\indeg_v\cdot\E[1-p(x_{v})],
	\end{equation}
	where the expectation is over the randomness of both $k$ and $\Phi$. In addition, we already know that
	\begin{equation}\label{eqn:relation-optt}
	\E[\eta_i(1-p(x_{v}))]=\E[\tilde{\Gamma}(v\mid S_{i-1})]\leq \OPTT_i,
	\end{equation}
	Combining \eqref{eqn:def-time} and \eqref{eqn:relation-optt} gives 
	\begin{equation*}
	\sum_{v\in V}{\indeg_v}\cdot\E[1-p(x_{v})]\leq \sum_{v\in V}\indeg_v\frac{\OPTT_i}{\eta_i}=\frac{\OPTT_i}{\eta_i}m_i.
	\end{equation*}
	Hence, the lemma is proved.
\end{proof}

\report{
\begin{proof}[Proof of Lemma~\ref{lemma:sample-ept}]
	Let $\varepsilon_1=\hat{\varepsilon}/{2}$ and ${\varepsilon}_2$ be the root of
	\begin{equation*}
	{\varepsilon}_2=\sqrt{\frac{a(2+2{\varepsilon}_2/3)}{\E[\Lambda_{\R}(v^\ast)]}},
	\end{equation*}
	where $a=c\ln({4n_iT}/{\delta})$ for any $c\geq 1$ and $\delta=1/n_i$. Let
	\begin{equation*}
	\theta^\ast:=\frac{12\eta_i\ln({4n_iT}/{\delta})}{(1-\hat{\varepsilon})\hat{\varepsilon}^{2}\OPTT_i}.
	\end{equation*}
	As $\hat{\varepsilon}=O(\varepsilon)$, one can verify that $\theta^\ast=O\big(\frac{\eta_i\ln n_i}{\varepsilon^2\OPTT_i}\big)$.\footnote{Without loss of generality, we assume $\hat{\varepsilon}\leq 0.5$. If $\hat{\varepsilon}>0.5$, \OPIMF achieves a higher approximation of $0.5$ with $O\big(\frac{\eta_i\ln n_i}{\OPTT_i}\big)$ \RR-sets.} Define the events $\mathcal{E}_1,\mathcal{E}_2,\mathcal{E}_3,\mathcal{E}_4$ as follows:
	\begin{align*}
	&\mathcal{E}_1=\big\{\Lambda_{\R}(v^\diamond)\geq(1-\varepsilon_1)\E[\Lambda_{\R}(v^\diamond)]\big\},\\
	&\mathcal{E}_2=\big\{\Lambda_{\R}(v^\ast)\leq\E[\Lambda_{\R}(v^\ast)]+\varepsilon_1\E[\Lambda_{\R}(v^\diamond)]\big\},\\
	&\mathcal{E}_3=\big\{\Lambda_{\R}(v^\ast)\geq(1-{\varepsilon}_2)\cdot\E[\Lambda_{\R}(v^\ast)]\big\},\\
	&\mathcal{E}_4=\big\{\Lambda_{\R}(v^\ast)\leq(1+{\varepsilon}_2)\cdot\E[\Lambda_{\R}(v^\ast)]\big\}.
	\end{align*}
	
	Then, when a number of $\abs{\R}=c\theta^\ast$ \RR-sets are generated, by Lemma~\ref{lemma:concentration-additive}, it is easy to verify that any event $\mathcal{E}_j$ ($1\leq j\leq 4$) does not happen with probability at most
	\begin{equation*}
	\Pr[\neg\mathcal{E}_j]\leq\big({\delta}/{4}\big)^c.
	\end{equation*}
	By the union bound, the probability that all the events $\mathcal{E}_1,\mathcal{E}_2,\mathcal{E}_3,\mathcal{E}_4$ happen is at least $1-\delta^c$. 
	
	If the events $\mathcal{E}_1,\mathcal{E}_2$ happen,
	\begin{align*}
	\E[\Lambda_{\R}({S}^\ast)]
	&\geq\Lambda_{\R}(v^\ast)-{\varepsilon}_1\cdot\E[\Lambda_{\R}(v^\diamond)]\\
	&\geq\Lambda_{\R}(v^\diamond)-{\varepsilon}_1\cdot\E[\Lambda_{\R}(v^\diamond)]\\
	&\geq(1-\varepsilon_1)\cdot\E[\Lambda_{\R}(v^\diamond)]-{\varepsilon}_1\cdot\E[\Lambda_{\R}(v^\diamond)]\\
	&=(1-\hat{\varepsilon})\cdot\E[\Lambda_{\R}(v^\diamond)],\\
	&=(1-\hat{\varepsilon})\cdot\frac{\OPTT_i}{\eta_i}\abs{\R}.
	\end{align*}
	Thus, we have 
	\begin{equation}\label{eqn:eps2}
		{\varepsilon}_2=\sqrt{\frac{a(2+2{\varepsilon}_2/3)}{\E[\Lambda_{\R}(v^\ast)]}}\leq \sqrt{\frac{(2+2{\varepsilon}_2/3)\hat{\varepsilon}^{2}}{12}}<\hat{\varepsilon}/2.
	\end{equation}
	
	In addition, let 
	\begin{equation*}
		\Lambda_l:=(1+{\varepsilon}_2)\cdot\E[\Lambda_{\R}(v^\ast)].
	\end{equation*}
	According to the definition of $\varepsilon_2$, we have
	\begin{equation*}
		\E[\Lambda_{\R}(v^\ast)]=\big(\sqrt{\Lambda_l+{2a}/{9}}-\sqrt{{a}/{2}} \big)^2- {a}/{18}.
	\end{equation*}
	Since $a_1\leq a$, if event $\mathcal{E}_4$ happens (\ie~$\Lambda_\R(v^\ast)\leq \Lambda_l$), then
	\begin{align*}
		&\Lambda^l(v^\ast)-\E[\Lambda_{\R}(v^\ast)]\\
		&\geq \big(\sqrt{\Lambda_\R(v^\ast)+{2a}/{9}}-\sqrt{{a}/{2}} \big)^2- {a}/{18}- \E[\Lambda_{\R}(v^\ast)]\\
		&\geq \Lambda_\R(v^\ast)-\Lambda_l.
	\end{align*}
	As a consequence, if event $\mathcal{E}_3$ also happens, we have
	\begin{equation}\label{eq:cond2}
	\Lambda^l(v^\ast)\geq\Lambda_{\R}(v^\ast)-{\varepsilon}_2\E[\Lambda_{\R}(v^\ast)]\geq\frac{1-2\varepsilon_2}{1-\varepsilon_2}\Lambda_{\R}(v^\ast).
	\end{equation}
	Similarly, let 
	\begin{equation*}
	\Lambda_u:=(1-{\varepsilon}_2)\cdot\E[\Lambda_{\R}(v^\ast)].
	\end{equation*}
	According to the definition of $\varepsilon_2$, we have
	\begin{equation*}
	\E[\Lambda_{\R}(v^\ast)]\geq\big(\sqrt{\Lambda_u+{a}/{2}}+\sqrt{{a}/{2}}\big)^2.
	\end{equation*}
	Since $a_2\leq a$, if event $\mathcal{E}_3$ happens (\ie~$\Lambda_\R(v^\ast)\geq \Lambda_u$), then
	\begin{align*}
		\frac{\Lambda^u(v^\circ)}{\Lambda_u}
		&\leq\frac{\big(\sqrt{\Lambda_\R(v^\ast)+{a}/{2}}+\sqrt{{a}/{2}}\big)^2}{(1-{\varepsilon}_2)\cdot\E[\Lambda_{\R}(v^\ast)]}\\
		&\leq\frac{\Lambda_\R(v^\ast)}{(1-{\varepsilon}_2)\cdot\Lambda_u}.
	\end{align*}
	As a consequence, we have
	\begin{equation}\label{eq:cond3}
	\Lambda^u(v^\circ)\leq\frac{\Lambda_{\R}(v^\ast)}{1-\varepsilon_2}.
	\end{equation}
	Putting it all together of \eqref{eqn:eps2}, \eqref{eq:cond2} and \eqref{eq:cond3}, we have
	\begin{align*}
	\frac{\Lambda^l(v^\ast)}{{\Lambda}^u(v^\circ)}
	&\geq\frac{(1-2\varepsilon_2)\Lambda_{\R}(v^\ast)}{1-\varepsilon_2}\cdot \frac{1-\varepsilon_2}{\Lambda_{\R}(v^\ast)}\geq 1-\hat{\varepsilon}.\label{eq:lowerbound2}
	\end{align*}
	Therefore, when a number of $c\theta^\ast$ \RR-sets are generated, \OPIMF does not stop only if at least one of the events in $\mathcal{E}_1,\mathcal{E}_2,\mathcal{E}_3,\mathcal{E}_4$ does not happen, with probability at most $\delta^c$.
	
	Let $t$ be the first iteration that the number of \RR-sets generated by \OPIMF reaches $\theta^\ast$ such that $2^{t-2}\cdot\theta_\circ<\theta^\ast$ and $2^{t-1}\cdot\theta_\circ\geq\theta^\ast$. From this iteration onward, the expected number of \RR-sets further generated is at most
	\begin{align*}
	\sum_{z\geq t}\theta_\circ\cdot 2^{z-1}\cdot \delta^{2^{z-t}}
	&=2^{t-1}\cdot\theta_\circ\sum_{z=0}2^{z}\cdot \delta^{2^z}\\
	&\leq 2\theta^\ast\sum_{z=0}2^{-2^z+z}\\
	&\leq 2\theta^\ast\sum_{z=0}2^{-z}\\
	&\leq 4\theta^\ast.
	\end{align*}
	The first inequality is due to $2^{t-1}\cdot\theta_\circ<2\theta^\ast$ and $\delta\leq 1/2$, and the second inequality is due to $-2^z+z\leq -z$. If the algorithm stops before the $t$-th iteration, there are at most $\theta^\ast$ random samples generated. Therefore, the expected number of random samples generated is less than $5\theta^\ast$, which is $O\big(\frac{\eta_i\ln n_i}{\varepsilon^2\OPTT_i}\big)$.
	
	Hence, the lemma is proved.
\end{proof}
}

\begin{proof}[Proof of Lemma~\ref{lemma:main-lem}]
	The time complexity of \OPIMF is determined by that for generating \RR-sets. By Wald's equation \cite{Wald_equation_1947}, the expected total time used for generating \RR-sets equals  the expected number of \RR-sets generated, times the expected time used for generating one \RR-set. Thus, according to Lemmas~\ref{lemma:time-ept} and~\ref{lemma:sample-ept}, the expected time complexity of \OPIMF is $O(\frac{m_i+n_i}{\varepsilon^2}\ln{n_i})$.
\end{proof}

\begin{figure*}[!t]
	\centering
	\includegraphics[height=9.5pt]{lgd}\vspace{-0.18in}\\
	\subfloat[NetHEPT]{\includegraphics[width=0.23\linewidth]{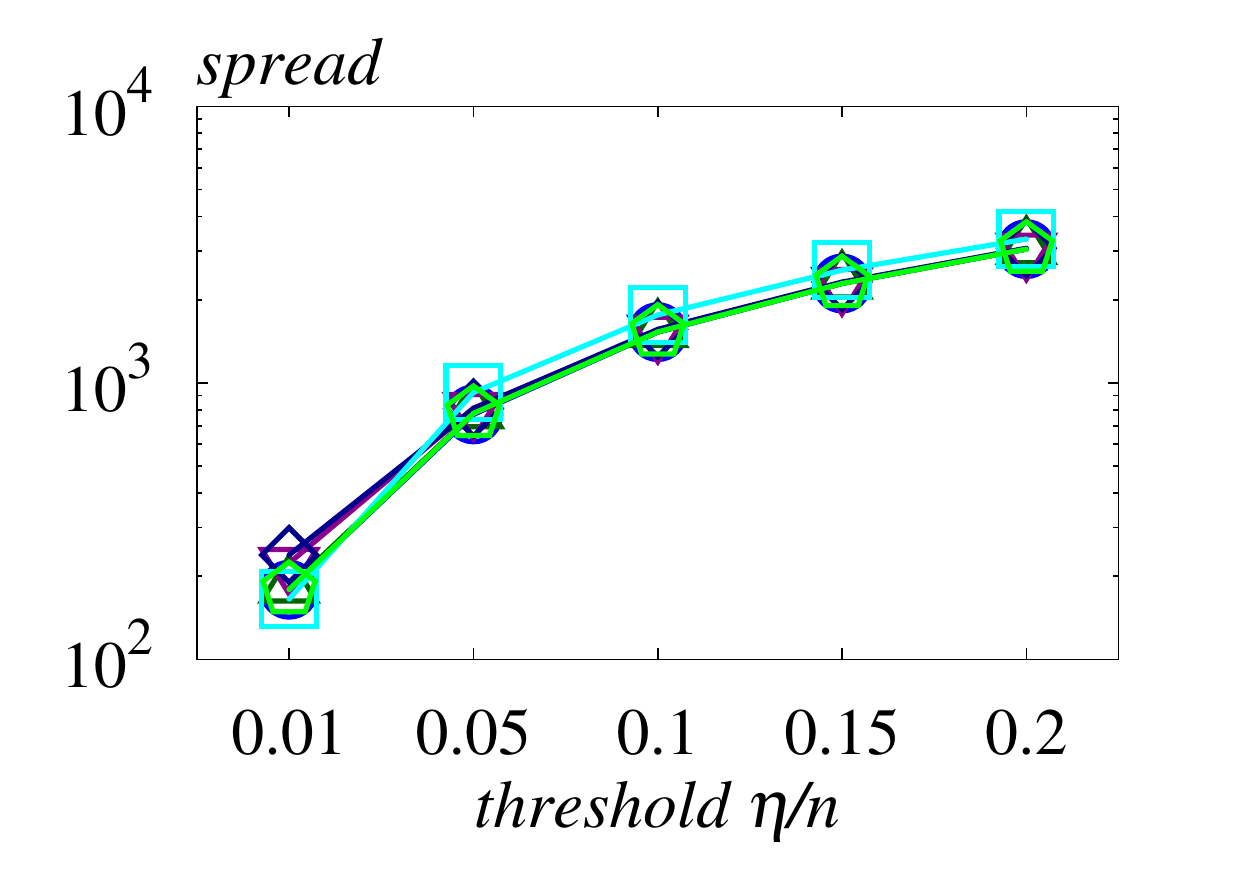}\label{subfig:NetHEPT-spread-ic}}\hfill
	\subfloat[Epinions]{\includegraphics[width=0.23\linewidth]{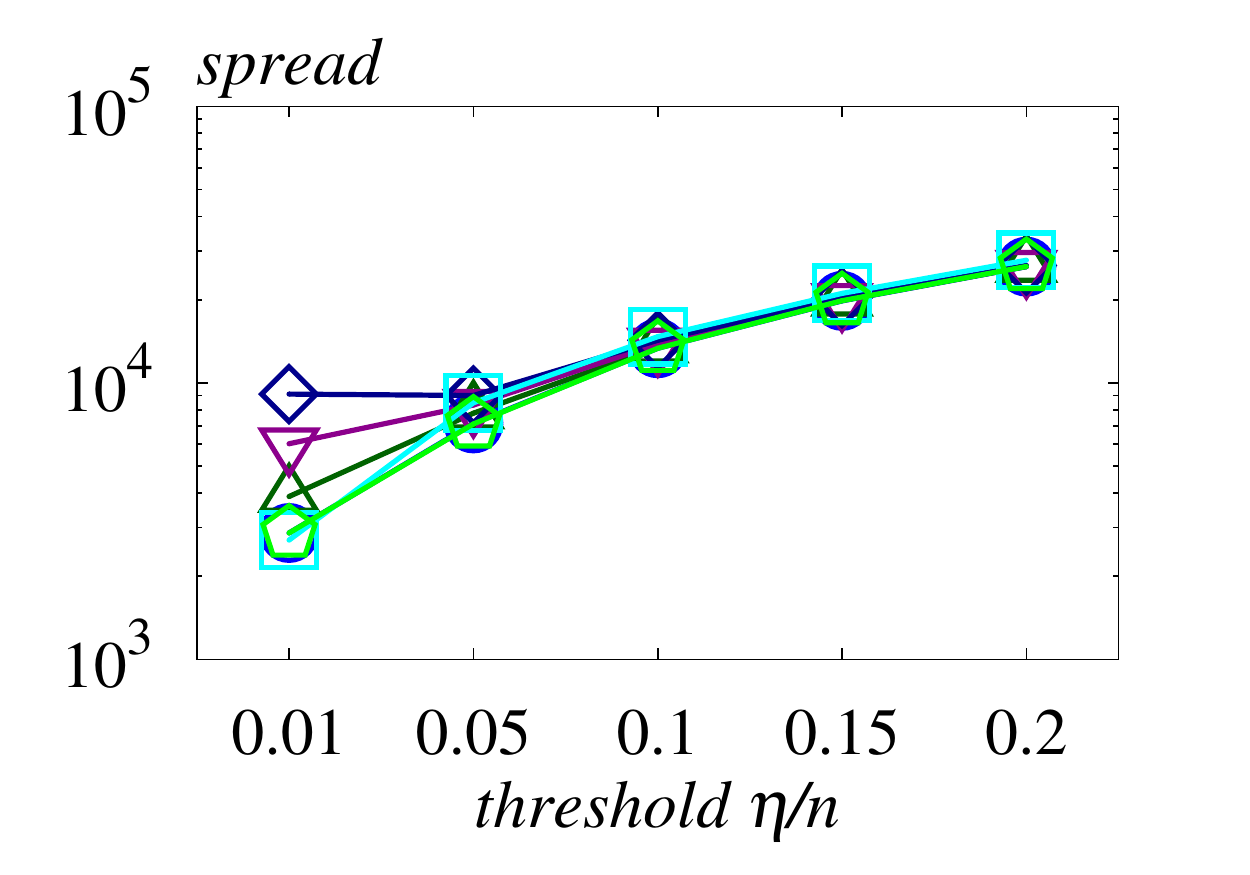}\label{subfig:Epinions-spread-ic}}\hfill
	\subfloat[Youtube]{\includegraphics[width=0.23\linewidth]{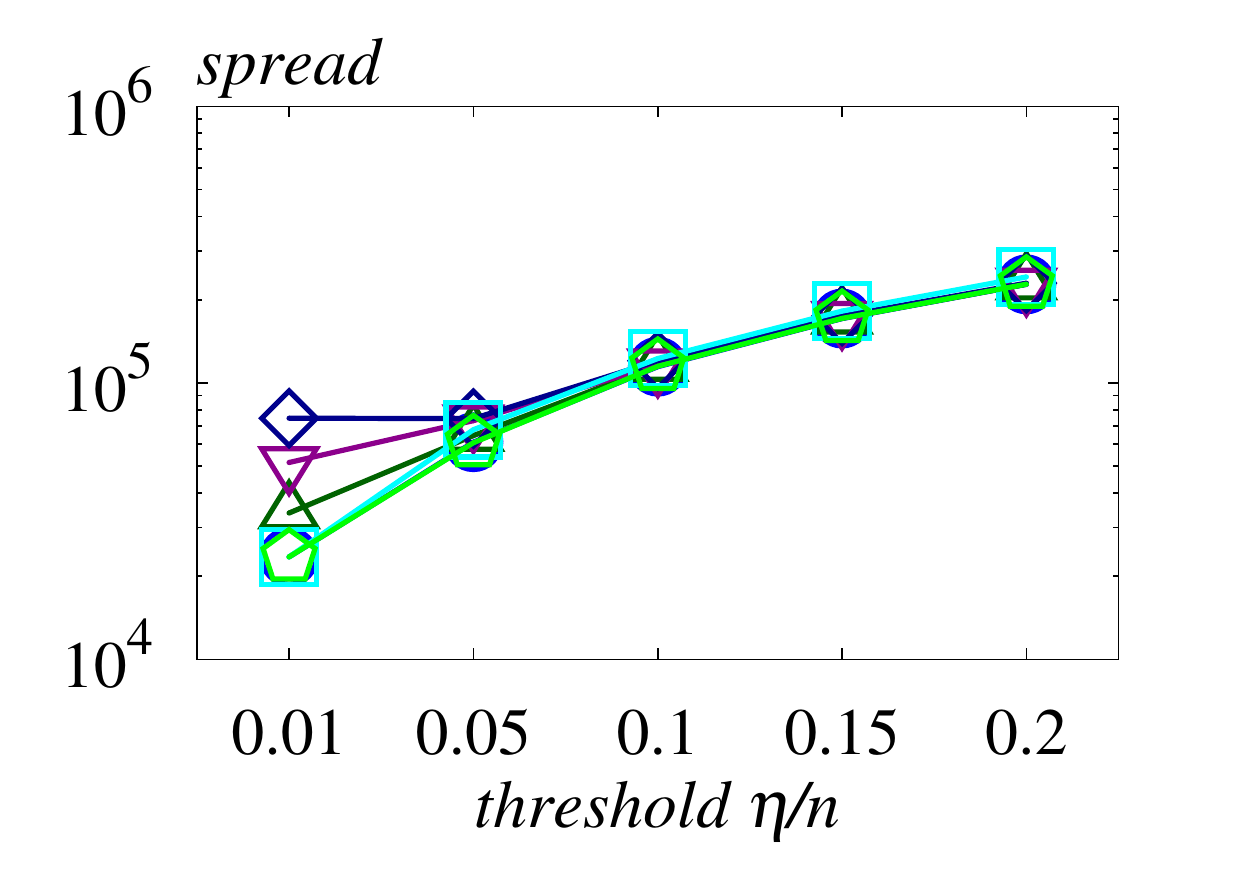}\label{subfig:Youtube-spread-ic}}\hfill
	\subfloat[LiveJournal]{\includegraphics[width=0.23\linewidth]{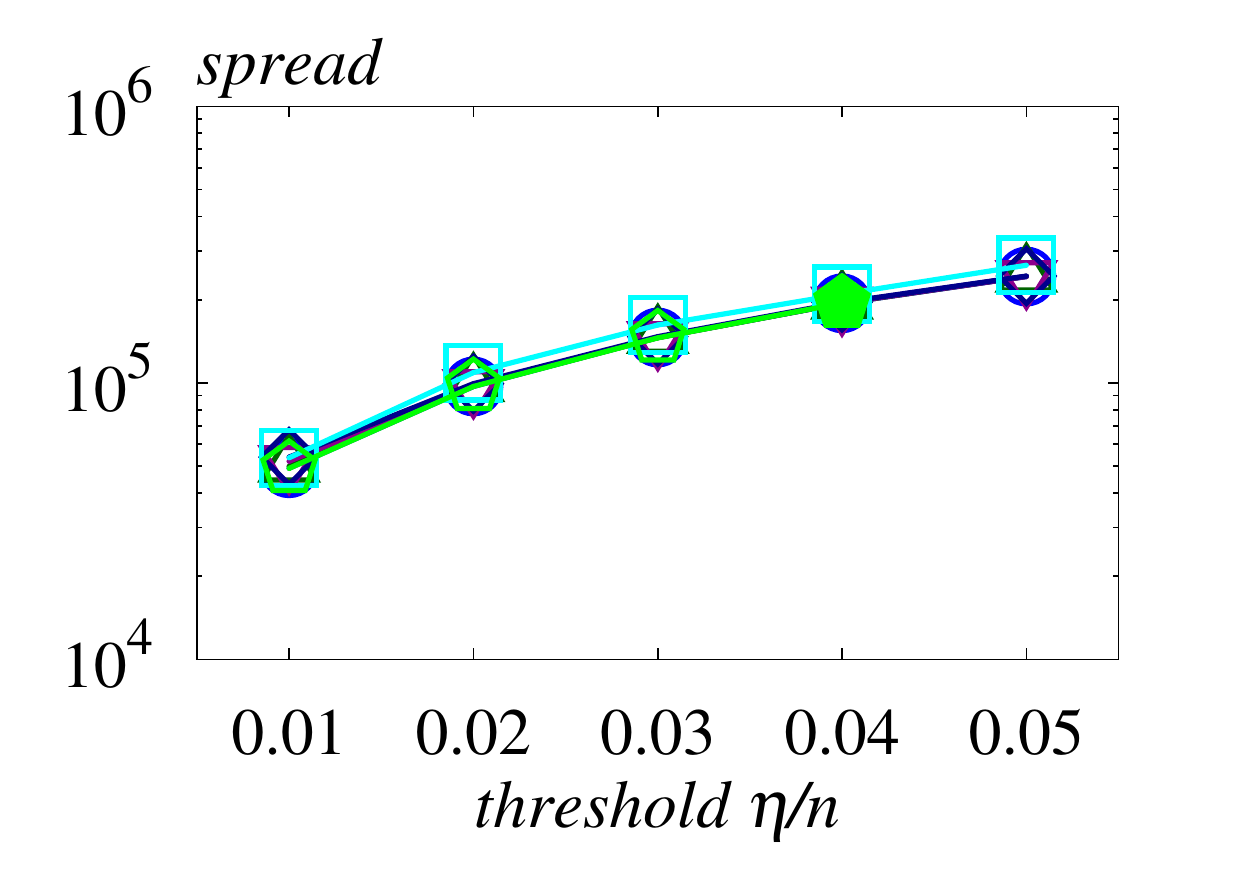}\label{subfig:LiveJournal-spread-ic}}
	\caption{Spread vs. threshold under the IC model.}\label{fig:spread-ic}
	\vspace{-0.9mm}
\end{figure*}

\report{
	\begin{figure*}[!t]
		\centering
		\subfloat[NetHEPT]{\includegraphics[width=0.23\linewidth]{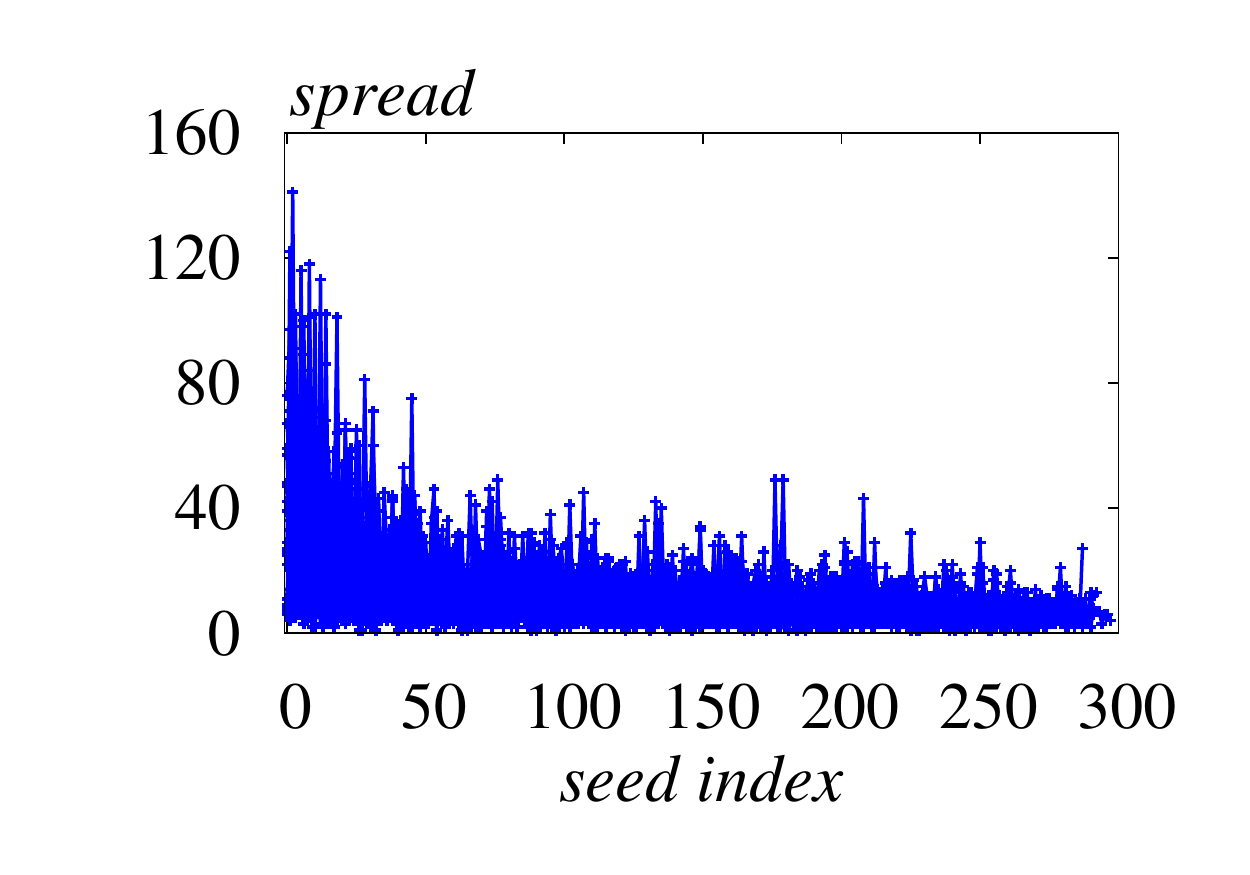}\label{subfig:NetHEPT-ic}}\hfill
		\subfloat[Epinions]{\includegraphics[width=0.23\linewidth]{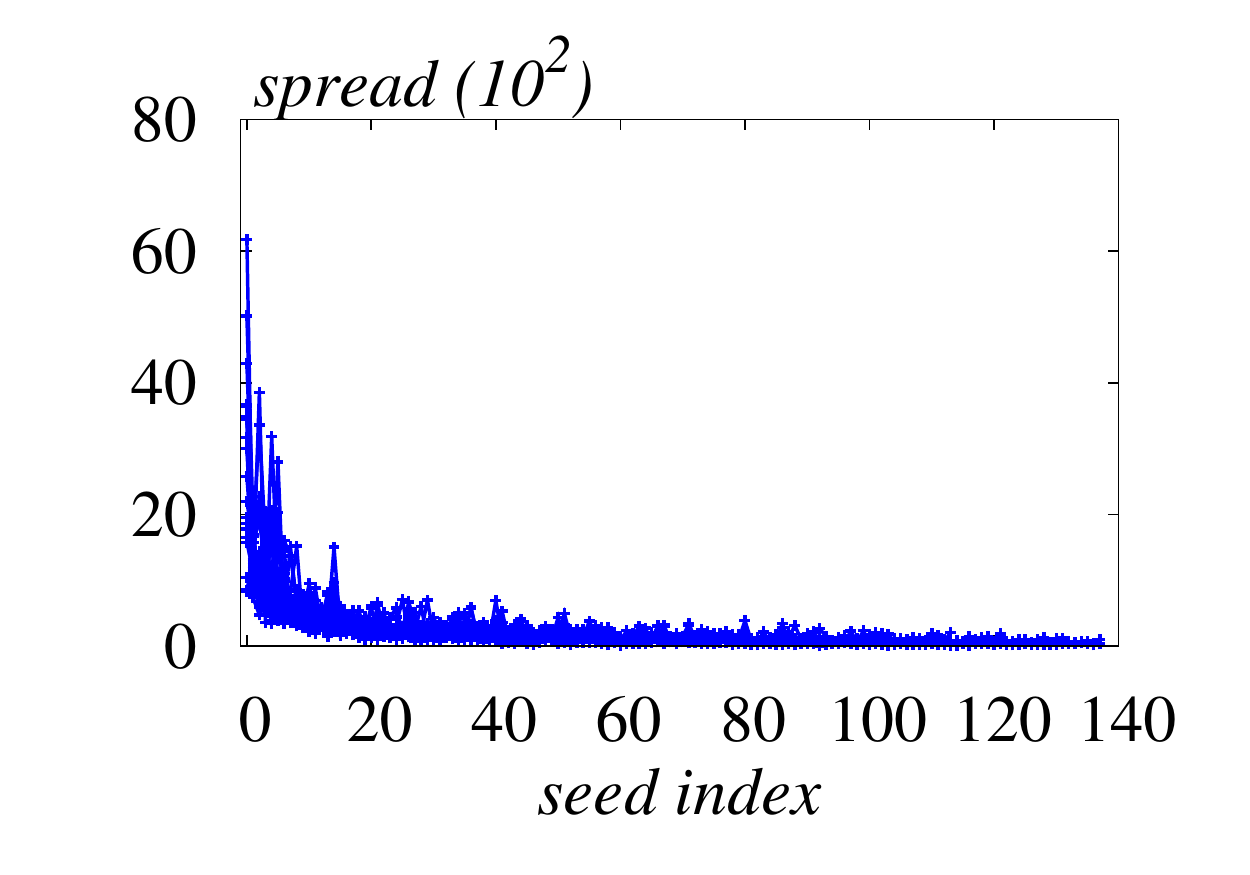}\label{subfig:Epinions-ic}}\hfill
		\subfloat[Youtube]{\includegraphics[width=0.23\linewidth]{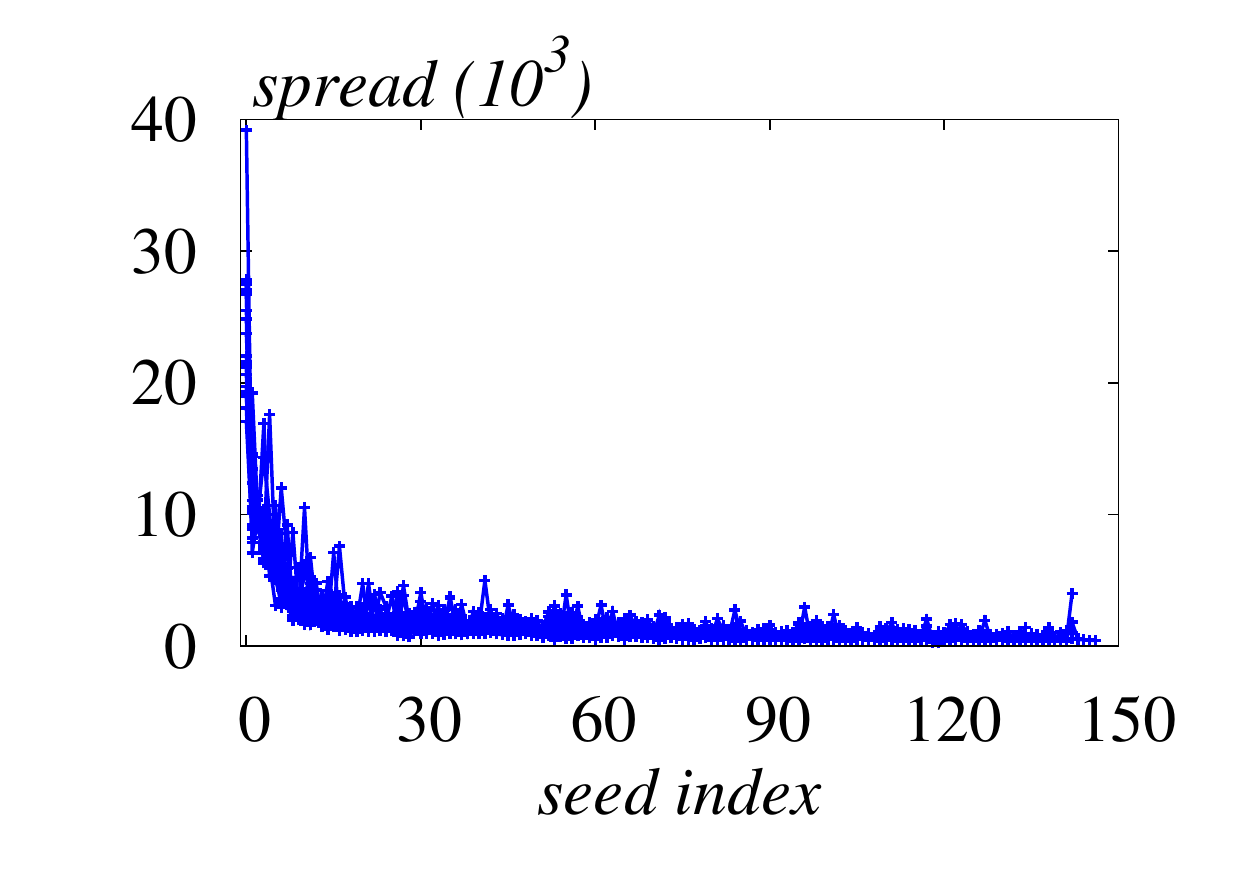}\label{subfig:Youtube-ic}}\hfill
		\subfloat[LiveJournal]{\includegraphics[width=0.23\linewidth]{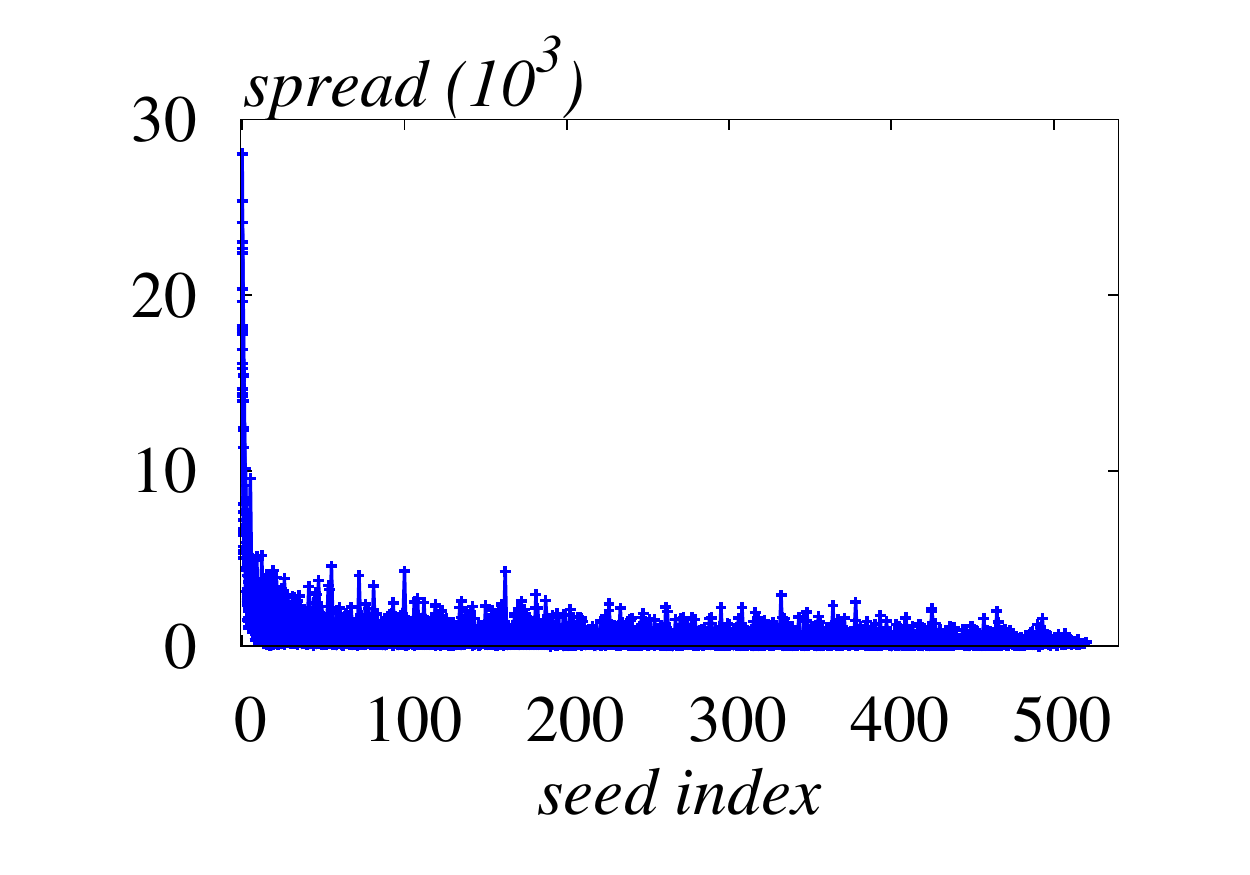}\label{subfig:LiveJournal-ic}}
		\caption{Marginal truncated spread under the IC model.}\label{fig:marginal-ic}
	\end{figure*}
}

\report{
\begin{proof}[Proof of Lemma~\ref{lem:trimb-alpha}]
	Let $S^\ast$ be the seed set returned by the batched policy with $|S^\ast|=b$ and $S^\circ$ be the corresponding optimal seed set in the $i$-th round. Let $\mathcal{E}_b$ be the following event:
	\begin{equation*}
	\mathcal{E}_b(S^\ast)\colon \E[\tilde{\Gamma}(S^\ast \mid S_{i-1})]\geq\rho_b(1-\hat{\varepsilon})\E[\tilde{\Gamma}(S^\circ\mid S_{i-1})].
	\end{equation*}
	Let $S^\ast_t$ be the generalized definition of $v^\ast_t$ in Section~\ref{sec:trim-analysis}. If $S^\ast$ is returned at $T$-th iteration, based on the setting of $T$ and by \cite{Tang_IMM_2015}, we still have 
	\begin{equation}\label{eqn:prob-last-b}
	\Pr[(t=T)\wedge \neg \mathcal{E}_b(S^\ast_t)]\leq \delta/3.
	\end{equation}
	If \OPIMFB stops at the iteration $t<T$, for any node $S\subseteq V_i$ obtained by greedy method with $|S|=b$, we define two events $\mathcal{E}_{b,1}(S)$ and $\mathcal{E}_{b2}(v)$ as
	\begin{align*}
	&\mathcal{E}_{b,1}(S):\E[\Lambda_\R(S)]\geq\big(\sqrt{\Lambda_\R(S)+{2a_1}/{9}}-\sqrt{{a_1}/{2}} \big)^2- {a_1}/{18},\\
	&\mathcal{E}_{b,2}(S):\E[\Lambda_\R(S)]\leq\big(\sqrt{\Lambda_\R(S)/\rho_b+{a_2}/{2}}+\sqrt{{a_2}/{2}} \big)^2.
	\end{align*}
	where $\E[\Lambda_\R(S)]={\abs{\R}}\cdot\E[\tilde{\Gamma}(S\mid S_{i-1})]/{\eta_i}$ is the expected coverage of $S$ in $\R$.
	
	Based on Lemma~\ref{lemma:concentration-ept}, we could have 
	\begin{align}
	\Pr\big[\neg \mathcal{E}_{b,1}(S)\big]\leq\frac{\delta}{3T\binom{n_i}{b}}.
	\end{align}
	Similarly, by union bound for all $\binom{n_i}{b}$ candidates of size-$b$ node set, we could immediately have
	\begin{align}
	\Pr\big[\neg \mathcal{E}_{b,1}(S^\ast_t)\big]\leq {\delta}/{(3T)}.
	\end{align}	
	Let $S^\circ_{\R}$ be the size-$b$ seed set that could cover largest number of \RR-sets in $\R$. Since $S$ is derived by Greedy method from $\R$, by the property of greedy method, we have $\Lambda_\R(S)\ge\rho_b\Lambda_\R(S^\circ_{\R})\ge\rho_b\Lambda_\R(S^\circ).$ Then $\Lambda_\R(S)/\rho_b$ can be taken as the upper bound of $\Lambda_\R(S^\circ)$. Similarly, by Lemma~\ref{lemma:concentration-ept}, we have following equation
	\begin{align}
	\Pr\big[\neg \mathcal{E}_{b,2}(S^\circ)\big]\leq{\delta}/{(3T)}.
	\end{align}
	By following the analysis in Section~\ref{sec:trim-analysis}, we acquire the fact that event $\mathcal{E}_b$ holds with at least $1-\delta$ probability where $\delta=1/n_i$. By Corollary~\ref{corollary:mrr-relative-random-general}, the expected approximation ratio of \OPIMFB is at least \[(1-\delta)\cdot(1-\hat{\varepsilon})\cdot\rho_b\cdot(\ratio)=\rho_b(\ratio)(1-\varepsilon).\] Hence, the lemma is proved.
\end{proof}
}

\section{Discussions on Influence Spread}\label{appendix:influence-spread}
\figurename~\ref{fig:spread-ic} reports the spread of the tested algorithms under the IC model (results under the LT model are similar). For the most parts, all the algorithms achieve a comparable spread on the four datasets. The major differences lie in $\eta/n=0.01$ on {Epinions} and {Youtube}. As observed, \ASME (resp. \TEUC) achieves the largest (resp. smallest) spread among all algorithms. This is because the batch size $b$ is relatively large with regard to the small threshold, owing  to which the spread of the 8-size seed set selected by \ASME significantly overshoots $0.01n$ on  Epinions and Youtube. Another interesting observation is that the spread achieved by \TEUC is slightly larger than each of the other five adaptive algorithms as the threshold  becomes larger (not quite noticeable in the figure). This is because \TEUC selects considerably more seeds than the adaptive algorithms do, resulting in a larger spread at the cost of an excessive number of seeds. This is also supported by the results in Table~\ref{tbl:impro-ratio}.

\report{
\section{Discussions on Marginal Truncated Spread}\label{appendix:marginal-truncated-spread}
To explore the property of the marginal truncated spread, we record the marginal spread of each seed node selected by adaptive algorithms under the $20$ realizations sampled. \figurenames~\ref{fig:marginal-ic} shows the result of each realization with $\eta/n=0.2$ on corresponding datasets (or $\eta/n=0.05$ on the LiveJournal dataset) under the IC model. (The result under the LT model is similar.) In general, the marginal spread diminishes along the index of the seed node, which is consistent with the property of submodularity as expected. Note that the spread fluctuation is due to the randomness of the tested realizations, \ie~in some particular realizations, some seed node selected later may influence more nodes than some seed node selected earlier.
}

\balance
\end{sloppy}

\end{document}